\pgfplotsset{compat=1.15}
\newtheorem{thm}{Theorem}
\newtheorem{prop}[thm]{Proposition}
\newtheorem{lem}[thm]{Lemma}
\newtheorem{coro}[thm]{Corollary} 
\newtheorem*{question*}{Question}
\theoremstyle{definition}
\newtheorem{defi}[thm]{Definition}
\theoremstyle{remark} 
\newtheorem{rem}[thm]{Remark}
\newtheorem*{rem*}{Remark}
\theoremstyle{definition} 
\newtheorem{ex}[thm]{Example}
\newtheorem*{nota*}{Notation}
\newcommand{\K}{{\mathbf K}}
\newcommand{\Q}{\overline{\mathbb Q}}
\newcommand{\C}{\mathbb{C}}
\newcommand{\Z}{\mathbb{Z}}
\newcommand{\puip}{\phi_p}
\newcommand{\I}{\mathcal{I}}
\newcommand{\R}{\mathbb R}
\newcommand{\val}{\operatorname{val}_{z}}
\DeclareMathOperator{\ev}{ev}
\newcommand{\cld}{\operatorname{cld}_{z}}
\newcommand{\supp}{\operatorname{supp}_{z}}
\newcommand{\Hahn}{\mathscr{H}}
\newcommand{\Rspe}{\mathscr{R}}
\newcommand{\Puis}{\mathscr{P}}
\newcommand{\logm}{{\ell}}
\newcommand{\Px}{\mathcal{P}}
\begin{document}
\title[]{Regular Singular Mahler equations and Newton polygons}

\author{Colin Faverjon}
\address{Univ Lyon, Universit\'e Claude Bernard Lyon 1, CNRS UMR 5208, Institut Camille Jordan, F-69622 Villeurbanne, France}
\email{colin.faverjon@math.cnrs.fr}
\author{Marina Poulet}
\address{Institut Fourier, Universit\'e Grenoble Alpes CS 40700, 38058 Grenoble cedex 9, France}
\email{marina.poulet@univ-grenoble-alpes.fr}

\keywords{Mahler equations, regular singularity, algorithm.}
\subjclass[2020]{39A06, 68W30}
\date{\today}

\begin{abstract} 
Though Mahler equations have been introduced nearly one century ago, the study of their solutions is still a fruitful topic for research.  
In particular, the Galois theory of Mahler equations has been the subject of many recent papers. Nevertheless,  long is the way to a complete understanding of relations between solutions of Mahler equations. One step along this way is the study of singularities. Mahler equations with a regular singularity at $0$ have rather ``nice'' solutions: they can be expressed with the help of Puiseux series and solutions of equations with constant coefficients. In a previous paper, the authors described an algorithm to determine whether an equation is regular singular at $0$ or not. Exploiting information from the Frobenius method and Newton polygons, we improve this algorithm by significantly reducing its complexity, by providing some simple criterion for an equation to be regular singular at $0$, and by extending its scope to equations with Puiseux coefficients.
\end{abstract}
\maketitle

\vspace{-1.5cm}
\setcounter{tocdepth}{1}
\setcounter{secnumdepth}{3}
\titlecontents{section}
[0.5em]
{}
{\contentslabel{1.3em}}
{\hspace*{-2.3em}}
{\titlerule*[1pc]{.}\contentspage}
{\addvspace{2em}\bfseries\large}
\titlecontents{subsection}
[3.8em]
{} 
{\contentslabel{2em}}
{\hspace*{-3.2em}}
{\titlerule*[1pc]{.}\contentspage}
\titlecontents{subsubsection}
[6.1em]
{} 
{\contentslabel{2.4em}}
{\hspace*{-4.1em}}
{\titlerule*[1pc]{.}\contentspage}
\renewcommand{\contentsname}{Contents}

 \pdfbookmark[0]{\contentsname}{Contents}
\tableofcontents

\section{Introduction}

Let $\K$ be a field and $p\geq 2$ be an integer. A $p$-Mahler equation is a functional linear equation of the following form
$$
a_m(z)f(z^{p^m})+ \cdots + a_1(z)f(z^p) + a_0(z)f(z) =0
$$
where $a_0,\ldots,a_m$ lie on some extension of $\K[z]$ and $a_0a_m\neq 0$. When $\K=\Q$ and $a_0,\ldots,a_m\in \Q[z]$, power series solutions of $p$-Mahler equations are known as \textit{$M_p$-functions}, or $M$-functions. These functions has been introduced by Mahler \cite{Ma29, Ma30a, Ma30b} to study the algebraic relations between their values at non-zero algebraic points. This subject has been the topic of many papers (see \cite{Ni97} for an account up to 1997 and \cite{Ph15,AF17,AF24a,AF24b} for recent developments). The study of algebraic relations between values of $M$-functions at some non-zero algebraic points -- possibly associated with distinct $p$ or evaluated at distinct algebraic points -- eventually reduces to the study, for each $p$, of the algebraic relations between some $M_p$-functions.
One tool to perform this task is the Galois theory of Mahler equations.
However, despite recent developments, the  Galois theory does not allow to treat general equations of order greater than $3$ (see \cite{DP23}). 
In the shift case, Feng \cite{Feng18} provides an algorithm that applies to equations of arbitrary order, but it is inefficient\footnote{It seems that Feng's algorithm could potentially be adapted to the case of Mahler equations; see the discussion in \cite[p.\,1226]{AF24b}.}. 
As the Galois group on an equation acts on the solutions, the understanding of the nature of these solutions in a neighborhood of $0$ can be of some help in computing it. In general, the form of the solutions of Mahler equations may be rather complicated and involve Hahn series with intricate supports. It becomes simpler when the equation is \textit{regular singular} at $0$, for it only involves Puiseux series and solutions of equations with constant coefficients. 

In light of the above, it becomes important to provide simple criterion for an equation to be regular singular at $0$. This is what is achieved in this paper, in a general setting.
In \cite{FP} the authors provided a first algorithm to determine whether a Mahler equation is regular singular at $0$ or not\footnote{Actually, this algorithm is written for Mahler systems, but it can be used for Mahler equations by considering  the associated companion matrix.}. However, the algorithm was not really informative about the properties of regular singular Mahler equations. In particular, it did not provide any simple criterion for an equation to be regular singular at $0$. 
In this paper, we provide a new algorithm\footnote{We implemented the algorithm in Python 3, it is available at the following URL address:\\ \url{https://faverjon.perso.math.cnrs.fr/AlgoRS_Newton.py}.} which improves the results of \cite{FP} in three directions: it runs faster, it is much more informative about the nature of regular singular Mahler equations and it does not restrict to equations with polynomial coefficients. 

\subsection*{Main results}
Let $\K$ be an algebraically closed field of characteristic $0$.
Let $\Puis$ be the field of Puiseux series, with coefficients in $\K$, that is,
$$
\Puis := \bigcup_{d =1}^\infty \K\left(\left(z^{1/d}\right)\right).
$$ 
Let $p\geq 2$ be an integer. We consider the operator
$$\begin{array}{rccl}
\puip :  & \Puis & \rightarrow & \Puis \\
         & f(z) & \mapsto & f\left(z^p\right)\, ,
\end{array}$$
which acts trivially on $\K$.
A \emph{$p$-Mahler equation} of order $m\in\mathbb{N}^\star$ over $\Puis$, or, for short, a \emph{Mahler equation} is a linear functional equation of the form
\begin{equation}\label{eq:Mahler_at_0}
a_m(z)f(z^{p^m})+ \cdots + a_1(z)f(z^p) + a_0(z)f(z) =0\,,
\end{equation}
with $a_0(z),\ldots,a_m(z) \in \Puis$ and $a_0(z)a_m(z) \neq 0$. To each $p$-Mahler equation we can associate an operator
\begin{equation}\label{eq:form_operateur} 
L := a_m\puip^m + \ldots + a_1\puip + a_0 \in \Puis\langle \puip \rangle\,.
\end{equation}
Then, \eqref{eq:Mahler_at_0} may be rewritten as $Lf=0$. 
A \textit{$p$-Mahler system} over $\Puis$ is a system of linear equations of the form
$$
\puip(Y)=AY
$$
with $A \in {\rm GL}_m(\Puis)$. To any $p$-Mahler equation we can associate a $p$-Mahler system by considering the companion matrix
\begin{equation}\label{eq:companion_matrix}
A_L:=\begin{pmatrix}
 0   &1 &&0
    \\\vdots&\ddots&\ddots 
    \\0&\cdots&0&1
    \\ -\frac{a_0}{a_m} & \cdots & \cdots &-\frac{a_{m-1}}{a_m}
\end{pmatrix}\,.
\end{equation}
Conversely, to any system we can associate an equation, using the Cyclic Vector Lemma (see \cite[Section 3.1]{FP} for more details).

\begin{defi} Let $L \in \Puis\langle \puip \rangle$. The $p$-Mahler equation $Lf=0$ is \textit{regular singular at $0$} if there exists a matrix $P \in {\rm GL}_m(\Puis)$ such that $\puip(P)^{-1}A_LP \in {\rm GL}_m(\K)$. 
\end{defi}

\begin{thm}\label{thm:algo_complexity}
   Let $\K$ be a computable algebraically closed field of characteristic $0$.
  Algorithm \ref{Algo:main}, described in Section \ref{sec:algo_main}, returns whether a $p$-Mahler equation of the form \eqref{eq:Mahler_at_0} is regular singular at $0$. If the $a_i$'s are power series, it runs in $$\mathcal O\left(m^2\nu^2p^m \right)$$ 
 operations in $\K$, 
  where $\nu$ is the maximum of the valuations of the $a_i$'s.
\end{thm}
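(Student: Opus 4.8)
The statement bundles two claims: that Algorithm~\ref{Algo:main} returns the correct Boolean, and that, for power series coefficients, it does so in $\mathcal O\left(m^2\nu^2p^m\right)$ operations in $\K$. The plan is to dispatch correctness quickly and to spend the bulk of the effort on the complexity count. For correctness I would argue that the algorithm merely implements, step by step, the structural characterization of regular singularity developed earlier in the paper: that $Lf=0$ is regular singular at $0$ exactly when its Newton polygon reduces to a single slope $0$ and the associated Frobenius/indicial data assembles into a constant-coefficient model via a Puiseux gauge transformation $P\in{\rm GL}_m(\Puis)$. Granting that characterization, correctness follows by checking each subroutine of Algorithm~\ref{Algo:main} against the object it is meant to produce (the Newton polygon, its slopes, the characteristic cycle) and verifying that the terminal test faithfully encodes the criterion. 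None of these verifications is deep once the theory is in place, so I would treat them one subroutine at a time and move on.

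The heart of the proof is the complexity estimate, and there the entire difficulty is first pinning down the \emph{truncation order} $N$ to which the power series must be known, and only then counting arithmetic. Because $\puip$ dilates exponents by the factor $p$, the operator $L$ of \eqref{eq:Mahler_at_0} involves $\puip^m$, so the feature that separates a regular singular equation from a non regular singular one only surfaces once series are expanded to exponents of size proportional to $p^m$. The key lemma I would establish is that it suffices to compute modulo $z^N$ with $N=\mathcal O\left(\nu p^m\right)$, where $\nu=\max_i\val(a_i)$ measures the height of the Newton polygon and $p^m$ records the $m$-fold dilation. Proving that truncation at this $N$ loses no relevant information—i.e.\ that every obstruction to constructing the constant model becomes visible by this precision and that running the recursion further cannot change the verdict—is the \textbf{main obstacle}, since one must track how the $\puip$-dilations interact with the slopes of the Newton polygon across all $m$ successive applications.

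With $N$ fixed the count becomes bookkeeping. Reading off the Newton polygon from $\val(a_0),\dots,\val(a_m)$ costs $\mathcal O(m)$ and the slope test is immediate, so the preprocessing is negligible against the claimed bound. The dominant step is the Frobenius-type recursion that builds the candidate transformation—equivalently, the coefficientwise solve of the gauge relation attached to the companion matrix $A_L$ of \eqref{eq:companion_matrix}—swept across the $\mathcal O(N)=\mathcal O\left(\nu p^m\right)$ coefficient positions. The delicate accounting point, which I expect the Newton-polygon normalization to supply, is that the work at each position stays bounded by $\mathcal O(m^2\nu)$ rather than growing with the position: the order-$m$ (equivalently $m\times m$) structure contributes the factor $m^2$, while convolving against the effectively length-$\mathcal O(\nu)$ coefficient data contributes the factor $\nu$. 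Multiplying the per-position cost $\mathcal O(m^2\nu)$ by the $\mathcal O\left(\nu p^m\right)$ positions yields $\mathcal O\left(m^2\nu^2p^m\right)$, as claimed.

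I would finish by remarking that the two estimates combine cleanly: the precision bound $N=\mathcal O(\nu p^m)$ is what forces the $p^m$ and one factor of $\nu$, the width of the coefficient data forces the second $\nu$, and the order of the equation forces $m^2$; keeping the per-coefficient convolution from growing is precisely the payoff of working with the Newton-polygon-reduced operator, and it is the step I would write out most carefully.
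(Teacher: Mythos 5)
Your correctness argument rests on a characterization that the paper explicitly refutes. You assert that $Lf=0$ is regular singular at $0$ ``exactly when its Newton polygon reduces to a single slope $0$'' together with some assembling of Frobenius data. That single-slope-zero criterion is the one valid for linear \emph{differential} equations; for Mahler equations the paper's Remarks \ref{rem:inverse} and \ref{rem:heuristic} exhibit two equations with identical Newton polygons, one regular singular at $0$ and one not, and the worked example of Section \ref{sec:exemple} is regular singular with two nonzero slopes $2$ and $3$. So the terminal test of Algorithm \ref{Algo:main} cannot be ``checked against'' any Newton-polygon criterion. The actual correctness proof has to go through Theorem \ref{thm_RS_si_solutions_tronquees} (regular singularity at $0$ is equivalent to the existence, for every exponent--slope pair $(c,j)$, of a reduced truncated solution satisfying \ref{C1}--\ref{C6}) together with Proposition \ref{prop:algo_fcj} (Algorithm \ref{Algo:fcj} correctly decides that existence), plus Theorem \ref{thm:slopes} to justify the early rejection when a slope has denominator not coprime with $p$. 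None of this appears in your proposal, and it is the substantive content of the theorem; the Newton polygon enters only as a necessary-condition filter and as the source of the candidate valuations.

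Your complexity count lands on the right total but by a decomposition that is not the one the argument supports and that you do not justify. The paper does not truncate the series at a precision $N$ and sweep coefficient positions; rather, each call to Algorithm \ref{Algo:fcj} runs a while loop whose successive valuations $v_i$ strictly increase inside the finite set $\frac{1}{d}\Z\cap[-\mu_j,-\mu_1]$, giving at most $d(\mu_j-\mu_1)+1=\mathcal O(p^m\nu)$ iterations (using $d\le p^m$ and the bounds $\mu_1\ge -\nu/(p-1)$, $\val a_0\le\nu$), each iteration costing $\mathcal O(m\nu)$ operations to update $L_\lambda(f)$ up to valuation $\val a_0-\mu_1$. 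The factor $m^2$ then splits as (number of pairs $(c,j)$, at most $m$) times (Newton-polygon points per update, $\mathcal O(m\nu)$), not as an ``$m\times m$ structure per position'' as you claim. Your ``main obstacle'' --- proving that truncation at $N=\mathcal O(\nu p^m)$ loses no information --- is indeed where the real work lies, but it is resolved by the specific conditions \ref{C1}--\ref{C6} defining truncated solutions and by the induction in the proof of Proposition \ref{prop:algo_fcj}, which you leave entirely open.
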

Recall that a field is said to be \textit{computable} if its elements can be effectively represented and there exist algorithms to perform basic arithmetic operations and decide equality (see \cite{Rabin60}). Note that the algebraic closure of a computable field is itself computable \cite[Theorem 7]{Rabin60}. For example, one could take $\K = \Q$ in Theorem \ref{thm:algo_complexity}.
Note that the Algorithm \ref{Algo:main} will only use a finite number of coefficients from the $a_i$'s, depending only on the ramification of the $a_i$'s and their valuations. Thus, we do not need a way to compute arbitrary large coefficients of the $a_i$'s but only to know their coefficients up to an effective bound.

Our algorithm relies on the Frobenius method for Mahler equations introduced in \cite{Ro23}. It uses the Newton polygon of the Mahler equation \eqref{eq:Mahler_at_0}, which is the lower convex hull of the set of points $(p^{i},j)$, $0\leq i \leq m $ and $j \geq \val a_{i}(z)$, see Sections \ref{sec:Newton} and \ref{sec:Frobenius}. When considering linear ordinary differential equations, an equation is regular singular at $0$ if and only if its Newton polygon has only one slope and this slope is $0$ \cite[p.92]{VDPS03}. There is no such criterion for Mahler equations: the regular singularity at $0$ cannot be read from the Newton polygon. Precisely, as shown in Remarks \ref{rem:inverse} and \ref{rem:heuristic}, one may find two distinct equations having the same Newton polygon such that the first equation is regular singular at $0$ while the second is not. Nevertheless, some information can be extracted from the Newton polygon.

\begin{thm}\label{thm:slopes}
    A necessary condition for a Mahler equation with $a_i \in \K[[z]]$ to be regular singular at $0$ is that the denominators of the slopes of its Newton polygon are relatively prime with $p$. 
\end{thm}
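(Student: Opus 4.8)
The plan is to read the slopes of the Newton polygon off the $z$-valuations of solutions, and then to show that for an equation with coefficients in $\K[[z]]$ these valuations cannot have a denominator divisible by $p$ once the equation is regular singular.

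First I would recall, from the Frobenius method of \cite{Ro23} (see Section~\ref{sec:Frobenius}), that each slope $s$ of the Newton polygon is realised as $s=-\val f$ for some solution $f$ of~\eqref{eq:Mahler_at_0} in the relevant extension of $\Puis$; in particular $s$ and $\val f$ have the same denominator. On the other hand, if the equation is regular singular at $0$, then $A_L$ is $\Puis$-equivalent to a constant matrix $C\in{\rm GL}_m(\K)$, so a fundamental system of solutions is obtained by applying a Puiseux gauge matrix to the solutions of $\puip(Y)=CY$. Every solution of such a constant system has $z$-valuation $0$: indeed $\puip$ multiplies $z$-valuations by $p$, so $\puip(y)=\lambda y$ forces $(p-1)\val y=0$, and the Jordan blocks only introduce the logarithmic factors $\log z$ and $\log\log z$, both of $z$-valuation $0$ and neither carrying fractional powers of $z$. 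Consequently each solution $f$ of the regular singular equation can be written as a finite sum $f=\sum_\beta g_\beta(z)\,h_\beta$, where $g_\beta\in\Puis$ has all its $z$-exponents congruent to $\beta$ modulo $\Z$ and $h_\beta$ is a finite product of the above logarithmic factors; in particular $\val f$ is the valuation of a Puiseux series.

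The key step is the claim that, for a solution of this form, every fractional part $\beta$ appearing in $\supp f$ has denominator coprime to $p$. I would prove it by examining the most $p$-divisible fractional part. Because $\puip$ sends $z^{e}$ to $z^{pe}$, because $\puip^{i}(h_\beta)$ is again a combination of our logarithmic factors (using $\puip(\log\log z)=\log\log z+\log p$) carrying no fractional power of $z$, and because each $a_i$ lies in $\K[[z]]$ and hence has support in $\Z$, the operator $L$ sends the $\beta$-homogeneous part of $f$ to terms whose fractional $z$-exponents are the $\{p^{i}\beta\}$ for $0\le i\le m$. Now multiplication by $p$ never increases the $p$-adic valuation of the denominator of an element of $\mathbb{Q}/\Z$, and strictly decreases it when that valuation is positive. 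Hence, if $a$ denotes the maximal $p$-adic valuation of the denominators of the fractional parts occurring in $\supp f$ and $\beta$ is such a maximal fractional part, then no index $i\ge1$ can contribute to the fractional part $\beta$ of $Lf$, for this would require a fractional part in $\supp f$ whose denominator has $p$-adic valuation $a+i>a$. Thus the $\beta$-homogeneous part of $Lf$ equals $a_0$ times the $\beta$-homogeneous part of $f$; since $Lf=0$ and $a_0\neq0$ in the field $\Puis$, that part vanishes, contradicting the maximality of $a$ unless $a=0$. This proves the claim, whence $\val f$, and therefore every slope, has denominator coprime to $p$.

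The step I expect to require the most care is the bookkeeping of the logarithmic factors: one has to verify that applying $\puip^{i}$ to the products $h_\beta$ keeps them free of fractional powers of $z$, so that they play no role in the fractional-part argument and the nonzero scalar $a_0\in\Puis$ can genuinely be cancelled. The rest is the elementary observation that, over $\K[[z]]$, the term $a_0 f$ is the only one that preserves the $p$-part of the denominator of an exponent, while every higher $\puip^{i}$ strictly lowers it; this is precisely where the hypothesis $a_i\in\K[[z]]$ (unramified coefficients) is indispensable.
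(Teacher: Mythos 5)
Your proof is correct, and while it shares the paper's overall skeleton --- both use the Frobenius method to produce a solution whose Hahn component has valuation exactly $-\mu_j$, and both then invoke a constraint on the ramification of solutions of regular singular equations --- the way you establish that constraint is genuinely different. The paper's proof rests on Lemma \ref{lem:solution_denom}, which is imported from \cite{FP} (Corollary 2.4 and Lemma 2.3 there) and asserts that every solution of a regular singular equation has Hahn components in $\K((z^{1/d}))$ with $d$ prime to $p$. You instead prove the needed ramification statement from scratch: grading a solution $f\in\Puis[(e_c)_{c},\logm]$ by the class of its $z$-exponents in $\mathbb{Q}/\Z$, you observe that since each $a_i$ has integer exponents and $\puip^i$ multiplies exponents by $p^i$ (strictly lowering the $p$-adic valuation of a denominator when it is positive), only the term $a_0f$ can contribute to a class whose denominator has maximal positive $p$-adic valuation; the equation $Lf=0$ and $a_0\neq0$ then force that graded piece to vanish. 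This is more elementary and self-contained, and it isolates exactly where the hypothesis $a_i\in\K[[z]]$ enters; the paper's route, by contrast, buys the sharper quantitative fact that $d$ may be taken to be the least common multiple of the denominators of the slopes prime to $p$, which is reused elsewhere (e.g.\ in Condition \ref{C2} and in Fact 1 of the proof of Theorem \ref{thm_RS_si_solutions_tronquees}). Two points in your write-up deserve tightening, though neither is a real gap: the fractional-part grading should be performed after first decomposing $f=\sum_{c,k}h_{c,k}e_c\logm^{k}$ over the $\Hahn$-basis $e_c\logm^{k}$ (your single index $\beta$ conflates the exponent class with the logarithmic factor, and $\puip$ does not preserve the individual $\logm$-degrees), and the final cancellation should be justified by noting that multiplication by the nonzero Hahn series $a_0$ is injective on this free $\Hahn$-module.
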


\begin{rem}\label{rem:PowerSeries}
    We can always reduce our study of Equation \eqref{eq:Mahler_at_0} to the situation where the $a_i$'s are power series. Indeed, replacing $z$ with $z^d$ for some $d$ and multiplying the equation with some monomial does not affect the regular singularity at $0$.  
\end{rem}
The criterion given by Theorem \ref{thm:slopes} becomes particularly nice when $p$ is large.

\begin{coro}
\label{coro:sensdirect}
Consider an equation \eqref{eq:Mahler_at_0} with $a_i \in \K[[z]]$. Assume that the equation is regular singular at $0$ and that $p > \max_{i} \val a_i$. Then, one of the following holds:
\begin{itemize}
    \item the Newton polygon has one unique slope; 
    \item the Newton polygon has two slopes and the second one is null.
\end{itemize}
\end{coro}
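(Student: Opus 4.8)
The plan is to combine the necessary condition of Theorem \ref{thm:slopes} with the hypothesis $p > \max_i \val a_i$ to pin down the possible edges of the Newton polygon. Recall that the vertices of the Newton polygon are among the points $(p^i, v_i)$ with $v_i := \val a_i$, that its leftmost vertex is $(p^0, v_0) = (1, v_0)$ since $a_0 \neq 0$, and that along a lower convex hull the slopes of the successive edges are strictly increasing from left to right.

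First I would analyse a single edge, say the one joining two consecutive vertices $(p^{i_1}, v_{i_1})$ and $(p^{i_2}, v_{i_2})$ with $i_1 < i_2$. Its slope is $(v_{i_2}-v_{i_1})/(p^{i_2}-p^{i_1})$, and since $p^{i_2}-p^{i_1} = p^{i_1}(p^{i_2-i_1}-1)$ with $p \nmid (p^{i_2-i_1}-1)$, the $p$-adic valuation of the denominator equals exactly $i_1$. Hence, once this slope is written in lowest terms, its denominator is coprime with $p$ if and only if $p^{i_1}$ divides $v_{i_2}-v_{i_1}$. By Theorem \ref{thm:slopes}, regular singularity at $0$ forces this divisibility for every edge of the Newton polygon.

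Next I would feed in the hypothesis. As the $a_i$ are power series we have $v_i \geq 0$, and the bound $p > \max_i \val a_i$ gives $v_i < p$ for every vertex, whence $|v_{i_2}-v_{i_1}| < p$. Therefore, for any edge with $i_1 \geq 1$ the divisibility $p^{i_1} \mid (v_{i_2}-v_{i_1})$ together with $p^{i_1} \geq p > |v_{i_2}-v_{i_1}|$ forces $v_{i_2}=v_{i_1}$, i.e. that edge has slope $0$. Only the leftmost edge, which starts at $i_1=0$, escapes this constraint, the divisibility $p^0 \mid (v_{i_2}-v_{i_1})$ being then vacuous.

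Finally I would conclude using convexity: every edge other than the leftmost has slope $0$, but the slopes along the lower hull are strictly increasing, so at most one edge can have slope $0$. Consequently the Newton polygon has at most two edges. If it has one, we are in the first case; if it has two, the second one has slope $0$ while the first, lying to its left, has strictly smaller (hence negative) slope, which is the second case. The argument is essentially a bookkeeping of $p$-adic valuations, and I expect the only delicate point to be the identification $v_p(p^{i_2}-p^{i_1}) = i_1$ and its correct translation into the divisibility $p^{i_1}\mid(v_{i_2}-v_{i_1})$; once that is secured, the hypothesis $p > \max_i \val a_i$ does the rest.
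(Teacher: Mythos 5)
Your proof is correct and follows essentially the same route as the paper: apply Theorem \ref{thm:slopes} to each edge not starting at abscissa $1$, note that the divisibility it imposes together with $0\leq \val a_i<p$ forces that edge to be horizontal, and conclude by strict monotonicity of the slopes that there are at most two edges, the second being null. The only cosmetic difference is that you extract the sharper condition $p^{i_1}\mid(v_{i_2}-v_{i_1})$ where the paper only uses the weaker consequence $p\mid(\val a_j-\val a_i)$; both suffice identically once combined with the bound $|v_{i_2}-v_{i_1}|<p$.
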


Using this criterion and the algorithm cited in Theorem \ref{thm:algo_complexity} we are able to answer one of the questions we asked in \cite{FP}: is that true that a $p$-Mahler equation is either regular singular at $0$ for every $p$ large enough or not regular singular at $0$ for every $p$ large enough? We give a positive answer as follows: 

\begin{thm}\label{thm:p_varies}
   Let $a_0,\ldots,a_m \in \K[[z]]$ with $a_0a_m\neq 0$ and set $\nu := \max_i \val a_i$. For any $p> \nu$ consider the $p$-Mahler equation 
   \begin{equation}\label{eq:mahler_2}\tag{\theequation-$p$}
  a_m(z)f(z^{p^m})+ \cdots + a_1(z)f(z^p) + a_0(z)f(z) =0\,.
   \end{equation}
   The following two propositions are equivalent:
   \begin{itemize}
       \item there exists a $p> \nu$ such that the $p$-Mahler equation \eqref{eq:mahler_2} is regular singular at $0$;
       \item the $p$-Mahler equation \eqref{eq:mahler_2} is regular singular at $0$ for any $p> \nu$.
   \end{itemize}
\end{thm}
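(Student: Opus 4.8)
The content of the statement is the implication from the first proposition to the second, since the converse is immediate (there is at least one $p>\nu$). So assume that \eqref{eq:mahler_2} is regular singular at $0$ for some fixed $p_0>\nu$; the plan is to show that regular singularity, for $p>\nu$, is equivalent to a finite list of conditions bearing only on the coefficients $a_0,\dots,a_m$ and not on $p$. Throughout, write $v_i:=\val a_i$, so that $0\le v_i\le\nu<p$, and recall that the Newton polygon is the lower convex hull of the points $(p^i,v_i)$, $0\le i\le m$.

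First I would pin down the shape of the Newton polygon for $p>\nu$. Consider an edge joining $(p^i,v_i)$ to $(p^j,v_j)$ with $1\le i<j$; its slope is $\frac{v_j-v_i}{p^i(p^{\,j-i}-1)}$, whose denominator has $p$-adic valuation exactly $i\ge 1$ because $p^{\,j-i}-1$ is prime to $p$. By Theorem \ref{thm:slopes} this denominator must be prime to $p$ after reduction, which forces $p^i\mid(v_j-v_i)$; since $|v_j-v_i|\le\nu<p\le p^i$, this gives $v_i=v_j$, i.e.\ the edge is horizontal. Hence every edge not issuing from the leftmost vertex $(1,v_0)$ is flat, recovering Corollary \ref{coro:sensdirect}. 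The same numerical input yields the combinatorial rigidity I need: setting $v_{\min}:=\min_i v_i$ and $k^\star:=\min\{i:v_i=v_{\min}\}$, a direct estimate shows that for $p>\nu$ an interior point $(p^i,v_i)$ with $0<i<k^\star$ can never lie exactly on the segment from $(1,v_0)$ to $(p^{k^\star},v_{\min})$, and lies below it precisely when $v_i<v_0$. Consequently the vertex set of the Newton polygon and the list of indices lying on each edge are governed, for all $p>\nu$ simultaneously, by $p$-independent inequalities among the $v_i$ (comparisons with $v_{\min}$, $v_0$ and $v_m$ together with the description of $S_{\min}:=\{i:v_i=v_{\min}\}$).

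Next I would reduce the problem to the bottom (slope-$0$) edge. Using the monomial/shearing normalisation of Remark \ref{rem:PowerSeries}, the descending edge issuing from $(1,v_0)$ and the possible single ascending edge (which, for regular singularity, must start at $i=0$) carry characteristic polynomials of the shape $c_0+c_k\theta^{k}$, whose roots are the $k$-th roots of $-c_0/c_k$ and are therefore simple; such edges never obstruct regular singularity. The only possible obstruction therefore concentrates on the horizontal edge at height $v_{\min}$, to which the Frobenius method of \cite{Ro23} --- exactly as exploited by Algorithm \ref{Algo:main} --- attaches the characteristic polynomial
\[
\chi_0(\theta)=\sum_{i\in S_{\min}} c_i\,\theta^{i},\qquad c_i:=\text{coefficient of }z^{v_{\min}}\text{ in }a_i.
\]
Regular singularity of this part is then equivalent to a non-resonance condition $\mathcal C(\chi_0)$ on the roots of $\chi_0$. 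The decisive point is that both the index set $S_{\min}$ and the coefficients $c_i$ --- hence $\chi_0$ itself and its roots --- are manifestly independent of $p$.

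Putting these together, regular singularity for a given $p>\nu$ is equivalent to the conjunction of the $p$-independent Newton-polygon inequalities of the second paragraph and the $p$-independent condition $\mathcal C(\chi_0)$ of the third. Since all of these hold for $p=p_0$ by hypothesis, they hold for every $p>\nu$, and the equation is then regular singular at $0$ for all such $p$. The hard part will be the third paragraph: one must verify that the Frobenius non-resonance condition really is insensitive to $p$ in the range $p>\nu$, i.e.\ that no resonance among the fixed roots of $\chi_0$ can be created or destroyed as $p$ varies. The Newton-polygon bookkeeping of the second paragraph is comparatively routine once the coprimality estimate is in hand; it is the uniform control of the slope-$0$ Frobenius data that carries the real weight.
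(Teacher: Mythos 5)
Your overall strategy --- show that for $p>\nu$ regular singularity is equivalent to a finite list of $p$-independent conditions --- is the paper's strategy, and your Newton-polygon bookkeeping in the second paragraph essentially matches the paper's case analysis (one slope, or two slopes with the second one null, with the shape of the polygon forced simultaneously for all $p>\nu$ by $p$-independent inequalities among the $\val a_i$). The gap is in your third paragraph, and it is not a deferred verification but the actual content of the theorem. You assert that regular singularity of the slope-$0$ part is ``equivalent to a non-resonance condition $\mathcal C(\chi_0)$ on the roots of $\chi_0$''. No such criterion can exist: the paper stresses (Remarks \ref{rem:inverse} and \ref{rem:heuristic}) that two equations may have identical Newton polygons \emph{and identical characteristic polynomials} --- differing only in a higher-order coefficient $\alpha$ --- with one regular singular at $0$ and the other not. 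Hence no condition formulated purely in terms of $S_{\min}$, the leading coefficients $c_i$ and the roots of $\chi_0$ can decide regular singularity. The correct $p$-independent criterion (Proposition \ref{prop:a_m_tronques}) is that $\sum_{i=0}^m \underline{a_i}(z)\lambda^i$ be divisible by $(\lambda-c)^{m_{c,2}}$ in $\K[\lambda,z]$, where $\underline{a_i}$ is the truncation of $a_i$ up to order $\val (a_0)$; this involves the coefficients $a_{i,n}$ for all $0\le n\le \val (a_0)$ and all $i$, not only the leading ones, and proving its equivalence with regular singularity requires the full machinery of truncated solutions (Theorem \ref{thm_RS_si_solutions_tronquees}, together with Lemma \ref{lem:first_slope} for the descending edge). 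Your closing sentence concedes this is ``the hard part''; as written it is missing, and the specific form you propose for it cannot work.

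A smaller point: your reason that the descending edge ``never obstructs'' regular singularity --- namely that its characteristic polynomial $c_0+c_k\theta^k$ has simple roots --- is not the right one. The first slope never obstructs regardless of the root structure (Lemma \ref{lem:first_slope} and Proposition \ref{prop:one_slope}: the series $g_{c,1}$ is always a Puiseux series), whereas simplicity of the exponents attached to a later slope does not prevent an obstruction, as the example of Remark \ref{rem:heuristic} (two slopes, each of multiplicity one, same exponent) shows.
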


\subsection*{Organisation of the paper}
Our paper is organized as follows. In Section \ref{sec:extensions}, we introduce some ring extensions of the field of Puiseux series on which the map $\puip : z \mapsto z^p$ extends. In Section \ref{sec:Newton} we introduce the Newton polygon associated with a Mahler equation. We briefly recall the Frobenius method for Mahler equations of \cite{Ro23} (which is an adaptation of the corresponding method for linear differential equations) in Section \ref{sec:Frobenius}. The proof of Theorem \ref{thm:slopes} is given in Section \ref{sec:slopes}. In Section~\ref{sec:RSiff}, we give a necessary and sufficient condition for an equation to be regular singular at $0$, from which the algorithm of Theorem \ref{thm:algo_complexity} is built. The main algorithm, namely Algorithm \ref{Algo:main}, is described in Section \ref{sec:algo_main} where we also prove Theorem \ref{thm:algo_complexity}. In Section \ref{sec:exemple}, we run our main algorithm on an example. In Section \ref{sec:oneslope} we apply our results to the case where the Newton polygon associated with the equation has one slope. Last, in Section \ref{sec:p_grand} we prove Corollary \ref{coro:sensdirect} and Theorem \ref{thm:p_varies}.

\subsubsection*{Acknowledgment} We are deeply grateful to the referee for their meticulous reading of an earlier version of this paper.

\section{Some useful ring extensions of the field of Puiseux series}\label{sec:extensions}

The equation $y(z^p)-2y(z)=0$ shows that a Mahler equation may not have a basis of solutions in the ring $\Puis$ of Puiseux series. To build a basis of solutions, one has to consider some ring extension of $\Puis$. There exists a formal construction for such an extension following from the Picard-Vessiot theory. In the case of Mahler equations, one may give a more precise description of a basis of solutions.

\subsection{Hahn series} \label{sec:Hahn}
Let $R$ be a commutative ring. A \textit{Hahn series} $f$ is a formal series 
$$
f(z)=\sum_{\gamma \in \mathbb Q} f_\gamma z^\gamma,\quad f_\gamma \in R
$$
whose support, $\supp f:= \{\gamma \in \mathbb Q \, : \, f_\gamma \neq 0\}$, is a well-ordered subset of $\mathbb Q$. We let $\Hahn_R$  denote the ring of Hahn series with coefficients in $R$. It contains the ring $\Puis_R$ of Puiseux series with coefficients in $R$. When $R=\K$, we let $\Hahn=\Hahn_\K$. Equipped with the termwise addition and the Cauchy product, $\Hahn_R$ is a ring.
When $R$ is a field, $\Hahn_R$ is a field too (see \cite[Section 3.1]{Ro23}). We define the \textit{valuation} of a Hahn series $f(z)=\sum_{\gamma} f_\gamma z^\gamma$ as $\val f= \min (\supp f)$ with the convention that $\val 0 = +\infty$. It is a valuation which extends the valuation on $\Puis_R$. We define the \textit{coefficient of lowest degree} of $f$ as $\cld f= f_{\val f}$. The map $\puip$ naturally extends to $\Hahn$ and we have, for any $f \in \Hahn$, $\val \puip(f) = p\val f$.  

Some Hahn series, like the series $\sum_{k\geq 0} z^{-1/p^k}$, are solutions of $p$-Mahler equations\footnote{A description of Hahn series which are solutions of Mahler equations is given in \cite{FR24}.}. Yet, the field of Hahn series is not large enough to contain a basis of solutions of any $p$-Mahler equation over $\Puis$.

\subsection{The ring $\Rspe$} 
Recall that $\K$ is algebraically closed. Consider a family of indeterminates $X_c$, $c \in \K^\times$, along with an additional indeterminate $Y$. The ring $\Hahn[(X_c)_{c\in \K^\times},Y]$ can be equipped with a structure of $\puip$-extension of $\Hahn$, by defining the action of $\puip$ as follows: 
	$$
	\forall c \in \K^\times, \,  \puip(X_c)=cX_c \quad \text{  and  }\quad \puip(Y)=Y+1.$$
One easily checks that the ideal $I$ spanned by $X_c-1$ and the polynomials $X_cX_d-X_{cd}$, for all $c,d\in \K^\times$, is invariant under $\puip$. Thus, the quotient ring
	$$\Rspe=\Hahn[(X_c)_{c\in \K^\times},Y]/I$$
inherits a structure of $\puip$-ring. Precisely, if we let $e_c$ denote the image of $X_c$ for each $c\in \K^\times$ and $\logm$ denote the image of $Y$, then $$
\puip(e_c)=ce_c,\, \forall c \in \K^\times, \text{ and }\puip(\logm)=\logm+1.
$$
Note that, with these notations, $e_1=1$.

A straightforward adaptation of the proof of Theorem 35 in \cite{Ro18}\footnote{In \cite{Ro18}, this property is proved over $\Puis$ and the base field is $\Q$, but the same argument works over $\Hahn$ and any algebraically closed field of constants, $\K$.} shows that for any $g \in \Rspe$, the identity $\puip(g)=g$ holds if and only if $g \in \K$. Furthermore, it follows from \cite[Remark 7 and Theorem 12]{Ro23} that any $p$-Mahler equation \eqref{eq:Mahler_at_0} has a basis of solutions in $\Rspe$, meaning that it has $m$ solutions that are linearly independent over $\K$. Note that the base field $\K$ is $\C$ in \cite{Ro23}  but the argument there works for any algebraically closed field of characteristic zero.
	
Any element $y$ of $\Rspe$ can be written
$$
y = \sum_{c,j} h_{c,j}e_c\logm^j
$$
with $h_{c,j} \in \Hahn$ and where the sum is over a finite range of $c \in \K^\star$ and $j\in \Z_{\geq 0}$.
Thus, any $p$-Mahler equation has $m$ solutions $y_1,\ldots,y_m$ of the form
$
y_i= \sum_{c,j} h_{i,c,j}e_c\logm^j
$,
with $h_{i,c,j} \in \Hahn$, which are linearly independent over $\K$.

\begin{prop}\label{prop:RS_equations}
Consider a $p$-Mahler equation \eqref{eq:Mahler_at_0}. The following are equivalent:
\begin{enumerate}[label=\alph*)]
\item it is regular singular at $0$;
\item it has a basis of solutions of the form $
\sum_{c,j} h_{i,c,j}e_c\logm^j$
with $h_{i,c,j}\in \Puis$;
\item any solution in $\Rspe$ of this equation is of the form
$\sum_{c,j} h_{c,j}e_c\logm^j$ with $h_{c,j} \in \Puis$, that is,  any solution belongs to $\Puis[(e_c)_{c \in \K^\star},\logm]$.
\end{enumerate}
\end{prop}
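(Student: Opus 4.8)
The plan is to prove the cycle (a) $\Rightarrow$ (b) $\Rightarrow$ (c) $\Rightarrow$ (a). Throughout I set $\bW := \Puis[(e_c)_{c\in\K^\star},\logm]$ and rely on two preliminary facts. First, since $\puip$ acts on $\Rspe$ with field of constants $\K$, a Casoratian argument gives the upper bound $\dim_\K \leq m$ for the space of solutions of $Lf=0$ lying in $\Rspe$, while the quoted existence of $m$ independent solutions gives the lower bound; hence this space has dimension exactly $m$. Second, a constant system $\puip(Z)=CZ$ with $C\in{\rm GL}_m(\K)$ admits a fundamental matrix in ${\rm GL}_m(\K[(e_c)_{c\in\K^\star},\logm])$: writing a Jordan block of $C$ with eigenvalue $c$ and nilpotent part $N$, the finite sum $e_c\,(I+N/c)^{\logm}=e_c\sum_{k}\binom{\logm}{k}(N/c)^k$ solves the block, because $\puip$ sends $\binom{\logm}{k}$ to $\binom{\logm+1}{k}$ and Pascal's rule reproduces the factor $I+N/c$.

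For (a) $\Rightarrow$ (b), I would choose $P\in{\rm GL}_m(\Puis)$ with $C:=\puip(P)^{-1}A_LP\in{\rm GL}_m(\K)$ and a constant fundamental matrix $\Gamma\in{\rm GL}_m(\K[(e_c)_{c\in\K^\star},\logm])$ as above. Then $\mathcal Y:=P\Gamma$ satisfies $\puip(\mathcal Y)=\puip(P)C\Gamma=A_LP\Gamma=A_L\mathcal Y$, so it is a fundamental matrix of the companion system with entries in $\bW$. As $A_L$ is a companion matrix, every solution vector of $\puip(Y)=A_LY$ has the form $(g,\puip g,\dots,\puip^{m-1}g)$ with $Lg=0$; hence the first row of $\mathcal Y$ provides $m$ solutions of $Lf=0$, which are $\K$-linearly independent since $\mathcal Y$ is invertible, and each has the required shape with coefficients in $\Puis$. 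For (b) $\Rightarrow$ (c), a basis as in (b) spans, by the dimension count, the whole solution space; any solution is a $\K$-combination of it, and since $\{e_c\logm^j\}$ is a free basis of $\Rspe$ as an $\Hahn$-module the resulting expansion is the unique one, its coefficients being $\K$-combinations of Puiseux series, hence in $\Puis$.

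The substantive direction is (c) $\Rightarrow$ (a). Starting from the $m$-dimensional solution space, which by (c) lies in $\bW$, I form a fundamental matrix $\mathcal Y\in M_m(\bW)$ of the companion system. I first check that $\mathcal Y$ is invertible over $\bW$: its Casoratian $w:=\det\mathcal Y$ satisfies $\puip(w)=\det(A_L)\,w$ with $\det(A_L)=\pm a_0/a_m\in\Puis^\times$; decomposing $w$ along the $e_c$-grading reduces this to first-order scalar equations over $\Puis$, and since $\puip$ preserves the $z^0$-coefficient of a Puiseux series the top $\logm$-coefficient cannot be compensated, forcing $\logm$-degree $0$. Thus $w$ is a Puiseux unit times a single $e_\lambda$, hence a unit of $\bW$, so $\mathcal Y\in{\rm GL}_m(\bW)$ is a full solution matrix over $\bW$. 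I then reduce the goal to producing $P\in{\rm GL}_m(\Puis)$ with $P^{-1}\mathcal Y$ having entries in $\K[(e_c)_{c\in\K^\star},\logm]$: indeed $C:=\puip(P)^{-1}A_LP=\puip(P^{-1}\mathcal Y)(P^{-1}\mathcal Y)^{-1}$ would then lie in $M_m(\Puis)\cap M_m(\K[(e_c)_{c\in\K^\star},\logm])=M_m(\K)$, using $\Puis\cap\K[(e_c)_{c\in\K^\star},\logm]=\K$ inside $\bW=\Puis\otimes_\K\K[(e_c)_{c\in\K^\star},\logm]$.

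To produce such a $P$ I would run a descent along the group $G$ of $\puip$-equivariant $\Puis$-algebra automorphisms of $\bW$. Commuting with $\puip$ and fixing the constants $\K$ forces any $\sigma\in G$ to act by $\sigma(e_c)=\xi(c)e_c$ for a character $\xi\colon\K^\times\to\K^\times$ and $\sigma(\logm)=\logm+t$ with $t\in\K$, and one checks that $\bW^{G}=\Puis$. Since $A_L$ has entries in $\Puis$, $\sigma(\mathcal Y)$ is again a full solution matrix, so $\sigma(\mathcal Y)=\mathcal Y\,\rho(\sigma)$ for a homomorphism $\rho\colon G\to{\rm GL}_m(\K)$. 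Decomposing $\rho$ into its torus weights (realised by the $e_c$) and its unipotent $\logm$-part lets me build $\Gamma\in{\rm GL}_m(\K[(e_c)_{c\in\K^\star},\logm])$ with $\sigma(\Gamma)=\Gamma\rho(\sigma)$ and $\puip(\Gamma)=C\Gamma$ for some $C\in{\rm GL}_m(\K)$; then $P:=\mathcal Y\Gamma^{-1}$ is $G$-invariant, hence lies in ${\rm GL}_m(\Puis)$, and satisfies $\puip(P)^{-1}A_LP=C$, which is regular singularity. I expect the main obstacle to be exactly this descent—verifying $\bW^{G}=\Puis$ and, above all, realising the abstract representation $\rho$ by an explicit constant fundamental matrix $\Gamma$ over $\K[(e_c)_{c\in\K^\star},\logm]$—together with the preliminary check that the Casoratian is a unit of $\bW$.
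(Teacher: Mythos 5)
Your directions a) $\Rightarrow$ b) and b) $\Rightarrow$ c) are correct and essentially the paper's: the paper also gets the constant fundamental matrix $\Gamma$ (it cites \cite[Lemma 24]{FR24} where you build it by hand from Jordan blocks), multiplies it by $P$ to get a fundamental matrix of the companion system, and reads off a basis of solutions from it; b) $\Rightarrow$ c) is the same two-line linear-combination argument. The divergence is in c) $\Rightarrow$ a), where the paper does \emph{not} attempt a self-contained descent: it invokes Roques' theorem \cite[Theorem 2]{Ro20}, which says that \emph{every} Mahler system is gauge-equivalent over the field of Hahn series $\Hahn$ to a constant system. Given $H\in{\rm GL}_m(\Hahn)$ with $\puip(H)^{-1}A_LH=A_0\in{\rm GL}_m(\K)$ and an upper-triangular $E$ with diagonal $e_{c_1},\ldots,e_{c_m}$ solving $\puip(E)^{-1}A_0E={\rm I}_m$, the entries of $HE$ are built from solutions, hypothesis c) forces their $\Hahn$-coefficients to be Puiseux, and the triangularity of $E$ then forces $H\in{\rm GL}_m(\Puis)$. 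All the hard structure theory is outsourced to that one citation.

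Your replacement for this step has two genuine gaps. First, the Casoratian argument: $\Rspe$ is not an integral domain (from $e_ce_d=e_{cd}$ and $e_1=1$ one gets $(e_{-1}-1)(e_{-1}+1)=0$ with both factors nonzero, since the $e_c$ are linearly independent over $\Hahn$), so neither the dimension bound $\dim_\K\leq m$ nor the implication ``$\K$-linearly independent solutions $\Rightarrow$ $\det\mathcal Y\neq 0$'' follows from the naive minimal-relation argument valid over $\puip$-fields; your computation forcing $\logm$-degree $0$ only applies \emph{after} you know $w\neq 0$, which is exactly the point left unproved. Second, and as you yourself flag, the descent is not carried out: you need the triviality of the cocycle $\sigma\mapsto\rho(\sigma)$ for the (very large) group $G\cong{\rm Hom}(\K^\times,\K^\times)\ltimes\K$, i.e.\ the existence of $\Gamma\in{\rm GL}_m(\K[(e_c)_{c\in\K^\star},\logm])$ with $\sigma(\Gamma)=\Gamma\rho(\sigma)$ \emph{and} $\puip(\Gamma)=C\Gamma$; identifying the torus weights of $\rho$ with evaluation characters $\xi\mapsto\xi(c)$ and splitting off the unipotent part compatibly with $\puip$ is precisely the content that is asserted rather than proved. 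Since an invertible matrix over $\Puis\otimes_\K\K[(e_c)_{c\in\K^\star},\logm]$ does not in general factor as a product of invertible matrices over the two factors, this step cannot be waved through; as written, the proof of c) $\Rightarrow$ a) is incomplete. The quickest repair is the paper's: import the trivialization over $\Hahn$ from \cite{Ro20} and use c) only to descend from $\Hahn$ to $\Puis$.
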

\begin{proof}  Let us prove that b) implies c). Any solution is a linear combination over $\K$ of elements of the basis. Thus, it is of the form
    $\sum_{c,j} h_{c,j}e_c\logm^j$
with $h_{c,j} = \sum_{i=1}^m b_i h_{i,c,j}$ and $b_i \in \K$. Since the $h_{i,c,j}$'s are Puiseux series, so are the $h_{c,j}$'s. 

Let us prove that a) implies b). Assume that the system is regular singular at $0$, so there exists $P \in {\rm GL}_m(\Puis)$ such that $A_0:=\puip(P)^{-1}A_LP \in {\rm GL}_m(\K)$. It follows from \cite[Lemma 24]{FR24} that there exists an invertible matrix $E$, with coefficients in $\K[(e_c)_{c \in \K^\star},\logm]$ such that 
\begin{equation}
\label{eq:pourC}
    \puip(E)^{-1}A_0E={\rm I}_m\,.
\end{equation}
Thus, 
$$
\puip(PE)=A_LPE\,.
$$
It follows that the $i$th column of $PE$ is of the form $(y_i,\puip(y_i),\ldots,\puip^{m-1}(y_i))^\top$ for some $y_i \in \Puis[(e_c)_{c \in \K^\star},\logm]$ and that $y_1,\ldots,y_m$ are solutions of \eqref{eq:Mahler_at_0}. Since the matrice $PE$ is invertible, these solutions are linearly independent. Thus, $y_1,\ldots,y_m$ form a basis of solutions and b) holds.

We prove that c) implies a). Assume that c) holds. In \cite[Theorem 2]{Ro20}, Roques proves that there exists $H \in {\rm GL}_m(\Hahn)$ such that $A_0:=\puip(H)^{-1}A_LH \in~{\rm GL}_m(\K)$. From \cite[Lemma 24]{FR24}, there exists an invertible matrix $E$ with coefficients in $\K[(e_c)_{c \in \K^\star},\logm]$ which satisfies \eqref{eq:pourC}. Without loss of generality, we can assume that $A_0$ is a Jordan matrix so that we can choose $E$ to be upper triangular with diagonal entries $e_{c_1},\ldots,e_{c_m}$, where $c_1,\ldots,c_m$ are the diagonal entries of $A_0$. Since 
$
\puip(HE)=A_LHE
$
 and since $A_L$ is a companion matrix, the entries of the $i$th row of $HE$ are of the form $\puip^{i-1}(f)$ where $f$ is a solution of \eqref{eq:Mahler_at_0}. In the meantime, the entries of $HE$ are of the form 
$$
\sum_{c,j} h_{c,j}e_c\logm^j
$$
with $h_{c,j} \in \Hahn$. It follows from c) that $h_{c,j} \in \Puis$. Furthermore, since the matrix $E$ is upper triangular, the entries of $H$ are in $\Puis$.  Thus the system is regular singular at $0$ and a) holds.
\end{proof}

\subsection{The ring $\Rspe_\lambda$} \label{sec:Rlambda}

Let $\lambda$ be an indeterminate, which will serve as a parameter in applying the Frobenius method to Mahler equations. This method is an analog of the eponymous method for differential equations (see Section \ref{sec:Frobenius}). We consider the field $\Hahn_{\K(\lambda)}$  (respectively $\Puis_{\K(\lambda)}$) of Hahn series (respectively Puiseux series) in $z$ whose coefficients are rational functions in $\lambda$. We let the map $\puip$ act on $\Hahn_{\K(\lambda)}$, leaving $\lambda$ unchanged. Furthermore the derivative $\partial_\lambda = \frac{\partial}{\partial \lambda}$ acts on  $\Hahn_{\K(\lambda)}$. Note that $\puip$ and $\partial_\lambda$ commute.

By \cite[Section 3.4]{Ro23}, there exists a differential ring extension $\Rspe_\lambda$ of $\Hahn_{\K(\lambda)}$ on which $\puip$ and $\partial_\lambda$ act and commute and which has an element $e_\lambda$ such that
$$
\puip(e_\lambda)=\lambda e_\lambda
$$
and that $\Rspe_\lambda =\Hahn_{\K(\lambda)}[e_\lambda,\partial_\lambda e_\lambda,\partial_\lambda^2e_\lambda,\ldots]$. Here, the elements $\partial_\lambda^i(e_\lambda)$, $i \in \Z_{\geq 0}$, are considered as indeterminates and $\partial_\lambda(\partial_\lambda^ie_\lambda) = \partial_\lambda^{i+1}e_\lambda$ for all $i \in \Z_{\geq 0}$.

Let $c \in \K^\star$. We consider the ring $\K[[\lambda-c]]^{\rm rat} \subset \K(\lambda)$ of rational functions with no poles at $\lambda=c$.  There is a specialization map $\ev_{\lambda = c}$ from the subring $\Hahn_{\K[[\lambda-c]]^{\rm rat}}[e_\lambda,\partial_\lambda e_\lambda,\partial_\lambda^2e_\lambda,\ldots]$ of $\Rspe_\lambda$ to $\Rspe$ which sends $\lambda$ to $c$, $e_\lambda$ to $e_c$ with the property that it commutes with $\puip$. Let $\logm_{c,k}=\ev_{\lambda=c}(\partial_\lambda^k e_\lambda/k!)$. Still following \cite{Ro23}, we can define the specializations so that
\begin{equation}\label{eq:def_log_ck}
\logm_{c,k} = \frac{1}{c^k} \frac{\logm(\logm-1)\cdots (\logm-k+1)}{k!}e_c\, ,
\end{equation}
with the convention $\logm_{c,0} = e_c$.

\section{Newton Polygons}\label{sec:Newton}

Let $L$ be the operator attached to \eqref{eq:Mahler_at_0}, defined by \eqref{eq:form_operateur}. Following \cite{CDDM18}, we define the Newton polygon $\mathcal N(L)$  of the equation \eqref{eq:Mahler_at_0} 
as the lower convex hull of the set
$$
\Px(L)= \{(p^{i},j) \ \vert \ i \in \{0,\ldots,m\}, \ j  \geq \val a_{i}(z)\} \subset \R^{2}. 
$$
The polygon $\mathcal N(L)$ is delimited by two vertical half lines and by $\kappa\leq m-1$ non-vertical edges
having pairwise distinct slopes $\mu_1<\mu_2<\cdots < \mu_{\kappa}$, called the slopes  of Equation~\eqref{eq:Mahler_at_0}.  In this section, we gather some properties of Newton polygons which will be used in the proofs of Theorems \ref{thm:algo_complexity} and \ref{thm:slopes}.

\begin{lem}\label{lem:slopes_valuation}Let $f\in\mathcal{H}$ be a non-zero solution of \eqref{eq:Mahler_at_0}. Then $\val f$ is the opposite of a slope of $\mathcal N(L)$.
\end{lem}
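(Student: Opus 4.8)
The plan is to run the standard ultrametric/Newton-polygon argument, adapted to the Mahler setting. Write $v := \val f$, which is finite since $f\neq 0$. The key input is that $\val \puip^i(f) = p^i v$, obtained by iterating the identity $\val \puip(f) = p\,\val f$ recalled in Section~\ref{sec:Hahn}. Because the valuation on $\Hahn$ is a genuine valuation (additive on products), the $i$-th term of $Lf$ satisfies $\val\bigl(a_i \puip^i(f)\bigr) = \val a_i + p^i v$ whenever $a_i\neq 0$, and this quantity is $+\infty$ when $a_i = 0$.

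First I would read this quantity geometrically. For the line of slope $-v$ passing through the point $(p^i, \val a_i) \in \Px(L)$, the $y$-intercept is exactly $\val a_i + p^i v = \val\bigl(a_i\puip^i(f)\bigr)$. Hence minimizing $\val\bigl(a_i\puip^i f\bigr)$ over $i$ amounts to finding the lowest line of slope $-v$ meeting the points $(p^i, \val a_i)$, and this lowest line is a supporting line of $\mathcal N(L)$ from below.

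Next I would exploit $Lf = 0$. The ultrametric inequality gives $\val\bigl(\sum_i a_i \puip^i f\bigr) \geq \min_i \val\bigl(a_i\puip^i f\bigr)$, with equality as soon as the minimum is attained at a single index. Since the left-hand side equals $\val 0 = +\infty$ while the right-hand side is finite (the index $i=0$ already contributes the finite value $\val a_0 + v$, as $a_0\neq 0$), the minimum must be attained at two distinct indices $i_1<i_2$. The corresponding points $(p^{i_1}, \val a_{i_1})$ and $(p^{i_2}, \val a_{i_2})$ then both lie on the supporting line of slope $-v$, while every point of $\Px(L)$ lies on or above it.

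Finally, convexity concludes: a supporting line of a lower convex hull that meets it at two distinct points contains the entire segment between them on the boundary, so this segment is (part of) a non-vertical edge of $\mathcal N(L)$ of slope $-v$. Equating intercepts at $i_1$ and $i_2$ yields $\val a_{i_2} - \val a_{i_1} = -v(p^{i_2}-p^{i_1})$, confirming that the edge has slope $-v$; thus $-v$ is one of the slopes $\mu_1<\cdots<\mu_\kappa$, i.e. $\val f$ is the opposite of a slope of $\mathcal N(L)$. I expect the only delicate point to be this convexity step — verifying that two contact points of a supporting line genuinely force an edge — together with the routine bookkeeping for indices $i$ with $a_i = 0$, which drop out of the minimum since their contribution has infinite valuation.
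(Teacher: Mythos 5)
Your argument is correct and complete: the multiplicativity of $\val$ on the field $\Hahn$, the ultrametric forcing of the minimum of $\val a_i + p^i\val f$ to be attained at two indices, and the convexity step identifying the resulting supporting line of slope $-\val f$ with an edge of $\mathcal N(L)$ are all sound. The paper itself does not prove this lemma but only cites \cite[Lemma 8]{FR}; your proof is the standard Newton-polygon argument that such a reference contains, so there is nothing to correct.
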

\begin{proof}
    See \cite[Lemma 8]{FR}.
\end{proof}
Each edge of $\mathcal N(L)$ is delimited by two points $(p^i,\val a_i)$ and $(p^j,\val a_j)$ with $0\leq i <j\leq m$. The slope of such an edge is equal to
$$
\frac{\val a_j - \val a_i}{p^j-p^i}\,.
$$
The integer $j-i$ is called the \textit{multiplicity} of this slope. Let $r_1,\ldots,r_\kappa$ denote the respective multiplicities of $\mu_1,\ldots,\mu_{\kappa}$. To each slope $\mu_j$, we associate a characteristic polynomial $\chi_j(\lambda)$ in the following way. Let $\I_j$ denote the set of $i \in \{0,\ldots,m\}$ such that the point $(p^i,\val a_i)$ belongs to the edge with slope $\mu_j$. We set
$$
\chi_j(\lambda)=\sum_{i \in \I_j} \cld a_i(z) \lambda^i\, ,
$$
where $\cld a_i(z)$ is the coefficient of lowest degree of $a_i$ (see Section \ref{sec:Hahn}).
We let $c_{j,1},\ldots,c_{j,r_j} \in \K^\star$ denote the nonzero roots of $\chi_j(\lambda)$ counted with multiplicities: they are the \textit{exponents} attached to the slope $\mu_j$. We let $m_{c,j}$ denote the multiplicity of an exponent $c$ attached to the slope $\mu_j$. If $c$ is not attached to $\mu_j$ we write $m_{c,j}=0$. Note that $\sum_{c,j} m_{c,j}=m$. For any pair $(c,j)$ we set
$$
s_{c,j}:=m_{c,1}+\cdots + m_{c,j-1}\,.
$$

\begin{ex}
\label{ex:NewtonPolyg2}
Let $p=2$ and $L= 2z^2\puip^4 +(1+z)\puip^3 + (-2+z^2)\puip^2 + (1-z)\puip +2z^3 $.
The slopes are $\mu_1 = - 3$ with multiplicity $r_1=1$, $\mu_2=0$ with multiplicity $r_2=2$ and $\mu_3 = \frac{1}{4}$ with multiplicity $r_3=1$. The associated characteristic polynomials are respectively
$$
\chi_1(\lambda) = \lambda+2,\quad \chi_2(\lambda)= \lambda^3-2\lambda^2+\lambda\quad \text{ and } \quad \chi_3(\lambda)=2\lambda^4+\lambda^3\,.
$$
Thus, the exponents are $c_{1,1}=-2$, $c_{2,1}=c_{2,2}=1$ and $c_{3,1}=-2$ and we have
$m_{-2,1}=1$, $m_{1,2}=2$, $m_{-2,3}=1$ and  $m_{c,j}=0$  for every other pairs $(c,j)$.

\begin{figure}[h!]
\centering
\definecolor{rfsqsq}{rgb}{0.12,0.12,0.12}
\begin{tikzpicture}[line cap=round,line join=round,>=triangle 45,x=0.6cm,y=0.6cm]
\clip(-1,-0.7) rectangle (18,4);
\fill[line width=1pt,color=rfsqsq,fill=rfsqsq,fill opacity=0.1] (1,3) -- (2,0) -- (8,0) -- (16,2) -- (16,3.8) -- (1,3.8) -- cycle;
\draw [line width=1pt] (1,3)-- (2,0);
\draw [line width=1pt] (2,0)-- (8,0);
\draw [line width=1pt] (8,0)-- (16,2);
\draw [line width=1pt] (16,2)-- (16,3.8);
\draw [line width=1pt] (1,3) -- (1,3.8);
\draw [line width=1pt,color=rfsqsq] (1,3)-- (2,0);
\draw [line width=1pt,color=rfsqsq] (2,0)-- (8,0);
\draw [line width=1pt,color=rfsqsq] (8,0)-- (16,2);
\draw [line width=1pt,color=rfsqsq] (16,2)-- (16,3.8);
\draw [line width=1pt,color=rfsqsq] (1,3.8)-- (1,3);
\draw [->,line width=1pt] (0,0) -- (17.2,0);
\draw [->,line width=1pt] (0,0) -- (0,3.8);
\draw (0.8,-0.02) node[anchor=north west] {$1$};
\draw (-0.8,1.4) node[anchor=north west] {$1$};
\draw (1.8,-0.02) node[anchor=north west] {$2$};
\draw (3.8,-0.02) node[anchor=north west] {$4$};
\draw (7.8,-0.02) node[anchor=north west] {$8$};
\draw (15.8,-0.02) node[anchor=north west] {$16$};
\begin{scriptsize}
\draw [fill=black] (1,3) circle (1.5pt);
\draw [fill=black] (2,0) circle (1.5pt);
\draw [fill=black] (2,1) circle (1.5pt);
\draw [fill=black] (4,2) circle (1.5pt);
\draw [fill=black] (4,0) circle (1.5pt);
\draw [fill=black] (8,0) circle (1.5pt);
\draw [fill=black] (8,1) circle (1.5pt);
\draw [fill=black] (16,2) circle (1.5pt);
\draw (1,0) node {$\vert$};
\draw (16,0) node {$\vert$};
\draw (0,1) node {\textemdash};
\end{scriptsize}
\end{tikzpicture}
\caption{\small \textit{The Newton polygon $\mathcal N(L)$ from Example \ref{ex:NewtonPolyg2}}}
\end{figure}
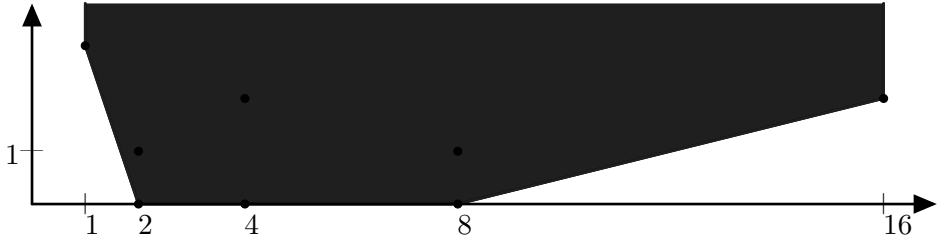
\end{ex}

For any $c \in \K^\star$, we define the operator $L_c$ as follows:
$$
L_c=a_mc^m\puip^m+\cdots + a_1c\puip+a_0 \in \Puis\langle \puip \rangle\,.
$$
\begin{rem}\label{rem_Lc} Let $c \in \K^\star$. The operator $L_c$ has the following properties. 
\begin{itemize}
    \item We have $L_cf=0$ if and only if $L(fe_c)=0$. In particular, it follows from Proposition \ref{prop:RS_equations} that $L$ is regular singular at $0$ if and only if $L_c$ is.
    \item The Newton polygons of  $L$ and $L_c$ are identical. In the meantime, the exponents attached to a slope of $L_c$ are obtained from the ones attached to the same slope of $L$ by dividing by $c$.
\end{itemize}
\end{rem}
Similarly, we define
$$
L_\lambda=a_m\lambda^m\puip^m+\cdots + a_1\lambda\puip+a_0 \in \Puis_{\K(\lambda)}\langle \puip \rangle\,.
$$
Once again, $L_\lambda f=0$ if and only if $L(fe_\lambda)=0$ for any $f \in \Rspe_\lambda$. 
Note the following result for latter use.

\begin{lem}\label{lem:chi}
For any integer $j$ such that $1\leq j \leq \kappa$, we have 
$$
\cld L_\lambda(z^{-\mu_j}) = \chi_j(\lambda)\,.
$$
Furthermore, if $-v\in\mathbb{Q}$ is not a slope of $\mathcal{N}(L)$, then 
$\cld L_\lambda(z^{v}) = \lambda^{i_0}\cld (a_{i_0})$ for a certain $i_0\in\lbrace 0,\ldots, m\rbrace$.

\end{lem}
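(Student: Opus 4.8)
The plan is to compute $L_\lambda(z^\gamma)$ termwise for the two relevant exponents $\gamma=-\mu_j$ and $\gamma=v$, and in each case to identify which summands contribute to the lowest-order term by translating the valuations of the summands into a statement about supporting lines of $\mathcal N(L)$.

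First I would expand, using $\puip^i(z^\gamma)=z^{p^i\gamma}$,
\[
L_\lambda(z^\gamma)=\sum_{i=0}^m a_i\lambda^i z^{p^i\gamma},
\]
so that the $i$-th summand has valuation $\val a_i+p^i\gamma$. The key observation is that $\val a_i+p^i\gamma$ is exactly the $y$-intercept of the line of slope $-\gamma$ passing through the point $(p^i,\val a_i)$. Minimizing this quantity over $i$ therefore amounts to determining the supporting line of slope $-\gamma$ of the lower convex hull $\mathcal N(L)$.

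For $\gamma=-\mu_j$, the supporting line of slope $\mu_j$ is precisely the edge of $\mathcal N(L)$ with that slope, and by definition of the lower convex hull it passes through the points $(p^i,\val a_i)$ exactly for $i\in\I_j$, while every other such point lies strictly above it. Hence the minimal valuation among the summands, say $b_j$ (the common value of $\val a_i+p^i\gamma$ for $i\in\I_j$, i.e.\ the intercept of that edge), is attained exactly by the indices $i\in\I_j$, each contributing $\cld(a_i)\lambda^i z^{b_j}$. Since these contributions carry pairwise distinct powers of $\lambda$, no cancellation can occur, so the coefficient of $z^{b_j}$ equals $\sum_{i\in\I_j}\cld(a_i)\lambda^i=\chi_j(\lambda)$, which is a nonzero element of $\K(\lambda)$; thus $b_j$ is genuinely the valuation and $\cld L_\lambda(z^{-\mu_j})=\chi_j(\lambda)$. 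For $\gamma=v$ with $-v$ not a slope of $\mathcal N(L)$, the supporting line of slope $-v$ is parallel to no edge, so it meets the lower convex hull at a single vertex, which is one of the points $(p^{i_0},\val a_{i_0})$. Every other point $(p^i,\val a_i)$, lying on or above the hull at the abscissa $p^i\neq p^{i_0}$, lies strictly above this supporting line, whence $\val a_i+p^iv>\val a_{i_0}+p^{i_0}v$ for all $i\neq i_0$. The minimal valuation is then attained by the single summand $i=i_0$, giving $\cld L_\lambda(z^v)=\lambda^{i_0}\cld(a_{i_0})$.

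The only delicate point is ruling out cancellation at the minimal valuation. In the first case this is supplied by the transcendence of $\lambda$, since distinct monomials $\lambda^i$ cannot cancel even if several indices achieve the minimum; in the second case it is automatic because a unique index realizes the minimum. Indeed, the role of the parameter $\lambda$ here is precisely to preclude the numerical cancellations among the $\cld(a_i)$ that could otherwise lower the valuation of $L(z^\gamma)$.
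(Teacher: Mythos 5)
Your proposal is correct and follows essentially the same route as the paper: expand $L_\lambda(z^\gamma)=\sum_i a_i\lambda^i z^{p^i\gamma}$, identify the set of indices minimizing $\val a_i+p^i\gamma$ with $\I_j$ (resp.\ a singleton) via the Newton polygon, and observe that the distinct powers of $\lambda$ prevent cancellation. The paper states the identification of the argmin set with $\I_j$ as being immediate from the definition, whereas you justify it explicitly through supporting lines; this is a matter of detail, not of substance.
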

\begin{proof}
   By definition of $L_\lambda$ we have 
    $L_\lambda(z^{v}) = \sum_{i=0}^m a_i(z)\lambda^i z^{p^iv}$ for any $v \in \mathbb Q$.
    Let $\gamma = \min_{0\leq i\leq m}\{p^iv+\val a_i\}$ and let $\I=\{i\,:\,p^iv+\val a_i=\gamma\}$. Then,
    \begin{equation}\label{eq:image_monome}
    \cld( L_\lambda(z^{v}) )=\sum_{i\in\I}  \cld(a_i(z) \lambda^iz^{p^iv})=\sum_{i\in\I} \cld a_i(z) \lambda^i\,.
    \end{equation}
    If $v=-\mu_j$ for some integer $j\in\{1,\ldots,\kappa\}$, by definition of $\I_j$ we have $\I=\I_j$ and the result follows from \eqref{eq:image_monome} and the definition of $\chi_j(\lambda)$. On the contrary, when $-v$ is not a slope of $\mathcal N(L)$, the set $\I$ is a singleton and the result again follows from \eqref{eq:image_monome}.
  \end{proof}

For an operator $L =\sum_{i=0}^m a_i\puip^i\in \Puis_{\K(\lambda)}\langle \puip \rangle$, $a_0a_m \neq 0$, we define similarly the Newton polygon $\mathcal N(L)$ and its slopes $\mu_1<\cdots < \mu_{\kappa}$.
\begin{lem}
\label{lem:valuation_L(f)}
Consider an operator $L=\sum_{i=0}^m a_i\puip^i\in \Puis_{\K(\lambda)}\langle \puip \rangle$, $a_0 \neq 0$, and let $\mu_1<\cdots<\mu_\kappa$ denote the slopes of $\mathcal N(L)$. Let $f\in\Hahn_{\K(\lambda)}$ be such that $\val(f)>-\mu_1$. Then, 
$$
\val\big( L(f)\big) = \val(a_0)+ \val(f) \quad \text{ and }\quad \cld\big( L(f)\big) = \cld(a_0)\times  \cld(f) \, . 
$$
\end{lem}

\begin{proof}  Let $v:=\val(f)$.
The valuation of each term $a_i\puip^i(f)$ in $L(f)$ is given by $\val(a_i)+p^iv$ for $i=0,\ldots,m$. For any $i>0$, since $\mu_1$ is the smallest slope, we have
$$
\frac{\val (a_i) - \val (a_0)}{p^i-1}\geq \mu_1.
$$
Thus,
$$
\val(a_i)+p^iv-(\val(a_0) + v)\geq (p^i-1)(v+\mu_1)>0.
$$
This proves the first equality. The second one is an immediate consequence.
\end{proof}

\begin{lem}   \label{lem:solutions_inhomo_Puiseux}
    Consider an operator $L=\sum_{i=0}^m a_i\puip^i\in \Puis_{\K(\lambda)}\langle \puip \rangle$ with $a_0 \neq 0$.
    Let $a_\infty\in\Puis_{\K(\lambda)}$ be such that $\val (a_\infty) > \val (a_0) - \mu_1$. Then, the equation $Ly=a_\infty$ has a solution in $f\in\Puis_{\K(\lambda)}$ such that $\val f>-\mu_1$.
\end{lem}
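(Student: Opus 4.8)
The plan is to construct $f$ by successive approximation, using Lemma \ref{lem:valuation_L(f)} to control the leading behaviour of $L$ at each step. Set $g_0 := a_\infty$. The hypothesis $\val(a_\infty) > \val(a_0)-\mu_1$ ensures that $v_0 := \val(g_0)-\val(a_0)$ satisfies $v_0 > -\mu_1$. I would then define the monomial $f_1 := \frac{\cld(g_0)}{\cld(a_0)}\, z^{v_0}$, which makes sense because $a_0 \neq 0$ forces $\cld(a_0)\neq 0$. By Lemma \ref{lem:valuation_L(f)} (applicable since $\val(f_1)=v_0>-\mu_1$), the series $L(f_1)$ has valuation $\val(a_0)+v_0=\val(g_0)$ and leading coefficient $\cld(a_0)\cld(f_1)=\cld(g_0)$; hence these leading terms cancel and the residual $g_1 := g_0 - L(f_1)$ has $\val(g_1) > \val(g_0)$.

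Iterating, suppose a residual $g_n$ with $\val(g_n) > \val(a_0)-\mu_1$ has been built (if some $g_n=0$ the process stops with an exact solution). Put $v_n := \val(g_n)-\val(a_0) > -\mu_1$ and $f_{n+1} := \frac{\cld(g_n)}{\cld(a_0)}\, z^{v_n}$. As before, Lemma \ref{lem:valuation_L(f)} shows that $L(f_{n+1})$ shares the valuation and leading coefficient of $g_n$, so $g_{n+1} := g_n - L(f_{n+1})$ satisfies $\val(g_{n+1}) > \val(g_n)$. This yields a sequence of monomials $f_n$ and a strictly increasing sequence of valuations $\val(g_n)$. The candidate solution is $f := \sum_{n\geq 1} f_n$, and the telescoping identity $\sum_{n=1}^N L(f_n) = g_0 - g_N$ formally gives $L(f)=a_\infty$, provided one justifies that $f$ is a Puiseux series, that $\val(g_N)\to\infty$, and that $L$ commutes with this infinite sum.

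The crux of the argument, and its only genuine difficulty, is precisely this convergence together with the fact that $f$ lands in $\Puis_{\K(\lambda)}$ rather than merely in $\Hahn_{\K(\lambda)}$. To handle both at once I would fix a single integer $d\geq 1$ such that $\mu_1\in\frac1d\Z$ and all the supports of $a_0,\dots,a_m$ and $a_\infty$ are contained in $\frac1d\Z$. The decisive observation is that $\puip$ maps $\frac1d\Z$ into itself, since $p\gamma\in\frac1d\Z$ whenever $\gamma\in\frac1d\Z$; therefore each $L(f_n)$, and hence each $g_n$, has support in $\frac1d\Z$, and by induction every $v_n$ and every $\val(g_n)$ belongs to $\frac1d\Z$. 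A strictly increasing sequence in the discrete set $\frac1d\Z$ must tend to $+\infty$, so $\val(g_n)\to\infty$ and $\val(f_{n+1})=v_n\to\infty$ as well. Consequently $f=\sum_n f_n$ is a Laurent series in $z^{1/d}$ with $\val(f)=v_0>-\mu_1$, hence an element of $\Puis_{\K(\lambda)}$, and the sum defining $L(f)$ is summable because the valuations of its terms tend to infinity. The telescoping then gives $L(f)=a_\infty$, as desired.
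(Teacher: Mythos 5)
Your proposal is correct and follows essentially the same route as the paper's proof: both construct $f$ by successive approximation, cancelling the leading term of the residual at each step via Lemma \ref{lem:valuation_L(f)}, and both secure membership in $\Puis_{\K(\lambda)}$ and convergence by confining all supports to a fixed lattice $\frac{1}{d}\Z$ (the paper does this through the substitution $z\mapsto z^d$, you do it by tracking supports directly). The two arguments differ only in bookkeeping.
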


\begin{proof}
Making a change of variables $z \mapsto z^d$ if necessary, we may assume without loss of generality that the $a_i$'s belong to $\K(\lambda)((z))$. Let $\gamma=\val (a_\infty)-\val (a_0)>-\mu_1$. 
 For a Laurent series $g=\sum_{k} g_kz^k$ we let $[g]_k=g_k$.
Define recursively a Laurent series $f=\sum_{k\geq \gamma} f_kz^k$ by
\begin{equation*}\label{eq:sol_inhomo_rec}
f_{\gamma} = \frac{\cld a_\infty}{\cld a_0},\qquad f_{\gamma+n+1}  =\frac{1}{\cld a_0}\left[ a_\infty - L\left(\sum_{k=\gamma}^{\gamma+n}f_{k}z^k\right)\right]_{\val a_\infty+n+1}.
\end{equation*}
One first check by induction that, for any $n \geq 0$,
\begin{equation}\label{eq:val_ainfty}
\val \left(a_\infty - L\left(\sum_{k=\gamma}^{\gamma+n}f_{k}z^k\right) \right)\geq \val a_\infty + n +1\,.
\end{equation}
\textit{Proof for $n=0$.} When $n=0$, \eqref{eq:val_ainfty} is rewritten as $\val \left(a_\infty - L\left(f_{\gamma}z^\gamma\right) \right)\geq\val a_\infty +1$. This inequality follows from the definitions of $\gamma$ and $f_\gamma$, and from Lemma \ref{lem:valuation_L(f)} applied to $f_\gamma z^\gamma$. 

\noindent \textit{Proof of the inductive step.} Let $n\geq 0$ for which \eqref{eq:val_ainfty} holds. By Lemma \ref{lem:valuation_L(f)}, we have $\val(L(f_{\gamma+n+1}z^{\gamma+n+1})= \val a_\infty+n+1$ and
$$ 
\left[ a_\infty - L\left(\sum_{k=\gamma}^{\gamma+n}f_{k}z^k\right)\right]_{\val a_\infty+n+1} -\left[ L\left(f_{\gamma+n+1}z^{\gamma+n+1}\right)\right]_{\val a_\infty+n+1} =0\,.
$$
We infer from this equality and the induction hypothesis that \eqref{eq:val_ainfty} holds for $n+1$. This proves the inductive step. Thus \eqref{eq:val_ainfty} holds for any $n \geq 0$. Letting $n$ tends to $\infty$ we obtain that $\val(a_\infty - L(f))=+\infty$, that is $L(f)=a_\infty$, as wanted.
\end{proof}

\section{Frobenius Method}\label{sec:Frobenius}
We continue with the notations of the previous sections. 
For any $j\in\{1,\ldots,\kappa\}$, set $\theta_j=\val L_\lambda(z^{-\mu_j})$. It follows from \cite{Ro23} that, for any exponent $c \in \K$ attached to the slope $\mu_j$, there exists an unique $g_{c,j}(\lambda,z) \in \Hahn_{\K(\lambda)}$ such that
\begin{equation}\label{eq:g_cj}
L(g_{c,j}(\lambda,z)e_\lambda)=z^{\theta_j}(\lambda - c)^{s_{c,j}+m_{c,j}} e_\lambda\,.
\end{equation}
Furthermore, $g_{c,j}(\lambda,z)$ is well-defined at $\lambda=c$ and it has valuation $-\mu_j$ in $z$. Moreover,  the coefficient of lowest degree $r_{c,j}(\lambda):=\cld g_{c,j}(\lambda,z)$ is a rational function in $\lambda$ whose $(\lambda-c)$-adic valuation is equal to $s_{c,j}$. It can be written explicitly:
 \begin{equation}\label{eq:cld}
    r_{c,j}(\lambda)=\lambda^{-r_1-\cdots - r_{j-1}}\frac{\prod_{i=1}^j\prod_{k=1}^{r_i}(-c_{i,k})}{a_{0,\val a_0}} \frac{(\lambda - c)^{s_{c,j}}}{\prod_{c'\neq c} (\lambda-c')^{m_{c',j}}} \in \K[[\lambda - c]]^{\rm rat}.
        \end{equation}
Note that \eqref{eq:g_cj} may be rewritten as
\begin{equation}\label{eq:g_cj-bis}
L_\lambda(g_{c,j}(\lambda,z))=z^{\theta_j}(\lambda - c)^{s_{c,j}+m_{c,j}}\,.
\end{equation}
\begin{ex} We continue with the operator $L$ from Example \ref{ex:NewtonPolyg2}. Associated with the first slope and the exponent $-2$ we have $r_{-2,1}(\lambda)=1$, $\theta_1=6$ and 
    $$
    g_{-2,1}(\lambda,z)=z^3+\frac{1}{2}\lambda z^4-\frac{1}{4}\lambda^2z^5+\frac{1}{4}\lambda^2z^6+\frac{1}{8}\lambda^3z^7-\frac{1}{8}\lambda^3z^8 + \text{higher degree terms in $z$}.
    $$
    Further examples of series $g_{c,j}$ may be found in \cite[Section 6]{Ro23}.
\end{ex}
We recall the following result of Roques (see \cite[Theorem 12]{Ro23}).
\begin{thm} \label{thm:Roques}
    The functions 
    $$
   y_{c,j,i}= \ev_{\lambda=c}\left( \partial_\lambda^{s_{c,j}+i}(g_{c,j}(\lambda,z)e_\lambda) \right)
    $$
    where, for each $(c,j)$, $0\leq i <m_{c,j}$, form a basis of solutions of \eqref{eq:Mahler_at_0}.
\end{thm}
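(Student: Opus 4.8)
The plan is to verify three things in turn: that each $y_{c,j,i}$ is a solution of \eqref{eq:Mahler_at_0}, that there are exactly $m$ of them, and that they are linearly independent over $\K$. Since the solution space of \eqref{eq:Mahler_at_0} inside $\Rspe$ has dimension exactly $m$ over $\K$ (it contains a basis of $m$ independent solutions and cannot exceed the order of the equation), these three points together will force the $y_{c,j,i}$ to be a basis.

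To prove that $y_{c,j,i}$ solves \eqref{eq:Mahler_at_0}, I would exploit that $L$ has coefficients in $\Puis$, so that as an operator it commutes with $\partial_\lambda$ and with the specialization $\ev_{\lambda=c}$ (each of these acts only on the coefficients or on $e_\lambda$, and $\puip$ commutes with $\partial_\lambda$ and with $\ev_{\lambda=c}$). Applying $\ev_{\lambda=c}\circ\partial_\lambda^{s_{c,j}+i}$ to the defining identity \eqref{eq:g_cj} and pushing these two operators across $L$ reduces the claim $L(y_{c,j,i})=0$ to the vanishing
$$
\ev_{\lambda=c}\Big(\partial_\lambda^{s_{c,j}+i}\big(z^{\theta_j}(\lambda-c)^{s_{c,j}+m_{c,j}}e_\lambda\big)\Big)=0.
$$
Expanding by the Leibniz rule, each summand contains a factor $\partial_\lambda^{k}\big((\lambda-c)^{s_{c,j}+m_{c,j}}\big)$ with $k\le s_{c,j}+i<s_{c,j}+m_{c,j}$, which vanishes at $\lambda=c$; hence the whole expression is zero. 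The count is immediate: the number of triples $(c,j,i)$ with $0\le i<m_{c,j}$ is $\sum_{c,j}m_{c,j}=m$.

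The substance of the argument is linear independence, and I would extract it from the dominant term of each $y_{c,j,i}$. Writing $\ev_{\lambda=c}(\partial_\lambda^{\ell}e_\lambda)=\ell!\,\logm_{c,\ell}$ with $\logm_{c,\ell}$ as in \eqref{eq:def_log_ck}, expanding $\partial_\lambda^{s_{c,j}+i}(g_{c,j}e_\lambda)$ by Leibniz, and isolating the coefficient of the lowest power $z^{-\mu_j}$, one sees that this coefficient involves only the derivatives $r_{c,j}^{(k)}(c)$ of $r_{c,j}=\cld g_{c,j}$. By \eqref{eq:cld}, $r_{c,j}$ has $(\lambda-c)$-adic valuation exactly $s_{c,j}$, so these derivatives vanish for $k<s_{c,j}$ while $r_{c,j}^{(s_{c,j})}(c)\neq 0$. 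Consequently, in the expansion of $y_{c,j,i}$ over the basis $\{z^{\gamma}e_{c'}\logm^{\ell}\}$ of $\Rspe$ over $\K$, the element $y_{c,j,i}$ involves only $e_c$, its lowest $z$-valuation is exactly $-\mu_j$, and at that valuation its highest power of $\logm$ is $\logm^{i}$, occurring with a nonzero coefficient proportional to $r_{c,j}^{(s_{c,j})}(c)$.

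With these dominant terms in hand, linear independence follows by a triangularity argument, and this is the delicate part: the $z$-expansions of solutions attached to different slopes overlap, so one cannot compare leading terms naively. Given a vanishing combination $\sum b_{c,j,i}y_{c,j,i}=0$, I would first split it according to $c$, which is legitimate because the $e_{c'}$ are linearly independent over $\Hahn[\logm]$. Fixing $c$ and assuming some $b_{c,j,i}\neq 0$, let $\mu_{j_0}$ be the largest slope occurring and $i_0$ the largest index occurring for that slope. Any solution of strictly smaller slope has all its $z$-valuations $>-\mu_{j_0}$ and thus does not reach the monomial $z^{-\mu_{j_0}}e_c\logm^{i_0}$; among the solutions $y_{c,j_0,i'}$ the power $\logm^{i_0}$ appears at valuation $-\mu_{j_0}$ only when $i'\ge i_0$, hence only for $i'=i_0$ by maximality. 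Thus this monomial survives with coefficient a nonzero multiple of $b_{c,j_0,i_0}$, contradicting the relation. This shows the $y_{c,j,i}$ are linearly independent and hence form a basis of solutions; the exact order-$s_{c,j}$ vanishing of $r_{c,j}$ at $\lambda=c$ recorded in \eqref{eq:cld} is precisely what makes this comparison work.
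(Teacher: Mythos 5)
Your proof is correct, and it is essentially the paper's own argument: the paper merely cites [Roq24, Theorem 12] for this statement, but proves the variant Theorem \ref{th:Roques_variant} by exactly the route you take — solutions via Leibniz and the vanishing of $\partial_\lambda^{k}\bigl((\lambda-c)^{s_{c,j}+m_{c,j}}\bigr)$ at $\lambda=c$ for $k<s_{c,j}+m_{c,j}$, then linear independence by splitting over the $e_{c}$'s and a triangular comparison driven by the valuation facts \eqref{eq:val_derivee} coming from \eqref{eq:cld}. Your extraction of the single extremal monomial $z^{-\mu_{j_0}}e_c\logm^{i_0}$ is just a repackaging of the paper's decreasing double induction on the slope index and the $\logm$-degree.
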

Actually, since we derive at most $s_{c,j}+m_{c,j}-1$ times $g_{c,j}(\lambda,z)e_\lambda$ and we specialize at $\lambda=c$, one does not need an equality in \eqref{eq:g_cj} to obtain a basis of solutions. 

\begin{thm}\label{th:Roques_variant}
    Assume that for any $j$ and any exponent $c$ attached to the slope $\mu_j$ we have a Hahn series $h_{c,j}(\lambda,z) \in \Hahn_{\K[[\lambda-c]]^{\rm rat}}$ which has valuation $-\mu_j$ in $z$, such that $\cld h_{c,j}$ has a $(\lambda-c)$-adic valuation equal to $s_{c,j}$, and such that
    \begin{equation}\label{eq:h_cj}
    L(h_{c,j}(\lambda)e_\lambda) \in  (\lambda - c)^{s_{c,j}+m_{c,j}}e_\lambda \Hahn_{\K[[\lambda-c]]^{\rm rat}}\,.
    \end{equation}
    Then the functions
        $$
   y_{c,j,i}= \ev_{\lambda=c}\left( \partial_\lambda^{s_{c,j}+i}(h_{c,j}(\lambda,z)e_\lambda) \right)
    $$
    where, for each $(c,j)$, $0\leq i <m_{c,j}$, form a basis of solutions of \eqref{eq:Mahler_at_0}.
\end{thm}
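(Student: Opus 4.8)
The plan is to deduce Theorem \ref{th:Roques_variant} from Roques' Theorem \ref{thm:Roques} by showing that the weaker hypotheses on the $h_{c,j}$ suffice to produce the \emph{same} specialized values $y_{c,j,i}$ as the canonical series $g_{c,j}$, or at least an equivalent basis. The key observation is that the only data from $g_{c,j}(\lambda,z)e_\lambda$ that actually enters the construction of $y_{c,j,i}$ are its Taylor coefficients in $(\lambda-c)$ up to order $s_{c,j}+m_{c,j}-1$, since we differentiate $\partial_\lambda$ a total of $s_{c,j}+i$ times with $i<m_{c,j}$ and then evaluate at $\lambda=c$. Thus I would first argue that the conclusion of the theorem depends only on the class of $h_{c,j}(\lambda,z)e_\lambda$ modulo $(\lambda-c)^{s_{c,j}+m_{c,j}}\Hahn_{\K[[\lambda-c]]^{\rm rat}}$.

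First I would fix a pair $(c,j)$ and set $w := (g_{c,j}-h_{c,j})(\lambda,z)$. From \eqref{eq:g_cj} and hypothesis \eqref{eq:h_cj} we get
\begin{equation*}
L(w\,e_\lambda) = L(g_{c,j}e_\lambda) - L(h_{c,j}e_\lambda) \in (\lambda-c)^{s_{c,j}+m_{c,j}}e_\lambda\,\Hahn_{\K[[\lambda-c]]^{\rm rat}},
\end{equation*}
so, dividing by $e_\lambda$ via \eqref{eq:g_cj-bis}, one has $L_\lambda(w) \in (\lambda-c)^{s_{c,j}+m_{c,j}}\Hahn_{\K[[\lambda-c]]^{\rm rat}}$. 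The next step is to conclude that $w$ itself lies in $(\lambda-c)^{s_{c,j}+m_{c,j}}\Hahn_{\K[[\lambda-c]]^{\rm rat}}$. This is where the uniqueness statement surrounding \eqref{eq:g_cj} is essential: for each power $(\lambda-c)^k$ with $k<s_{c,j}+m_{c,j}$, the coefficient of $w$ in the $(\lambda-c)$-adic expansion must solve a homogeneous inhomogeneous-type recursion whose only admissible solution—given the valuation constraint $\val_z w \geq -\mu_j$ and that $\cld h_{c,j}$ matches $\cld g_{c,j}$ to $(\lambda-c)$-adic order $s_{c,j}$—is zero. Concretely, I expect to invoke Lemma \ref{lem:solutions_inhomo_Puiseux} (or its Hahn-series analogue) coefficient by coefficient in $(\lambda-c)$ to show that a solution of $L_\lambda(w)=0 \bmod (\lambda-c)^{s_{c,j}+m_{c,j}}$ with the prescribed valuation and leading-term vanishing is forced to vanish to the same order.

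Once $w \equiv 0 \bmod (\lambda-c)^{s_{c,j}+m_{c,j}}$ is established, the final step is purely formal: for any $0\leq i<m_{c,j}$ we have $s_{c,j}+i < s_{c,j}+m_{c,j}$, so
\begin{equation*}
\ev_{\lambda=c}\!\left(\partial_\lambda^{s_{c,j}+i}(w\,e_\lambda)\right)=0,
\end{equation*}
because differentiating a multiple of $(\lambda-c)^{s_{c,j}+m_{c,j}}$ fewer than $s_{c,j}+m_{c,j}$ times and evaluating at $\lambda=c$ yields zero (here one uses that $e_\lambda$ and its $\partial_\lambda$-derivatives specialize to the finite quantities $\logm_{c,k}$ of \eqref{eq:def_log_ck}, so the product rule produces only such terms). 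Hence $\ev_{\lambda=c}(\partial_\lambda^{s_{c,j}+i}(h_{c,j}e_\lambda)) = \ev_{\lambda=c}(\partial_\lambda^{s_{c,j}+i}(g_{c,j}e_\lambda)) = y_{c,j,i}$ exactly, and the family coincides with the basis of Theorem \ref{thm:Roques}.

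The main obstacle I anticipate is the middle step: upgrading $L_\lambda(w)\in(\lambda-c)^{s_{c,j}+m_{c,j}}\Hahn$ to $w\in(\lambda-c)^{s_{c,j}+m_{c,j}}\Hahn$. This requires a genuine uniqueness/regularity argument in the $(\lambda-c)$-adic filtration, controlling how the operator $L_\lambda$ interacts with vanishing order at $\lambda=c$; the subtlety is that $\cld L_\lambda(z^{-\mu_j})=\chi_j(\lambda)$ from Lemma \ref{lem:chi} vanishes at $\lambda=c$ precisely to order $m_{c,j}$, so the recursion determining the lower-order $(\lambda-c)$-coefficients of $w$ is not invertible in a naive way and one must track the interplay between the $z$-valuation shift and the $(\lambda-c)$-order carefully. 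A clean alternative, which I would use if the direct approach proves delicate, is to bypass the identification with $g_{c,j}$ entirely and instead verify directly, using Theorem \ref{thm:Roques} as a source of a known basis together with a dimension count over $\K$, that the $m$ functions $y_{c,j,i}$ built from the $h_{c,j}$ are solutions of \eqref{eq:Mahler_at_0} and linearly independent; linear independence would follow by comparing their leading behaviour in $z$ and in the $\logm_{c,k}$, exploiting that $\cld h_{c,j}$ has $(\lambda-c)$-adic valuation exactly $s_{c,j}$.
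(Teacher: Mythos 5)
Your main route has a genuine gap: the middle step, upgrading $L_\lambda(w)\in(\lambda-c)^{s_{c,j}+m_{c,j}}\Hahn_{\K[[\lambda-c]]^{\rm rat}}$ to $w\in(\lambda-c)^{s_{c,j}+m_{c,j}}\Hahn_{\K[[\lambda-c]]^{\rm rat}}$, is not merely delicate but false under the stated hypotheses. The theorem only assumes that $\cld h_{c,j}$ has $(\lambda-c)$-adic valuation $s_{c,j}$; it does \emph{not} assume $\cld h_{c,j}=r_{c,j}(\lambda)$, so your parenthetical ``$\cld h_{c,j}$ matches $\cld g_{c,j}$ to $(\lambda-c)$-adic order $s_{c,j}$'' is not available. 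Concretely, $h_{c,j}=2g_{c,j}$ (or $h_{c,j}=(1+(\lambda-c))g_{c,j}$) satisfies every hypothesis, yet $w=-g_{c,j}$ has $\cld$ of $(\lambda-c)$-adic valuation $s_{c,j}<s_{c,j}+m_{c,j}$, so $w$ is not divisible by $(\lambda-c)^{s_{c,j}+m_{c,j}}$ and the resulting $y_{c,j,i}$ do \emph{not} coincide with those of Theorem \ref{thm:Roques}. A second obstruction to any uniqueness argument here is that the constant term $L_{c,0}=L_c$ of the $(\lambda-c)$-expansion of $L_\lambda$ has nontrivial kernel in $\Hahn$ whenever $c$ is an exponent (e.g. $L_c(g_{c,j,0})=0$ with $g_{c,j,0}\neq 0$), so the coefficientwise recursion you propose does not force vanishing. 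In short, the functions produced by the $h_{c,j}$ form \emph{a} basis, not \emph{the} basis of Theorem \ref{thm:Roques}, and no identification with $g_{c,j}$ modulo $(\lambda-c)^{s_{c,j}+m_{c,j}}$ can be expected.

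Your fallback in the last sentence is the correct strategy, and it is the one the paper follows; but it is where all the work lies and you leave it unproved. That the $y_{c,j,i}$ are solutions is immediate from \eqref{eq:h_cj} since $s_{c,j}+i<s_{c,j}+m_{c,j}$. For linear independence one expands $y_{c,j,i}$ in the $\logm_{c,k}$ via the Leibniz rule, uses the linear independence of the $\logm_{c,k}$ over $\Hahn$ to reduce a putative relation to identities \eqref{eq:lin_dep_h} in $\Hahn$, and then runs a double (decreasing) induction, first on the slope index $j$ and then on the power $k$ of $\logm$, killing the coefficients one at a time by comparing $z$-valuations: the hypotheses on $\val h_{c,j}$ and on the $(\lambda-c)$-adic valuation of $\cld h_{c,j}$ yield exactly the trichotomy \eqref{eq:val_derivee} needed to isolate a single term of valuation $-\mu_{j_0}$ at each step. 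Since there are $\sum_{c,j}m_{c,j}=m$ such functions and the solution space has dimension $m$ over $\K$, they form a basis. Without this valuation bookkeeping the claim of linear independence is unsupported.
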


\begin{proof}
Since $\ev_{\lambda=c}$ and $\partial_\lambda$ commute with $\puip$, the fact that $y_{c,j,i}$ are solutions follows from \eqref{eq:h_cj} and the fact that, for any $\theta(\lambda,z) \in \Hahn_{\K[[\lambda-c]]^{\rm rat}}$,
$$
\ev_{\lambda=c}\left(\partial_\lambda^{s_{c,j}+i} \left( (\lambda - c)^{s_{c,j}+m_{c,j}}e_\lambda\theta(\lambda,z) \right)\right) = 0\,,
$$
when $0 \leq i < m_{c,j}$. Let us prove that the functions are linearly independent over $\K$. Recall that we set $\logm_{c,k}=\ev_{\lambda = c}(\partial_\lambda^ke_\lambda)/k!$. 
The product formula for the derivative implies that
$$
y_{c,j,i} = \sum_{k=0}^{s_{c,j}+i}  \frac{(s_{c,j}+i)!}{(s_{c,j}+i-k)!}\ev_{\lambda=c}\left( \partial_\lambda^{s_{c,j}+i-k}(h_{c,j}(\lambda,z))\right) \logm_{c,k}.
$$
Assume that there exist $\tau_{c,j,i} \in \K$  such that $\sum_{c,j,i} \tau_{c,j,i}y_{c,j,i}=0$. From \cite{Ro23}, the $\logm_{c,k}$'s are linearly independent over $\Hahn$. Thus, for all $c,k$,
\begin{equation}\label{eq:lin_dep_h}
\sum_{j} \sum_{i= \max\{0,k-s_{c,j}\}}^{m_{c,j}-1} \tau_{c,j,i}\frac{(s_{c,j}+i)!}{(s_{c,j}+i-k)!}\ev_{\lambda=c}\left( \partial_\lambda^{s_{c,j}+i-k}(h_{c,j}(\lambda,z))\right)=0\,.
\end{equation}
Furthermore, by assumption on $\val h_{c,j}$ and $\cld h_{c,j}$, 
\begin{equation}\label{eq:val_derivee}
\val \ev_{\lambda=c}\left( \partial_\lambda^{n}(h_{c,j}(\lambda,z))\right) \left\{\begin{array}{ll}
\geq  -\mu_j& \text{ if } n > s_{c,j}, \\ = -\mu_j & \text{ if } n=s_{c,j}, \\ > -\mu_j & \text{ if }n<s_{c,j}\, . \end{array}\right. 
\end{equation}
Fix an exponent $c$. We recall that the indices $(c,j,i)$ of the numbers $\tau_{c,j,i}$ are such that $0\leq j\leq \kappa$ and $0\leq i<m_{c,j}$. We extend this notation for $j=\kappa +1$ setting $\tau_{c,\kappa+1,i}=0$ for all $i$.
We prove by decreasing induction on $j\in\lbrace 1, \ldots, \kappa+1\rbrace$ the property 
$$
(\mathcal{P}_{c,j}) : \quad 
\forall k\in \lbrace 0, \ldots, m_{c,j} -1 \rbrace, \hspace{0.15cm}\tau_{c,j,k}=0 \, .
$$
\textit{Proof of the base case $j=\kappa+1$.} The property $(\mathcal{P}_{c,\kappa+1})$ is satisfied by definition.

\noindent\textit{Proof of the inductive step.} Fix a $j_0\leq \kappa$ and suppose that $(\mathcal{P}_{c,j})$ is satisfied when $j>j_0$. Let us prove $(\mathcal{P}_{c,j_0})$ by decreasing induction on $k$. 

\textit{Proof of the base case $k=m_{c,j_0}-1$}. By \eqref{eq:val_derivee}, any term of \eqref{eq:lin_dep_h} but the one corresponding to $i=m_{c,j_0}-1$ has valuation greater than $-\mu_{j_0}$, since the $\tau_{c,j,k}$'s are null when $j>j_0$. It follows that $\tau_{c,j_0,m_{c,j_0}-1}=0$. 

\textit{Proof of the inductive step}. Let $k<m_{c,j_0}-1$. Suppose that $\tau_{c,j_0,l}=0$ for any $l$ with $k<l\leq m_{c,j_0}-1$. Then, \eqref{eq:lin_dep_h} can be rewritten as
\begin{multline*}
\sum_{j<j_0}\sum_{i= \max\{0,k-s_{c,j}\}}^{m_{c,j}-1} \tau_{c,j,i}\frac{(s_{c,j}+i)!}{(s_{c,j}+i-k)!}\ev_{\lambda=c}\left( \partial_\lambda^{s_{c,j}+i-k}(h_{c,j}(\lambda,z))\right)
\\ + \sum_{i= \max\{0,k-s_{c,j_0}\}}^{k} \tau_{c,j_0,i}\frac{(s_{c,j_0}+i)!}{(s_{c,j_0}+i-k)!}\ev_{\lambda=c}\left( \partial_\lambda^{s_{c,j_0}+i-k}(h_{c,j_0}(\lambda,z))\right)
=0\,.
\end{multline*}
Then, all the terms but the one corresponding to $j=j_0$ and $i=k$ have valuation greater than $-\mu_{j_0}$. It follows that $\tau_{c,j_0,k}=0$. 

By induction on $k$, we have proved that $(\mathcal{P}_{c,j_0})$ is satisfied. 

By induction on $j$, $(\mathcal{P}_{c,j})$ holds for any $j$. Since $c$ was chosen arbitrarily, we eventually get that $\tau_{c,j,k}=0$ for any $c,j,k$, which ends the proof.
\end{proof}

\section{A necessary condition: proof of Theorem \ref{thm:slopes}}\label{sec:slopes}

Theorem \ref{thm:slopes} gives a necessary condition for a $p$-Mahler equation with $a_i \in \K[[z]]$ to be regular singular at $0$: the denominators of the slopes of its Newton polygon must be coprime with $p$.  Before proving it, we first recall the following preliminary fact.

\begin{lem}\label{lem:solution_denom}
    Consider a Mahler equation of the form \eqref{eq:Mahler_at_0} for which $a_i \in \K[[z]]$ and which is regular singular at $0$. Let $d$ be the least common multiple of the denominators of the slopes of $\mathcal N(L)$ which are relatively prime with $p$. Then, any solution of \eqref{eq:Mahler_at_0} is of the form
    $$
    y=\sum_{c,j} h_{c,j}e_c\logm^j
    $$
    where the sum has finite range and where $h_{c,j} \in \K((z^{1/d}))$.
\end{lem}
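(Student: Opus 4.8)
The plan is to feed the structural description of solutions from Proposition~\ref{prop:RS_equations} into the equation and reduce the statement to a support bound for the twisted operators $L_c$. Since the equation is regular singular, Proposition~\ref{prop:RS_equations} gives any solution as $y=\sum_{c,k}h_{c,k}e_c\logm^k$ (finite range) with $h_{c,k}\in\Puis$; the task is only to bound the ramification of these Puiseux series by $d$. First I would substitute this expression into $Lf=0$, use $\puip(e_c)=ce_c$, $\puip(\logm)=\logm+1$, and expand $(\logm+i)^k$. As the monomials $e_c\logm^k$ are $\Hahn$-linearly independent (from the construction of $\Rspe$ and \cite{Ro23}), collecting the coefficient of each $e_c\logm^l$ yields, for every $c$ and $l$,
\[
\sum_{k\geq l}\binom{k}{l}L_c^{(k-l)}(h_{c,k})=0,\qquad L_c^{(r)}:=\sum_{i=0}^m a_i c^i i^r\puip^i .
\]
The top log-power gives $L_c(h_{c,K})=0$, and each lower one gives $L_c(h_{c,l})=-\sum_{k>l}\binom{k}{l}L_c^{(k-l)}(h_{c,k})$. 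Since every $L_c^{(r)}$ preserves $\K((z^{1/d}))$ (because $a_i\in\K[[z]]$ and $\puip$ sends $\tfrac1d\Z$ into $\tfrac pd\Z\subseteq\tfrac1d\Z$), a downward induction on $l$ reduces everything to the following claim: \emph{if $h\in\Puis$ solves $L_c(h)=R$ with $R\in\K((z^{1/d}))$ (possibly $R=0$), then $h\in\K((z^{1/d}))$.}

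I would prove this claim in two steps. \emph{Step one} shows the ramification of $h$ is coprime to $p$. Fix a prime $q\mid p$ and let $a$ be the maximal $q$-adic valuation of the denominators of the exponents of $h$ (finite, as $h$ is Puiseux); suppose $a\geq 1$. The point is that $\puip$ strictly contracts this quantity: for $i\geq 1$ the exponents of $a_i c^i\puip^i(h)$ have the form $p^i\gamma+(\text{integer})$, whose denominator has $q$-adic valuation $\max(0,v_q(\mathrm{denom}\,\gamma)-iv_q(p))<a$. Hence the part of $L_c(h)$ whose exponents have denominator with $q$-adic valuation exactly $a$ comes only from the $i=0$ term and equals $a_0 h_{[a]}$, where $h_{[a]}$ denotes the corresponding part of $h$. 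Because $R\in\K((z^{1/d}))$ has denominators coprime to $p$, that part of $R$ vanishes, so $a_0 h_{[a]}=0$; as $a_0\neq 0$ this forces $h_{[a]}=0$, contradicting maximality. Thus $\gcd(\mathrm{ram}(h),p)=1$.

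\emph{Step two} shows the ramification divides $d$. Suppose not, and let $\gamma^*$ be the smallest exponent of $h$ whose denominator $v$ does not divide $d$; by Step one $\gcd(v,p)=1$. The truncation $h_<$ of $h$ below $\gamma^*$ lies in $\K((z^{1/d}))$, so $L_c(\tilde h)=R-L_c(h_<)\in\K((z^{1/d}))$, where $\tilde h=h-h_<$ has valuation $\gamma^*$. Now $-\gamma^*$ cannot be a slope of $\mathcal N(L)=\mathcal N(L_c)$ (Remark~\ref{rem_Lc}): such a slope would have denominator $v$, which is coprime to $p$ and hence divides $d$ by the very definition of $d$, contradicting $v\nmid d$. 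Therefore, by Lemma~\ref{lem:chi} (specialized at $\lambda=c$), $\cld L_c(z^{\gamma^*})\neq 0$ and the leading term of $\tilde h$ controls the valuation, giving $\val L_c(\tilde h)=\val a_{i_0}+p^{i_0}\gamma^*$ for the unique active index $i_0$, which must satisfy $i_0\geq 1$ (otherwise $\gamma^*\in\tfrac1d\Z$). This valuation lies in $\tfrac1d\Z$, being that of a nonzero element of $\K((z^{1/d}))$; the degenerate possibility $L_c(\tilde h)=0$ is excluded because Lemma~\ref{lem:slopes_valuation} would then make $\gamma^*$ the opposite of a slope, hence force $v\mid d$. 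Consequently $p^{i_0}\gamma^*\in\tfrac1d\Z$, so $v\mid d\,p^{i_0}$; combined with $\gcd(v,p)=1$ this yields $v\mid d$, a contradiction. Hence $h\in\K((z^{1/d}))$.

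Applying the claim with $R=0$ to the top log-power and then inductively to the lower ones gives $h_{c,k}\in\K((z^{1/d}))$ for all $c,k$, which is the assertion. I expect \emph{Step one} to be the main obstacle: the crucial mechanism is that isolating the maximal-$q$-denominator stratum turns the equation into the relation $a_0 h_{[a]}=0$, so that $a_0\neq 0$ together with the $q$-adic contraction property of $\puip$ rules out any ramification by primes dividing $p$. Once this is in hand, Step two is a clean ``least bad exponent'' descent that only uses Lemmas~\ref{lem:slopes_valuation} and \ref{lem:chi} and the definition of $d$, and in particular never invokes Theorem~\ref{thm:slopes}, avoiding any circularity.
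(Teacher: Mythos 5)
Your proof is correct, but it follows a genuinely different route from the paper's. The paper disposes of this lemma in two sentences by citing external results from the authors' earlier work \cite{FP} (Corollary 2.4 there for the general shape of solutions of a regular singular system, and the proof of Lemma 2.3 there for the identification of $d$ with the lcm of the relevant slope denominators), merely noting that the arguments of \cite{FP} carry over from $\Q(z)$ to $\K((z))$. You instead give a self-contained argument: starting from Proposition \ref{prop:RS_equations}, you extract the coefficient of each $e_c\logm^l$ (using the $\Hahn$-linear independence of these monomials) to obtain the triangular system $\sum_{k\geq l}\binom{k}{l}L_c^{(k-l)}(h_{c,k})=0$, and then prove the key claim that $L_c(h)\in\K((z^{1/d}))$ forces $h\in\K((z^{1/d}))$ by a two-stage descent on exponent denominators. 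Both stages are sound: the $q$-adic stratification argument correctly exploits that $\puip^i$ ($i\geq 1$) strictly decreases the $q$-adic valuation of denominators while the $i=0$ term does not, so that $a_0h_{[a]}=0$; and the ``least bad exponent'' step correctly uses that $-\gamma^*$ cannot be a slope, whence a unique index $i_0$ governs $\val L_c(\tilde h)$. The one point you should make explicit is the assertion that ``the leading term of $\tilde h$ controls the valuation'': this needs the strict monotonicity of $u\mapsto \min_i(\val a_i+p^iu)$ (each term is strictly increasing in $u$), which is an easy extension of the computation in Lemma \ref{lem:valuation_L(f)} but is not literally covered by Lemma \ref{lem:chi}, which only treats monomials. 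Your approach buys independence from \cite{FP} and makes transparent exactly why only slope denominators coprime with $p$ can contribute to the ramification of solutions; you are also right that it avoids Theorem \ref{thm:slopes} and hence any circularity. The paper's approach is shorter but opaque without the external reference.
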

\begin{proof}
The lemma follows from \cite[Corollary 2.4]{FP} applied to the companion system associated with \eqref{eq:Mahler_at_0}. The fact that the integer $d$ can be chosen as the least common multiple of the denominators of the slopes of $\mathcal N(L)$ follows from the proof of \cite[Lemma 2.3]{FP}. Note that, in \cite{FP} the equation is supposed to have coefficients in $\Q(z)$, however, the proof works all the same over $\K((z))$. 
\end{proof}

\begin{proof}
    [Proof of Theorem \ref{thm:slopes}] 
    Suppose that $\mu_j$ is a slope whose denominator is not coprime with $p$, for some $j$. Let $c$ be an exponent attached to $\mu_j$ and let 
   $$
   f(z)= \ev_{\lambda=c}\left( \partial_\lambda^{s_{c,j}}(g_{c,j}(\lambda,z)e_\lambda) \right),
   $$
   where the $g_{c,j}$'s are defined in Section \ref{sec:Frobenius}.
   Using the product formula, we obtain,
     $$ f(z)=\sum_{k=0}^{s_{c,j}}\frac{s_{c,j}!}{(s_{c,j}-k)!} \ev_{\lambda=c}\left(\partial_\lambda^{s_{c,j}-k}(g_{c,j}(\lambda,z))\right)\logm_{c,k}\, ,
   $$
where we let $\logm_{c,k}=\ev_{\lambda=c}(\partial_\lambda^k e_\lambda/k!)$.
   Let $f_k(z)=\frac{s_{c,j}!}{(s_{c,j}-k)!} \ev_{\lambda=c}(\partial_\lambda^{s_{c,j}-k}(g_{c,j}(\lambda,z)))$ so that $f=\sum_{k=0}^{s_{c,j}} f_k\logm_{c,k}$. From \eqref{eq:cld},  $\val (f_{0})=-\mu_j$ and $\val(f_k)>-\mu_j$ when $0<k\leq s_{c,j}$.
   From \eqref{eq:def_log_ck}, $\logm_{c,0} = e_c$ and for all $k\geq 1$, $\logm_{c,k} $ is a linear combination over $\K$ of $e_c\ell,e_c\ell^2,\ldots,e_c\ell^k$. Thus,
   $$ f=f_0e_c+h_1e_c\logm+\cdots+h_{s_{c,j}}e_c\logm^{s_{c,j}}
   $$
   where $h_1,\ldots,h_{s_{c,j}}$ are linear combinations over $\K$ of $f_1,\ldots,f_k$. Since $f_0$ has valuation $-\mu_j$ and the denominator of $-\mu_j$ is not relatively prime with $p$, it follows from Lemma \ref{lem:solution_denom} that the equation is not regular singular at $0$.
   \end{proof}

\section{A necessary and sufficient condition}\label{sec:RSiff}

One could hope that a necessary and sufficient condition for the equation to be regular singular at $0$ would be for the $g_{c,j}$'s to be Puiseux. This is not the case (see Remark \ref{rem:g_cj}). However, we prove in this section that a necessary and sufficient condition for the equation to be regular singular at $0$ is that the first coefficients of each $g_{c,j}\in \Hahn[[\lambda-c]]$ are Puiseux series. Let us fix some notations and definitions. 

\begin{nota*}
Let $c \in \K$, $f\in \Hahn_{\K[[\lambda-c]]}$ and let $s$ be an integer. We let 
$f \mod (\lambda - c)^s$
denote the remainder of the division of $f$ by $(\lambda-c)^s$ in $ \Hahn_{\K[[\lambda-c]]}$, that is, if  
$f=\sum_{\gamma\in\mathbb Q}\sum_{n \geq 0} f_{\gamma,n}z^\gamma (\lambda-c)^n$, then
$$
f \mod (\lambda - c)^s=\sum_{\gamma\in\mathbb Q}\sum_{n =0}^{s-1} f_{\gamma,n}z^\gamma (\lambda-c)^n\, .
$$
It is a polynomial with degree at most $s-1$ in $\lambda$.

Given an element $f \in  \Hahn[\lambda]$, we let ${\rm val}_{\lambda-c}f$ denote the valuation of $f$ as a polynomial in $\lambda-c$, with coefficients in $\Hahn$. We also let $\deg_{\lambda} f$ denote the degree of $f$ as a polynomial in $\lambda$ (which is the same as the degree of $f$ as a polynomial in $\lambda-c$). 
\end{nota*}

\begin{defi}\label{def:restriction}
We consider the following assumptions about Mahler equations of the form \eqref{eq:Mahler_at_0}:
\begin{enumerate}[label=($\mathcal A_\arabic*$)]
    \item\label{A1} $a_0,\ldots,a_m$ belong to $\K[[z]]$;
    \item\label{A2} the denominators of the slopes of the Newton polygon associated with \eqref{eq:Mahler_at_0} are coprime with $p$.
\end{enumerate}
\end{defi}
Our study of Mahler equations can always be reduced so that \ref{A1} holds (see Remark~\ref{rem:PowerSeries}). Furthermore, it follows from Theorem \ref{thm:slopes} that \ref{A2} is a necessary condition for the equation to be regular singular at $0$. Thus, these two assumptions do not limit the scope of the following results.

In this section, we let \eqref{eq:Mahler_at_0} be a $p$-Mahler equation satisfying \ref{A1} and \ref{A2} and $L$ be the associated operator given by \eqref{eq:form_operateur}. Let $d$ denote the least common multiple of the denominators of the slopes $\mu_1<\cdots<\mu_\kappa$ of the associated Newton polygon. Recall that $r_{c,j}(\lambda)$ is defined for any pair $(c,j)$ by \eqref{eq:cld}.
\begin{defi}
    \label{defi:quasi_solution}
     Let $1\leq j \leq \kappa$ be an integer and $c$ be an exponent attached to the slope $\mu_j$. We say that $f_{c,j}(\lambda,z)=\sum_v f_{c,j,v}(\lambda)z^v \in \K[z^{\pm \frac{1}{d}},\lambda]$ is a \textit{truncated solution associated with $(c,j)$} if the following five conditions hold:
     \medskip
     
      \begin{enumerate}[label=($\mathcal C_\arabic*$)]
\item\label{C1} $\val \left(L_\lambda(f_{c,j}) \mod (\lambda - c)^{s_{c,j}+m_{c,j}}\right) > \val(a_0) -\mu_1$,
\medskip

\item\label{C2} $\supp f_{c,j} \subset \frac{1}{d}\Z \cap [-\mu_j, -\mu_1]$,
\medskip
     
\item\label{C3} $\cld f_{c,j} = r_{c,j}(\lambda) \mod (\lambda -c)^{s_{c,j}+m_{c,j}}$,

     \medskip

\item\label{C4}$\val f_{c,j}=-\mu_j$,
\medskip

\item\label{C5}$\deg_{\lambda} f_{c,j}\leq s_{c,j}+m_{c,j}-1$.
\end{enumerate}
We say that a truncated solution is \textit{reduced} if the following additional condition holds:
\begin{enumerate}[label=($\mathcal C_\arabic*$)]
    \setcounter{enumi}{5}
\item\label{C6} $\deg_\lambda f_{c,j,-\mu_k}(\lambda)\leq s_{c,j}+m_{c,j}-m_{c,k}-1$, for every $k<j$.
    \end{enumerate}
\end{defi}
The following theorem is the core of Algorithm \ref{Algo:main}.

\begin{thm}\label{thm_RS_si_solutions_tronquees}
  Let \eqref{eq:Mahler_at_0} be a $p$-Mahler equation satisfying \emph{\ref{A1}} and \emph{\ref{A2}}. The following are equivalent :
  \begin{itemize}
      \item[$({\rm i})$] the equation is regular singular at $0$;
      \item[$({\rm ii})$] for any $j$ and any exponent $c$ attached to $\mu_j$ there exists a truncated solution $f_{c,j}$ associated with $(c,j)$;
      \item[$({\rm iii})$] for any $j$ and any exponent $c$ attached to $\mu_j$ there exists a reduced truncated solution $f_{c,j}$ associated with $(c,j)$.
  \end{itemize}
\end{thm}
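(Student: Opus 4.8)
The plan is to prove the cycle $({\rm i})\Rightarrow({\rm ii})\Rightarrow({\rm iii})\Rightarrow({\rm i})$, so that each implication isolates one ingredient: extracting Puiseux data from the Frobenius series $g_{c,j}$, a triangular normalization, and reconstructing a genuine Puiseux basis. Throughout I write $N_{c,j}=s_{c,j}+m_{c,j}$. For $({\rm i})\Rightarrow({\rm ii})$ I would first show that $g_{c,j}\bmod (\lambda-c)^{N_{c,j}}$ has all its $(\lambda-c)$-coefficients in $\Puis$. By Theorem \ref{thm:Roques} the $y_{c,j,i}$ form a basis of solutions, and by regular singularity together with Proposition \ref{prop:RS_equations} each one is a Puiseux combination of the monomials $e_c\logm^l$. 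Expanding $y_{c,j,i}$ with the product rule as in the proof of Theorem \ref{th:Roques_variant}, rewriting $\logm_{c,k}$ via \eqref{eq:def_log_ck}, and using that the $\logm_{c,k}$ are $\K$-linearly independent over $\Hahn$, a triangular inversion forces each $\ev_{\lambda=c}(\partial_\lambda^{n}g_{c,j})$ to lie in $\Puis$ for $0\le n\le N_{c,j}-1$, while Lemma \ref{lem:solution_denom} confines its support to $\frac1d\Z$. Hence the $z$-truncation $f_{c,j}$ of $g_{c,j}\bmod (\lambda-c)^{N_{c,j}}$ to the window $[-\mu_j,-\mu_1]$ lies in $\K[z^{\pm1/d},\lambda]$. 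Conditions \ref{C2}, \ref{C4}, \ref{C5} are read off from $\val g_{c,j}=-\mu_j$ and $\deg_\lambda<N_{c,j}$, and \ref{C3} follows from $\cld g_{c,j}=r_{c,j}$. For \ref{C1}, writing $g_{c,j}\bmod (\lambda-c)^{N_{c,j}}=f_{c,j}+Q$ with $\val Q>-\mu_1$, one has $L_\lambda(f_{c,j})\equiv -L_\lambda(Q)\pmod{(\lambda-c)^{N_{c,j}}}$ since $L_\lambda(g_{c,j})\in (\lambda-c)^{N_{c,j}}\Puis_{\K(\lambda)}$ by \eqref{eq:g_cj-bis}, and Lemma \ref{lem:valuation_L(f)} gives $\val L_\lambda(Q)=\val a_0+\val Q>\val a_0-\mu_1$.

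For $({\rm ii})\Rightarrow({\rm iii})$ I would normalize a given family of truncated solutions by a finite triangular elimination. Fixing $c$ and $j$, I would run $k$ from $j-1$ down to $1$ and, whenever $\deg_\lambda f_{c,j,-\mu_k}$ exceeds $s_{c,j}+m_{c,j}-m_{c,k}-1$, replace $f_{c,j}$ by $\big(f_{c,j}-p_k(\lambda)f_{c,k}\big)\bmod (\lambda-c)^{N_{c,j}}$. The factor $p_k$ is chosen with $\operatorname{val}_{\lambda-c}p_k=N_{c,j}-N_{c,k}$ and $\deg_\lambda p_k=N_{c,j}-1-s_{c,k}$, so that its $m_{c,k}$ free coefficients cancel the top $m_{c,k}$ coefficients of $f_{c,j,-\mu_k}$; this is solvable because $\cld f_{c,k}$ has $(\lambda-c)$-adic valuation exactly $s_{c,k}$ by \ref{C3} and \eqref{eq:cld}, which makes the relevant linear system triangular. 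Since $f_{c,k}$ is supported in $[-\mu_k,-\mu_1]$, the subtraction modifies $f_{c,j}$ only at $z$-degrees in that window, hence leaves the already-cleared levels $z^{-\mu_{k'}}$ with $k'>k$ untouched. Preservation of \ref{C1} is the delicate point: one uses $L_\lambda(p_kf_{c,k})=p_kL_\lambda(f_{c,k})$ (as $\puip$ fixes $\lambda$), and the identity $\operatorname{val}_{\lambda-c}p_k+N_{c,k}=N_{c,j}$ kills the uncontrolled $(\lambda-c)$-tail of $L_\lambda(f_{c,k})$ modulo $(\lambda-c)^{N_{c,j}}$, so the valuation estimate \ref{C1} survives; \ref{C2}--\ref{C5} are clearly maintained by the final reduction modulo $(\lambda-c)^{N_{c,j}}$.

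For $({\rm iii})\Rightarrow({\rm i})$ I would complete each reduced truncated solution to a genuine Puiseux quasi-solution and invoke Theorem \ref{th:Roques_variant}. Setting $w=L_\lambda(f_{c,j})\bmod(\lambda-c)^{N_{c,j}}$, condition \ref{C1} gives $\val w>\val a_0-\mu_1$, so Lemma \ref{lem:solutions_inhomo_Puiseux} applied to $L_\lambda$ (which shares the Newton polygon and least slope $\mu_1$ of $L$) furnishes $\delta\in\Puis_{\K(\lambda)}$ with $L_\lambda(\delta)=-w$ and $\val\delta>-\mu_1$; since its recursion divides only by $\cld a_0\in\K^\times$, $\delta$ has no pole at $\lambda=c$. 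Then $h_{c,j}:=f_{c,j}+\delta$ satisfies $L_\lambda(h_{c,j})\in(\lambda-c)^{N_{c,j}}\Puis_{\K[[\lambda-c]]^{\rm rat}}$, that is \eqref{eq:h_cj}, while $\val\delta>-\mu_1\ge-\mu_j$ together with \ref{C3} and \ref{C4} forces $\val h_{c,j}=-\mu_j$ and $\operatorname{val}_{\lambda-c}\cld h_{c,j}=s_{c,j}$. Theorem \ref{th:Roques_variant} then yields a basis of solutions which, being built from the Puiseux series $h_{c,j}$, are Puiseux combinations of the $e_c\logm^l$ by \eqref{eq:def_log_ck}, so Proposition \ref{prop:RS_equations} gives regular singularity.

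The \emph{main obstacle} is the bookkeeping in $({\rm ii})\Rightarrow({\rm iii})$: one must simultaneously match the $(\lambda-c)$-adic valuation and the $\lambda$-degree of each correcting factor $p_k$ so that the elimination clears precisely the designated coefficients, keeps $\deg_\lambda$ within the budget of \ref{C5} after reduction, and preserves the valuation estimate \ref{C1}, all while respecting the top-down order in which the levels $z^{-\mu_k}$ are cleared. The remaining delicate step is the triangular inversion in $({\rm i})\Rightarrow({\rm ii})$, which must propagate Puiseux-ness from the solutions $y_{c,j,i}$ back to every derivative $\ev_{\lambda=c}(\partial_\lambda^{n}g_{c,j})$ in the required range.
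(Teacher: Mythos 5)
Your proof is correct and uses the same toolbox as the paper (truncation of the Frobenius series $g_{c,j}$, Lemma \ref{lem:solution_denom}, the elimination based on \eqref{eq:cld}, Lemma \ref{lem:solutions_inhomo_Puiseux} and Theorem \ref{th:Roques_variant}), but you organize the cycle differently: the paper proves $({\rm i})\Rightarrow({\rm iii})$ directly, gets $({\rm iii})\Rightarrow({\rm ii})$ for free, and closes with $({\rm ii})\Rightarrow({\rm i})$, whereas you prove $({\rm i})\Rightarrow({\rm ii})\Rightarrow({\rm iii})\Rightarrow({\rm i})$. Two consequences are worth noting. First, your $({\rm ii})\Rightarrow({\rm iii})$ runs the triangular elimination on an \emph{arbitrary} family of truncated solutions; the paper only ever performs this elimination on the specific truncations $h_{c,j}$ of the $g_{c,j}$'s inside the proof of $({\rm i})\Rightarrow({\rm iii})$, so your version is a mild strengthening, though the mechanism (choosing $p_k$ with $(\lambda-c)$-adic valuation $s_{c,j}+m_{c,j}-s_{c,k}-m_{c,k}$ and $m_{c,k}$ free coefficients, processing $k$ from $j-1$ downward, and using the valuation shift to preserve \ref{C1}) is identical to the paper's Fact 3. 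Second, in the completion step you apply Lemma \ref{lem:solutions_inhomo_Puiseux} once to $L_\lambda$ over $\Puis_{\K(\lambda)}$ and then argue that the recursion only divides by $\cld a_0\in\K^\times$ so that $\delta$ has no pole at $\lambda=c$; the paper instead decomposes $L_\lambda=\sum_n L_{c,n}(\lambda-c)^n$ and builds the completion degree by degree in $(\lambda-c)$ using only $L_{c,0}$ (its Fact 4), which keeps everything in $\K[[\lambda-c]]^{\rm rat}$ by construction. Both are valid; yours is slightly slicker but does require that extra inspection of the recursion. One small expository point in your $({\rm i})\Rightarrow({\rm ii})$: the triangular inversion needs $\ev_{\lambda=c}\bigl(\partial_\lambda^{n}(g_{c,j}e_\lambda)\bigr)$ to be a solution for \emph{every} $0\le n\le s_{c,j}+m_{c,j}-1$, not only for the indices $n\ge s_{c,j}$ appearing in the basis of Theorem \ref{thm:Roques}; this is exactly Lemma \ref{lem:solution_g_lambda} (an immediate consequence of \eqref{eq:g_cj}), so the gap is purely one of citation, not of substance.
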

 
Theorem \ref{thm_RS_si_solutions_tronquees} is proved after two lemmas.
\begin{lem}\label{lem:cond_v}
Let $f_{c,j}=\sum_v f_{c,j,v}(\lambda)z^v \in \K[z^{\pm \frac{1}{d}},\lambda]$ satisfying Condition \emph{\ref{C5}} in Definition \ref{defi:quasi_solution}. Then, the following are equivalent:
\begin{itemize}
    \item[(a)] \emph{\ref{C6}} holds;
    \item[(b)] $\deg_\lambda f_{c,j,-\mu_k}(\lambda)\leq s_{c,j}+m_{c,j}-m_{c,k}-1$, for every $k<j$ such that $c$ is attached to $-\mu_k$;
    \item[(c)] $\deg_\lambda f_{c,j,v}(\lambda)\leq s_{c,j}+m_{c,j}-{\rm val}_{\lambda-c} (\cld L_\lambda(z^v))-1$ for every $v> -\mu_j$.
\end{itemize}
\end{lem}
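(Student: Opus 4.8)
The plan is to reduce everything to a single computation of the $(\lambda-c)$-adic valuation of $\cld L_\lambda(z^v)$, relying on Lemma \ref{lem:chi}. First I would record the following two facts. When $v=-\mu_k$ is (the opposite of) a slope, Lemma \ref{lem:chi} gives $\cld L_\lambda(z^{-\mu_k})=\chi_k(\lambda)$; since the nonzero roots of $\chi_k$ are exactly the exponents attached to $\mu_k$ counted with multiplicity, and $c\in\K^\star$, the multiplicity of $c$ as a root of $\chi_k$ is $m_{c,k}$, so ${\rm val}_{\lambda-c}(\cld L_\lambda(z^{-\mu_k}))=m_{c,k}$. When $-v$ is not a slope, Lemma \ref{lem:chi} gives $\cld L_\lambda(z^v)=\lambda^{i_0}\cld(a_{i_0})$ for some $i_0$; as $c\neq 0$ and $\cld(a_{i_0})\in\K^\star$, this has ${\rm val}_{\lambda-c}(\cld L_\lambda(z^v))=0$. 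These two identities are the only nontrivial ingredient.

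With this in hand, the equivalence (a) $\Leftrightarrow$ (b) is immediate. The two statements differ only on the indices $k<j$ for which $c$ is not attached to $\mu_k$, i.e. $m_{c,k}=0$. For such $k$ the bound in \ref{C6} reads $\deg_\lambda f_{c,j,-\mu_k}(\lambda)\leq s_{c,j}+m_{c,j}-1$, which already follows from the global bound \ref{C5}, namely $\deg_\lambda f_{c,j}\leq s_{c,j}+m_{c,j}-1$. Hence (a) imposes nothing beyond (b), and the two conditions are equivalent.

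For (b) $\Leftrightarrow$ (c) I would split the condition in (c) according to the two cases above. If $v>-\mu_j$ and $-v$ is not a slope, then by the computation ${\rm val}_{\lambda-c}(\cld L_\lambda(z^v))=0$, so the bound in (c) is again $s_{c,j}+m_{c,j}-1$ and holds automatically by \ref{C5}; the same holds trivially when $f_{c,j,v}=0$. If $v>-\mu_j$ and $-v=\mu_k$ is a slope, then $\mu_k<\mu_j$ forces $k<j$, and the computation gives ${\rm val}_{\lambda-c}(\cld L_\lambda(z^{-\mu_k}))=m_{c,k}$, so the bound in (c) becomes exactly $\deg_\lambda f_{c,j,-\mu_k}(\lambda)\leq s_{c,j}+m_{c,j}-m_{c,k}-1$, which is the bound appearing in (a). Since the slope-valued $v>-\mu_j$ are in bijection with the indices $k\in\{1,\ldots,j-1\}$, condition (c) is equivalent to (a), and therefore to (b).

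The only genuine obstacle is the valuation computation in the first paragraph; the rest is bookkeeping showing that all the bounds corresponding to non-slope exponents, or to slopes not realizing $c$, collapse into the already-assumed condition \ref{C5}. Care is needed to check that the range $v>-\mu_j$ picks out precisely the slopes $\mu_k$ with $k<j$ (using $\mu_1<\cdots<\mu_\kappa$), and that $c\in\K^\star$, so that the spurious factor $\lambda^{i_0}$ contributes nothing to the $(\lambda-c)$-adic valuation.
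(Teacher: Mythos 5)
Your proof is correct and follows essentially the same route as the paper: both arguments reduce everything to the computation, via Lemma \ref{lem:chi}, that ${\rm val}_{\lambda-c}(\cld L_\lambda(z^v))$ equals $m_{c,k}$ when $v=-\mu_k$ and $0$ otherwise, and then use \ref{C5} to absorb all the cases where the bound degenerates to $s_{c,j}+m_{c,j}-1$. Your write-up is merely more explicit about the bookkeeping (e.g.\ checking that the slope-valued $v>-\mu_j$ correspond exactly to $k<j$), which the paper leaves implicit.
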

\begin{proof}
By definition, $m_{c,k}\neq 0$ if and only if $c$ is attached to $-\mu_k$. Using \ref{C5}, it gives the equivalence between $(a)$ and $(b)$. Furthermore, it follows from Lemma \ref{lem:chi} that, for any $v \in \mathbb Q$,
$$
{\rm val}_{\lambda-c} (\cld L_\lambda(z^v)) = \left\{\begin{array}{ll} 
{\rm val}_{\lambda-c}(\chi_k(\lambda))= m_{c,k} & \text{if } v=-\mu_k \text{ for some }k,\\
{\rm val}_{\lambda-c} (\lambda^{i_0} \cld(a_{i_0}))=0 & \text{for some $i_0$, else}.
\end{array}\right.
$$
Thus, $(a)$ and $(c)$ are equivalent.
\end{proof}
\begin{lem}\label{lem:solution_g_lambda}
 Let $c$ be an exponent attached to some slope $\mu_j$. Then,
 $$
 \ev_{\lambda=c}\left( \partial_\lambda^{i}(g_{c,j}(\lambda,z)e_\lambda) \right)
 $$
 is a solution of \eqref{eq:Mahler_at_0} for any integer $i$ such that $0 \leq i < s_{c,j}+m_{c,j}$.
\end{lem}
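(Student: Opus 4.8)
The plan is to exploit the defining relation \eqref{eq:g_cj} of $g_{c,j}$ together with the fact that the operator $L$ has coefficients in $\Puis$ and hence does not involve the variable $\lambda$. Since both $\partial_\lambda$ and $\ev_{\lambda=c}$ commute with $\puip$ and act trivially on the $a_i(z)$, they commute with $L$; in particular $L$ commutes with the composite $\ev_{\lambda=c}\circ\partial_\lambda^{i}$. Consequently, for any $i$,
$$
L\left( \ev_{\lambda=c}\left( \partial_\lambda^{i}(g_{c,j}(\lambda,z)e_\lambda) \right) \right) = \ev_{\lambda=c}\left( \partial_\lambda^{i}\left( L(g_{c,j}(\lambda,z)e_\lambda) \right) \right),
$$
and by \eqref{eq:g_cj} the right-hand side equals $\ev_{\lambda=c}\left( \partial_\lambda^{i}\left( z^{\theta_j}(\lambda - c)^{s_{c,j}+m_{c,j}} e_\lambda \right) \right)$. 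Here one should note that $g_{c,j}(\lambda,z)e_\lambda$ indeed lies in the domain $\Hahn_{\K[[\lambda-c]]^{\rm rat}}[e_\lambda,\partial_\lambda e_\lambda,\ldots]$ of $\ev_{\lambda=c}$, since $g_{c,j}(\lambda,z)$ is well-defined at $\lambda=c$.

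Then I would show that this expression vanishes whenever $0 \le i < s_{c,j}+m_{c,j}$. As $z^{\theta_j}$ is independent of $\lambda$, it suffices to verify that $\ev_{\lambda=c}\left( \partial_\lambda^{i}\left( (\lambda - c)^{s_{c,j}+m_{c,j}} e_\lambda \right) \right) = 0$. This is exactly the vanishing fact already recorded in the proof of Theorem \ref{th:Roques_variant} (applied with the trivial regular factor $\theta(\lambda,z)=1$): expanding by the Leibniz rule, every summand carries a factor $\partial_\lambda^{k}\big( (\lambda-c)^{s_{c,j}+m_{c,j}} \big)$ for some $k \le i < s_{c,j}+m_{c,j}$, which is a positive power of $(\lambda-c)$ and therefore vanishes upon specialization at $\lambda=c$.

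Combining the two observations yields $L\big( \ev_{\lambda=c}( \partial_\lambda^{i}(g_{c,j}e_\lambda) ) \big) = 0$, so that $\ev_{\lambda=c}( \partial_\lambda^{i}(g_{c,j}e_\lambda) )$ is a solution of \eqref{eq:Mahler_at_0}, as claimed. I do not anticipate any genuine obstacle here: the statement is merely a strengthening of Theorem \ref{th:Roques_variant}, which covers the range $s_{c,j} \le i < s_{c,j}+m_{c,j}$, to the full range $0 \le i < s_{c,j}+m_{c,j}$, and precisely the same commutation-and-vanishing argument applies throughout this range.
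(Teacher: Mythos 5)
Your argument is correct and is exactly the one the paper has in mind: its proof of Lemma \ref{lem:solution_g_lambda} simply states that the claim is immediate from \eqref{eq:g_cj} and the commutation of $\partial_\lambda$ and $\ev_{\lambda=c}$ with $\puip$, which is precisely the commutation-plus-Leibniz-vanishing computation you spell out. No gap; you have merely written out the details the authors left implicit.
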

\begin{proof}
It is immediate with \eqref{eq:g_cj} and the fact that  $\partial_\lambda$ and ${\rm ev}_{\lambda=c}$ commute with $\puip$.
\end{proof}
\begin{proof}[Proof of Theorem \ref{thm_RS_si_solutions_tronquees}]
Let us first prove that (i) implies (iii). Suppose that the equation is regular singular at $0$.
    For each pair $(c,j)$, $1 \leq j \leq \kappa$, such that $c$ is an exponent attached to the slope $\mu_j$ we let $g_{c,j}$ be given by Equation \eqref{eq:g_cj}. Write
    $$
    g_{c,j}(\lambda,z) = \sum_{n=0}^\infty g_{c,j,n}(z)(\lambda-c)^n\,.
    $$
The proof that (i) implies (iii) is divided in three facts.

\medskip
\noindent    \textbf{Fact 1.} \textit{For any triplet $(c,j,n)$ with $0 \leq n < s_{c,j}+m_{c,j}$, $g_{c,j,n}(z)\in \K((z^{1/d}))$.}

 \medskip
  Fix a pair $(c,j)$. We are going to prove Fact 1 by induction on $n$.

\noindent\textit{Proof of the base case $n=0$.} It follows from Lemma \ref{lem:solution_g_lambda} that $g_{c,j,0}(z)e_c$ is a solution of \eqref{eq:Mahler_at_0}. Thus, since the equation is regular singular at $0$, it follows from Lemma \ref{lem:solution_denom} that $g_{c,j,0}(z)\in\K((z^{1/d}))$.

\noindent\textit{Proof of the inductive step.} Let $n\geq 1$ and suppose that $g_{c,j,i}(z)\in \K((z^{1/d}))$ for any $i<n$. Since $\partial_\lambda$ and $\ev_{\lambda=c}$ commute, we have
\begin{align*}
    \ev_{\lambda=c}\left( \partial_\lambda^{n}(g_{c,j}(\lambda,z)e_\lambda) \right)& = \sum_{i=0}^n \binom{n}{i} \ev_{\lambda=c}\left(\partial_\lambda^{i}(g_{c,j}(\lambda,z))\right)\ev_{\lambda=c}\left(\partial_\lambda^{n-i}e_\lambda\right)
    \\ & = \sum_{i=0}^n n!g_{c,j,i}(z) \logm_{c,n-i}
\end{align*}
Thus, using \eqref{eq:def_log_ck}, there exists some non-zero $\gamma_{i,k}\in  \Q$, $0 \leq i \leq n$, $0 \leq k\leq n-i$, such that
$$
\ev_{\lambda=c}\left( \partial_\lambda^{n}(g_{c,j}(\lambda,z)e_\lambda) \right) = \sum_{k=0}^n\left(\sum_{i=0}^{n-k}\gamma_{i,k}g_{c,j,i}(z)\right) \logm^k e_c\,.
$$
Since $n\leq s_{c,j}+ m_{c,j}-1$, it follows from Lemma \ref{lem:solution_g_lambda} that $    \ev_{\lambda=c}\left( \partial_\lambda^{n}(g_{c,j}(\lambda,z)e_\lambda) \right)$ is a solution of \eqref{eq:Mahler_at_0}. Since the equation is regular singular at $0$, it follows from Lemma \ref{lem:solution_denom} that, for every $k$, the series $\sum_{i=0}^{n-k}\frac{\gamma_{i,k}}{c^{n-i}}g_{c,j,i}$ belongs to $\K((z^{1/d}))$. Taking $k=0$, and using the induction hypothesis, we obtain that $g_{c,j,n} \in \K((z^{1/d}))$, as wanted. This proves Fact 1.

\medskip
For each $n$ with $0\leq n\leq s_{c,j}+ m_{c,j}-1$ write
$$
g_{c,j,n}(z)= \sum_{\gamma \geq -\mu_j} g_{c,j,n,\gamma}z^\gamma\, \text{ and set }\, h_{c,j,n}(z)=\sum_{-\mu_j\leq \gamma \leq-\mu_1} g_{c,j,n,\gamma}z^\gamma\,,
$$
where all the $\gamma$'s involved in the sums above belong to $\frac{1}{d}\Z$. Set also
$$
h_{c,j}(\lambda,z):=\sum_{n=0}^{s_{c,j}+m_{c,j}-1} h_{c,j,n}(z)(\lambda-c)^n\,.
$$
The Puiseux polynomial $h_{c,j}$ can be seen as the reduction of $g_{c,j}$ modulo $(\lambda-c)^{s_{c,j}+m_{c,j}}$ and modulo $z^{-\mu_1+1/d}$.

\medskip
\noindent\textbf{Fact 2.} \textit{For any pair $(c,j)$, $h_{c,j}(\lambda,z)$ is an associated truncated solution.}

\medskip
By construction, $h_{c,j}\in \K[z^{\pm \frac{1}{d}},\lambda]$ and it has degree at most $s_{c,j}+m_{c,j}-1$ in $\lambda$. Hence $h_{c,j}$ satisfies \ref{C5}. The support of $h_{c,j}$, as a Puiseux polynomial in $z$, is included in $\frac{1}{d}\Z\cap [-\mu_j ; -\mu_1]$. Thus, $h_{c,j}$ satisfies \ref{C2}. Let $\theta = g_{c,j}-h_{c,j}$. By construction, we can write 
$$\theta=\theta_1+\theta_2(\lambda-c)^{s_{c,j}+m_{c,j}},$$
with $\val \theta_1 >-\mu_1$. Thus,
\begin{equation*}\label{eq:L_lambda_hcj}
L_\lambda(h_{c,j}) = L_\lambda(g_{c,j})- L_\lambda(\theta)=L_\lambda(g_{c,j})-L_\lambda(\theta_1) - (\lambda-c)^{s_{c,j}+m_{c,j}}L_\lambda(\theta_2)
\end{equation*}
Reducing modulo $(\lambda - c)^{s_{c,j}+m_{c,j}}$ and using \eqref{eq:g_cj-bis}
$$
L_\lambda(h_{c,j}) \mod  (\lambda - c)^{s_{c,j}+m_{c,j}} = -L_\lambda(\theta_1) \mod  (\lambda - c)^{s_{c,j}+m_{c,j}} \,.
$$
Then, 
\begin{align*}
\val \left(L_\lambda(h_{c,j}) \mod  (\lambda - c)^{s_{c,j}+m_{c,j}}\right) &= \val\left(L_\lambda(\theta_1) \mod  (\lambda - c)^{s_{c,j}+m_{c,j}} \right)
\\ &>\val(a_0)-\mu_1
\end{align*}
because $\val\left(L_\lambda(\theta_1)\right)  = \val(a_0)+\val(\theta_1)$ by Lemma \ref{lem:valuation_L(f)}.
Thus, $h_{c,j}$ satisfies \ref{C1}. It remains to prove \ref{C3} and \ref{C4}. 
By construction, $h_{c,j}$ is the truncation of $g_{c,j}$ where we only keep the monomials of the form $(\lambda-c)^i z^j $ with $0\leq i\leq s_{c,j}+m_{c,j}-1$ and $-\mu_j\leq j\leq -\mu_1$. Recall that  $\val g_{c,j} = -\mu_j$ and $\cld g_{c,j}=r_{c,j}(\lambda)$. Since, by \eqref{eq:cld}, $${\rm val}_{\lambda-c}  (r_{c,j}(\lambda)) = s_{c,j}\leq s_{c,j}+m_{c,j}-1,$$ we have $\val h_{c,j} = \val g_{c,j}=-\mu_j$ and $\cld h_{c,j} = r_{c,j}(\lambda) \mod (\lambda - c)^{s_{c,j}+m_{c,j}}.$ This proves \ref{C3} and \ref{C4} and ends the proof of Fact 2.
\medskip

The Puiseux polynomials $h_{c,j}$ do not necessarily satisfy \ref{C6}. Hence they may not be reduced truncated solutions. Our last step to prove that (i) implies (iii) is to build some reduced truncated solutions $f_{c,j}$, using the Puiseux polynomials $h_{c,j}$. 
We now fix a exponent $c$ attached to some slope.

\medskip

\noindent\textbf{Fact 3.}\textit{ For any integer $j$ such that $c$ is attached to the slope $\mu_j$, there exists a reduced truncated solution $f_{c,j}$ associated with $(c,j)$. }

\medskip
We proceed by induction on the integers $j$ such that $c$ is attached to the slope $\mu_j$. Let $j_0$ be the least integer such that $c$ is attached to $\mu_{j_0}$.

\noindent\textit{Proof of the base case $j=j_0$.} In that case, the condition \ref{C6} is empty. Hence the Puiseux polynomial $f_{c,j_0} = h_{c,j_0}$ satisfies \ref{C6}. Thus, it is a reduced truncated solution associated with $(c,j_0)$. 

\noindent\textit{Proof of the inductive step.} Let $j>j_0$ such that $c$ is attached to $\mu_j$. Let $\mathcal K$ be the set of all $k<j$ such that $c$ is attached to the slope $\mu_k$. Assume that there exists a reduced truncated solution $f_{c,k}$ associated with $(c,k)$ for any integer $k\in \mathcal K$. 
Then, for any polynomials $p_k(\lambda)\in \K[\lambda]$, $k \in \mathcal K$, the Puiseux polynomial
\begin{equation}\label{eq:reduction_fcj}
f_{c,j} := h_{c,j}+\sum_{k\in \mathcal K} (\lambda-c)^{s_{c,j}+m_{c,j}-s_{c,k}-m_{c,k}}p_k(\lambda)f_{c,k} \mod (\lambda-c)^{s_{c,j}+m_{c,j}}
\end{equation}
still satisfies Conditions \ref{C1} to \ref{C5}. Indeed, for all $k \in \mathcal K$, 
$$\val f_{c,k} = -\mu_k > - \mu_j = \val h_{c,j}.$$ 
Thus $\val f_{c,j}= \val h_{c,j}$ and $\cld f_{c,j} = \cld h_{c,j}$, which gives Conditions \ref{C3} and \ref{C4}. Condition \ref{C2} immediately follows from the fact that $f_{c,k}$, $k\in \mathcal{K}$, and $h_{c,j}$ satisfies \ref{C2}. Since $f_{c,j}$ is a remainder modulo $(\lambda-c)^{s_{c,j}+m_{c,j}}$,  Condition \ref{C5} is trivially satisfied. Let us establish \ref{C1}. For all $k\in \mathcal{K}$, since $f_{c,k}$ satisfies \ref{C1} and since $L_\lambda$ is $\K[\lambda]$-linear, one has
\begin{multline*}
  \val(L_\lambda((\lambda-c)^{s_{c,j}+m_{c,j}-s_{c,k}-m_{c,k}}p_k(\lambda)f_{c,k}) \mod (\lambda-c)^{s_{c,j}+m_{c,j}})
  \\ \geq  \val(L_\lambda((\lambda-c)^{s_{c,j}+m_{c,j}-s_{c,k}-m_{c,k}}f_{c,k})\mod (\lambda-c)^{s_{c,j}+m_{c,j}})
  \\ =  \val(L_\lambda(f_{c,k})\mod (\lambda-c)^{s_{c,k}+m_{c,k}}) >\val(a_0)-\mu_1 
\end{multline*}
Since $h_{c,j}$ satisfies \ref{C1}, we also have $\val(L_\lambda(h_{c,j}) \mod (\lambda-c)^{s_{c,j}+m_{c,j}}))>\val(a_0)-\mu_1$ and we deduce from \eqref{eq:reduction_fcj} that $f_{c,j}$ satisfies \ref{C1}.

It only remains to prove that we can choose the polynomials $p_k(\lambda)$ so that $f_{c,j}$ also satisfies \ref{C6}. It follows from Lemma \ref{lem:cond_v} that one only has to check Condition \ref{C6} for the integers $k<j$ which belong to $\mathcal K$.  
Let $k_1>\cdots >k_t$ be an enumeration of the elements of $\mathcal K$ and set, for any $n\in\{1,\ldots,t\}$
$$
f_{c,j}^{[n]} := h_{c,j}+\sum_{i=1}^n (\lambda-c)^{s_{c,j}+m_{c,j}-s_{c,k_i}-m_{c,k_i}}p_{k_i}(\lambda)f_{c,k_i} \mod (\lambda-c)^{s_{c,j}+m_{c,j}}
$$
and $f_{c,j}^{[n]}=\sum_{v \in \mathbb Q} f_{c,j,v}^{[n]}(\lambda)z^v$. Note that $\deg_\lambda f_{c,j,-\mu_{k_n}}^{[n]} =\deg_\lambda f_{c,j,-\mu_{k_n}}$ for any $n \in\{1,\ldots,t\}$. Thus, $f_{c,j}$ satisfies \ref{C6} if and only if, for any $n \in \{1,\ldots,t\}$,
\begin{equation}\label{eq:recurrence_reduced}
    \deg_\lambda f_{c,j,-\mu_{k_n}}^{[n]} \leq s_{c,j}+m_{c,j} - m_{c,k_n}-1
\end{equation}
Let $n \in \{1,\ldots,t\}$. Suppose that $p_{k_1},\ldots,p_{k_{n-1}}$ have already been chosen. Let us prove that we can choose $p_{k_n}$ so that \eqref{eq:recurrence_reduced} holds. Since $f_{c,k_n}$ satisfies \ref{C3} and \ref{C4} it follows from \eqref{eq:cld} that the coefficient of $z^{-\mu_{k_n}}$ in $f_{c,k_n}$ is of the form $\theta(\lambda)(\lambda-c)^{s_{c,k_n}}$, with $\theta(\lambda)$ a unit of $\K[[\lambda-c]]$. Write $f_{c,j,-\mu_{k_n}}^{[n-1]} = q_0(\lambda)-q_1(\lambda)(\lambda-c)^{s_{c,j}+m_{c,j}-m_{c,k_n}}$, with $\deg_\lambda(q_0)\leq s_{c,j}+m_{c,j}-m_{c,k_n}-1$ and let $p_{k_n}(\lambda)$ be the reduction modulo $(\lambda-c)^{m_{c,k_n}}$ of $q_1(\lambda)\theta(\lambda)^{-1}$. 
Since 
$$f_{c,j}^{[n]}=f_{c,j}^{[n-1]}+ (\lambda-c)^{s_{c,j}+m_{c,j}-s_{c,k_n}-m_{c,k_n}}p_{k_n}(\lambda)f_{c,k_n}\mod (\lambda-c)^{s_{c,j}+m_{c,j}}$$
with such a choice for $p_{k_n}$, $f_{c,j,-\mu_{k_n}}^{[n]}$ satisfies \eqref{eq:recurrence_reduced}. Thus, we may chose recursively the polynomials $p_{k_1},\ldots,p_{k_t}$ so that \eqref{eq:recurrence_reduced} holds for any $n \in \{1,\ldots,t\}$.
For such a choice, $f_{c,j}$ satisfies \ref{C6}. \textit{A fortiori}, it is a reduced truncated solution associated with $(c,j)$. This proves Fact 3 and the fact that (i) implies (iii).

\medskip
The fact that (iii) implies (ii) is immediate so it only remains to prove that (ii) implies (i). Suppose that, for any pair $(c,j)$ there exists a truncated solution $f_{c,j}$.
We decompose
$$
L_\lambda = \sum_{n=0}^{m} L_{c,n} (\lambda - c)^n,
$$
where the operators $L_{c,n}$ do not depend on $\lambda$. 
We also decompose
$$
f_{c,j}(\lambda,z)=\sum_{n=0}^{s_{c,j}+m_{c,j}-1}f_{c,j,n}(z)(\lambda - c)^n\,.
$$

\noindent\textbf{Fact 4.} \textit{For all non-negative integer $n \leq s_{c,j}+m_{c,j}-1$, there exist $\theta_{c,j,n}(z) \in \Puis$ such that $\val (\theta_{c,j,n}-f_{c,j,n}(z))>-\mu_1$ and 
\begin{equation}\label{eq:L_lambda,n}
\sum_{i=0}^n L_{c,n-i}(\theta_{c,j,i}(z))=0\, .
\end{equation}
 }

\medskip
We proceed by induction on $n$.

\noindent\textit{Proof of the base case $n=0$.} 
Let $n=0$ and let $a_\infty(z)=-L_{c,0}(f_{c,j,0}(z))$. It follows from \ref{C1} that $\val (a_\infty) > \val a_0 - \mu_1$. Then, it follows from Lemma~\ref{lem:solutions_inhomo_Puiseux} that there exists $y_{c,j,0} \in \Puis$, with valuation greater than $-\mu_1$ and such that $L_{c,0}(y_{c,j,0})=a_\infty$. Thus, $\theta_{c,j,0}=f_{c,j,0}+y_{c,j,0}$ has the properties we want. 

\noindent\textit{Proof of the inductive step.} Let $n\geq 1$ and suppose that the assertion is proved with $n-1$. Let 
$$a_\infty(z) = -L_{c,0}(f_{c,j,n}(z))-\sum_{i=0}^{n-1} L_{c,n-i}(\theta_{c,j,i}(z)).$$
Then
$$
a_\infty(z)=-\sum_{i=0}^nL_{c,n-i}(f_{c,j,i}(z)) - \sum_{i=0}^{n-1} L_{c,n-i}(\theta_{c,j,i}(z)-f_{c,j,i}(z)).
$$
Looking at the term in $(\lambda-c)^n$ in Condition \ref{C1}, it follows from \ref{C1} that the first sum has a valuation greater than $\val a_0 - \mu_1$. It is also the case of each term in the second sum because $\val L_{c,n-i}(\theta_{c,j,i}(z)-f_{c,j,i}(z)) \geq \val L_{\lambda}(\theta_{c,j,i}(z)-f_{c,j,i}(z))$ which is greater than  $\val a_0 - \mu_1$ by induction hypothesis and Lemma \ref{lem:valuation_L(f)}.  Thus, $\val a_\infty > \val a_0 - \mu_1$. Then, it follows from Lemma \ref{lem:solutions_inhomo_Puiseux} that there exists $y_{c,j,n} \in \Puis$, with valuation greater than $-\mu_1$ and such that $L_{c,0}(y_{c,j,n})=a_\infty$. Thus, $\theta_{c,j,n}=f_{c,j,n}+y_{c,j,n}$ has the properties we want. 
This proves the inductive step and, by induction, this proves Fact 4.

\medskip
Let $\theta_{c,j,n}$ be given by Fact 4. Since $\val(\theta_{c,j}-f_{c,j}) > - \mu_1$ and $\val f_{c,j}\leq -\mu_1$ (see \ref{C4}), $\val \theta_{c,j}= \val f_{c,j} = -\mu_j$ and $\cld \theta_{c,j} = \cld f_{c,j}$. Then, the $(\lambda-c)$-adic valuation of $\cld \theta_{c,j}$ is $s_{c,j}$ by \ref{C3} and \eqref{eq:cld}. Setting $\theta_{c,j}(\lambda,z)=\sum_{n=0}^{s_{c,j}+m_{c,j}-1} \theta_{c,j,n}(z)(\lambda-c)^n$ and $\theta_{c,j,n}=0$ when $n\geq s_{c,j}+m_{c,j}$, we infer from \eqref{eq:L_lambda,n} that 
\begin{align*}
L(\theta_{c,j}e_\lambda) =e_\lambda \sum_{n=0}^{\infty}\sum_{i=0}^nL_{c,n-i}(\theta_{c,j,i}(z))(\lambda - c)^n \in (\lambda-c)^{s_{c,j}+m_{c,j}}e_\lambda \Hahn_{\K[[\lambda-c]]^{\rm rat}}\,.
\end{align*}
Thus, it  follows from Theorem \ref{th:Roques_variant} that the functions
$$
y_{c,j,i}(z)=\ev_{\lambda=c}\left(\partial_\lambda^{s_{c,j}+i}(\theta_{c,j}(\lambda,z)e_\lambda)\right)\, ,
$$
$1\leq j \leq \kappa$, $c \in \K^\times$ attached to $\mu_j$, $0 \leq i \leq m_{c,j}-1$, form a basis of solutions of \eqref{eq:Mahler_at_0}. Since the $\theta_{c,j}$'s belong to $\Puis[\lambda]$, the equation is regular singular at $0$ (see Proposition \ref{prop:RS_equations}). Thus (ii) implies (i), which ends the proof of Theorem \ref{thm_RS_si_solutions_tronquees}.
\end{proof}

\section{Algorithmic considerations: proof of Theorem \ref{thm:algo_complexity}}\label{sec:algo_main}

The algorithm mentioned in Theorem \ref{thm:algo_complexity} is described at the end of this section, after a first algorithm to compute reduced truncated solutions. Consider the map $\pi$ which, to any $w \in \mathbb Q$ associates the unique number $\pi(w)$ such that $\val L_\lambda(z^{\pi(w)})=w$, that is
\begin{equation}\label{eq:pi}\pi(w) = \max\left\{\frac{w-\val a_i}{p^i}\, : \, 0\leq i \leq m\right\}\footnote{More properties of this map can be found in \cite[Section 3.4]{FR}).}.
\end{equation}

\begin{algorithm}[H]
\SetAlgoLined
\KwIn{An operator $L$ of the form \eqref{eq:form_operateur} such that the equation $Ly=0$ satisfies \ref{A1} and \ref{A2}. A slope $\mu_j$ of $\mathcal N(L)$. An exponent $c$ attached to this slope. The integers $s_{c,j}$ and $m_{c,j}$. The least common multiple $d$ of the denominators of the slopes of $\mathcal{N}(L)$, as defined in Section \ref{sec:RSiff}.}
\KwOut{Whether there exists a reduced truncated solution associated with $(c,j)$.}
Let $r$ be the reduction of \eqref{eq:cld} modulo $(\lambda-c)^{s_{c,j}+m_{c,j}}$.
\\ Set $f:=rz^{-\mu_j}$.
\\ Set $g:=L_\lambda(f) \mod (\lambda-c)^{s_{c,j}+m_{c,j}}$.
\\ Set $v:=\pi(\val g)$.
\\ \While{$v\leq -\mu_1$}{
 \If{$v \notin \frac{1}{d}\Z$}{\Return False.}
\Else{Set $\alpha := \cld L_\lambda(z^v)$ and $\beta := \cld g$.
\\ \If{there exists $h \in \K[\lambda]$ such that
\begin{itemize}
    \item $\deg_\lambda h\leq s_{c,j}+m_{c,j} - 1 - {\rm val}_{\lambda-c}(\alpha)$
    \item $\beta=\alpha h \mod(\lambda-c)^{s_{c,j}+m_{c,j}}$
\end{itemize}}{
Set $f$ to $f-hz^v$.
 \\  Set $g:=L_\lambda(f) \mod (\lambda-c)^{s_{c,j}+m_{c,j}}$.
 \\ Set $v:=\pi(\val g)$.}
 \Else{\Return False.}}}
\Return True.
 \caption{An algorithm to determine the existence of reduced truncated solutions.}\label{Algo:fcj}
\end{algorithm}

\begin{rem*}
In Algorithm~\ref{Algo:fcj}, $f$ represents the construction monomials by monomials of a reduced truncated solution associated with $(c,j)$. 
If there exists a reduced truncated solution $f_{c,j}$ and we are at some intermediate step, we have $\val (f_{c,j}-f) = \pi(\val g)$ where $g=L_\lambda(f) \mod (\lambda-c)^{s_{c,j}+m_{c,j}}$. That is, the monomial we must add to $f$ at the next step is of the form $\gamma z^{\pi(\val g)}$, with $\gamma \in \K[\lambda]$.
\end{rem*}

\begin{prop}\label{prop:algo_fcj} 
  Let $(c,j)$ be such that $c$ is an exponent attached to the slope $\mu_j$. Algorithm \ref{Algo:fcj} determines whether there exists a reduced truncated solution associated with $(c,j)$. Its complexity is
  $$
  \mathcal O(p^mm\nu^2)
  $$
  where $\nu = \max_i\val a_i(z)$.
\end{prop}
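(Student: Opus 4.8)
The plan is to establish both the correctness and the complexity claim for Algorithm \ref{Algo:fcj}. For \textbf{correctness}, I would show that the algorithm terminates and that it returns \texttt{True} if and only if a reduced truncated solution associated with $(c,j)$ exists. The key idea is that the algorithm builds a candidate Puiseux polynomial $f$ monomial by monomial, working upward in the exponent $v$ of $z$. Initialising $f := rz^{-\mu_j}$ (where $r$ is the reduction of $r_{c,j}(\lambda)$ modulo $(\lambda-c)^{s_{c,j}+m_{c,j}}$) immediately forces Conditions \ref{C3} and \ref{C4} by the $(\lambda-c)$-adic valuation computation in \eqref{eq:cld}. I would then argue by induction on the number of monomials added that the invariant $\val(f_{c,j}-f)=\pi(\val g)$ from the preceding remark is maintained, where $g = L_\lambda(f) \bmod (\lambda-c)^{s_{c,j}+m_{c,j}}$. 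At each step, the monomial $\gamma z^{v}$ that must be added to cancel the lowest-degree term of $g$ has coefficient $\gamma = h \in \K[\lambda]$ solving $\beta = \alpha h \bmod (\lambda-c)^{s_{c,j}+m_{c,j}}$ with $\alpha = \cld L_\lambda(z^v)$, $\beta = \cld g$; this is exactly the linear condition tested in the \texttt{If}. The degree bound $\deg_\lambda h \leq s_{c,j}+m_{c,j}-1-{\rm val}_{\lambda-c}(\alpha)$ is precisely Condition $(c)$ of Lemma \ref{lem:cond_v}, which is equivalent to \ref{C6}; so imposing it guarantees that the constructed $f$ is \emph{reduced}, not merely a truncated solution. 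The \texttt{While} loop runs while $v\leq -\mu_1$, matching the support constraint \ref{C2}, and the termination condition $v > -\mu_1$ ensures that \ref{C1} holds since any remaining terms of $g$ then have valuation exceeding $\val(a_0)-\mu_1$.

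For the two \texttt{False} branches I would argue necessity: if $v \notin \frac{1}{d}\Z$, then no reduced truncated solution can have its next monomial at $z^v$ while respecting \ref{C2}, and one checks that $g$ cannot be cancelled otherwise, so no solution exists. If no admissible $h$ exists (either because the divisibility $\beta = \alpha h \bmod (\lambda-c)^{s_{c,j}+m_{c,j}}$ is unsolvable, or because every solution violates the degree bound), then again no reduced truncated solution can exist, since the next monomial of any such solution is forced and the forced coefficient fails the constraints. Conversely, when the algorithm returns \texttt{True}, the accumulated $f$ satisfies \ref{C1}--\ref{C6} by construction. The equivalence of the solvability test with the genuine existence of $h$ rests on Lemma \ref{lem:chi}, which identifies $\alpha$ with either $\chi_k(\lambda)$ (when $v=-\mu_k$) or $\lambda^{i_0}\cld(a_{i_0})$ otherwise, and hence pins down ${\rm val}_{\lambda-c}(\alpha)$.

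For the \textbf{complexity}, I would bound three quantities: the number of iterations of the \texttt{While} loop, the cost of computing $g = L_\lambda(f) \bmod (\lambda-c)^{s_{c,j}+m_{c,j}}$ at each step, and the cost of the polynomial division test for $h$. The support of $f$ lies in $\frac{1}{d}\Z \cap [-\mu_j,-\mu_1]$ by \ref{C2}, so the number of loop iterations is $\mathcal O(d(\mu_1-\mu_j))$, which I would bound in terms of $\nu = \max_i \val a_i$ using the geometry of the Newton polygon (slopes and their denominators are controlled by the $\val a_i$ and the abscissae $p^i$). Each application of $L_\lambda$ to the current $f$ spreads each monomial $z^w$ into the $m+1$ monomials $z^{p^i w}$ scaled by $a_i \lambda^i$; since the coefficients live in $\K[\lambda]/(\lambda-c)^{s_{c,j}+m_{c,j}}$ and $s_{c,j}+m_{c,j}\leq m$, the per-step cost is $\mathcal O(p^m m)$ arithmetic operations once the exponent arithmetic is accounted for, with the factor $p^m$ coming from the span of exponents introduced by $\puip^m$. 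The division test for $h$ is an operation in $\K[\lambda]/(\lambda-c)^{s_{c,j}+m_{c,j}}$, costing $\mathcal O(m)$ per step after precomputation. Multiplying the iteration count $\mathcal O(\nu)$ (after the Newton-polygon bound) by the per-step cost $\mathcal O(p^m m)$ and the additional factor $\nu$ from handling the growing support and coefficient sizes yields the claimed $\mathcal O(p^m m \nu^2)$.

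The main obstacle I anticipate is the complexity bookkeeping rather than the correctness argument: pinning down exactly how the valuations $\val a_i$ and the Newton-polygon data translate into a clean $\mathcal O(\nu)$ bound on the number of iterations and on the number of monomials $L_\lambda$ can produce. In particular, one must verify that recomputing $g$ incrementally (rather than from scratch) keeps the total work within $\mathcal O(p^m m \nu^2)$, and that the exponent arithmetic over $\frac{1}{d}\Z$ does not introduce hidden factors of $d$ that would break the stated bound. The correctness direction, by contrast, follows fairly mechanically from the monomial-by-monomial invariant together with Lemmas \ref{lem:chi} and \ref{lem:cond_v} and the characterisation of reduced truncated solutions in Definition \ref{defi:quasi_solution}.
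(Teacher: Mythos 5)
Your correctness argument follows the paper's proof essentially step for step: termination via the strictly increasing sequence of exponents $v$ confined to $\frac{1}{d}\Z\cap[-\mu_j,-\mu_1]$, the monomial-by-monomial forcing invariant, and the use of Lemma \ref{lem:cond_v} to tie the degree bound on $h$ to Condition \ref{C6}. One point you should make explicit when fleshing out the direction ``a reduced truncated solution exists $\Rightarrow$ the algorithm returns True'': Condition \ref{C6} is needed not only to certify that the \emph{output} is reduced, but to guarantee that for each monomial $f_{c,j,k}(\lambda)z^{\gamma_k}$ of the putative solution one has $\deg_{\lambda-c}L_\lambda\bigl(f_{c,j,k}(\lambda)z^{\gamma_k}\bigr)<s_{c,j}+m_{c,j}$, so that reducing modulo $(\lambda-c)^{s_{c,j}+m_{c,j}}$ does not raise the valuation in $z$ and the ``next coefficient is forced'' argument actually goes through. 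This is precisely why the algorithm must search for \emph{reduced} truncated solutions rather than arbitrary ones; your proposal gestures at Lemma \ref{lem:cond_v} only in the other direction.

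The genuine gap is in the complexity bookkeeping, which you yourself flag as the anticipated obstacle. Your factor allocation does not add up. The number of iterations of the while loop is at most $d(\mu_j-\mu_1)+1=\mathcal O(d\nu)=\mathcal O(p^m\nu)$, not $\mathcal O(\nu)$: the lcm $d$ of the denominators of the slopes can be as large as $p^m-1$, so the Newton-polygon geometry does \emph{not} reduce the iteration count to $\mathcal O(\nu)$. Conversely, the per-step cost of incrementally updating $L_\lambda(f)$ truncated at $z$-degree $\val a_0-\mu_1$ is $\mathcal O(m\nu)$, not $\mathcal O(p^m m)$: the paper bounds it by the number of lattice points of the Newton polygon lying above the line of slope $-v$ through $(0,\val a_0-\mu_1)$, which is $\mathcal O\bigl(m(\val a_0-\mu_1)+p^m\mu_j\bigr)=\mathcal O(m\nu)$ using $\mu_j\leq \nu/(p^{m-1}(p-1))$ and $-\mu_1\leq \nu/(p-1)$. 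Your ``additional factor $\nu$ from handling the growing support and coefficient sizes'' is unjustified and is silently covering for this misallocation. The correct product $\mathcal O(p^m\nu)\times\mathcal O(m\nu)$ yields the stated $\mathcal O(p^m m\nu^2)$ cleanly, so the fix is routine, but as written your intermediate claims would not survive scrutiny.
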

\begin{proof}
   Until the end of the proof we let  
   $$(v_i)_{i\geq 1},\,(f_i)_{i\geq 1},\,(g_i)_{i\geq 1},\,(h_i)_{i\geq 1},\,(\alpha_i)_{i\geq 1},\,(\beta_i)_{i\geq 1},$$
   denote the successive values taken by $v,f,g,h,\alpha$ and $\beta$. We first prove that the algorithm stops after finitely many steps. Suppose that the ``While'' loop was already called $i$ times. If either $v_i>-\mu_1$ or $v_i \notin \frac{1}{d}\Z$, the algorithm stops. Assume that $v_i\leq -\mu_1$ and that $v_i \in \frac{1}{d}\Z$. If there exists no $h$ such that $\deg_\lambda h\leq s_{c,j}+m_{c,j} - 1 - {\rm val}_{\lambda-c}(\alpha_i)$ and $\alpha_ih=\beta_i \mod(\lambda-c)^{s_{c,j}+m_{c,j}}$ then the algorithm stops. Suppose on the contrary that such a $h$ exists. Then $h_i=h$ and $f_{i+1}=f_i-h_{i}z^{v_i}$. By definition of the map $\pi$, we have $\val g_i=\val L_\lambda(z^{v_i})$, see \eqref{eq:pi}. Then, there exists $\theta$ such that $\val \theta>\val g_i$ and such that the following holds, modulo $(\lambda-c)^{s_{c,j}+m_{c,j}}$:
    \begin{align*}
    g_{i+1} = L_\lambda(f_{i+1})=g_i -h_{i}L_\lambda(z^{v_i}) = g_i -h_{i}\alpha_{i}z^{\val g_i} + \theta 
=g_i - \cld g_i z^{\val g_i} + \theta\,.
    \end{align*}
    Thus, 
    \begin{equation}
    \label{eq:valgi}
        \val g_{i+1} > \val g_i\quad \text{ and } \quad v_{i+1} = \pi(\val g_{i+1}) > \pi(\val g_i) =v_i\,.
    \end{equation}
    In particular, the sequence $(v_i)_{i\geq 1}$ is increasing. Since the $v_i$'s can only take $d(\mu_j-\mu_1)+1$ distinct values -- the elements of $\frac{1}{d}\Z \cap [-\mu_j,-\mu_1]$ -- the algorithm stops after finitely many steps. Let $t$ be the number of steps. We have $t \leq d(\mu_j-\mu_1)+1$. 

    Let us now prove that the algorithm is correct. Suppose that there exists a reduced truncated solution $f_{c,j}$ associated with $(c,j)$. Let $\gamma_0=-\mu_j<\gamma_1<\cdots<\gamma_s\leq -\mu_1$ be such that $\supp f_{c,j}=\{\gamma_0,\ldots,\gamma_s\}$ and write
   $$
   f_{c,j}=\sum_{k=0}^s f_{c,j,k}(\lambda)z^{\gamma_k}
   $$
   where the polynomials $f_{c,j,k}(\lambda)\in \K[\lambda]$ have degree at most $s_{c,j}+m_{c,j}-1$. We prove by induction on $i \in \{1,\ldots,s+1\}$ that 
   \begin{equation}\label{eq:rec_f_i}
   f_i= \sum_{k=0}^{i-1} f_{c,j,k}(\lambda)z^{\gamma_k}.
   \end{equation}
   It will follow that the number $t$ of steps is equal to $s+1$ and that $\gamma_i=v_i$ for any $i$, $1\leq i \leq s+1$.

  \noindent \textit{Proof of the base case $i=1$.} When $i=1$, \eqref{eq:rec_f_i} follows from \ref{C3} and  \ref{C4}.

   \noindent \textit{Proof of the inductive step.} Suppose that \eqref{eq:rec_f_i} holds for some $i$, $1\leq i \leq s$. Then,
   $$
   f_{i+1}=f_i-h_iz^{v_i}= \sum_{k=0}^{i-1} f_{c,j,k}(\lambda)z^{\gamma_k} - h_iz^{v_i} \mod (\lambda-c)^{s_{c,j}+m_{c,j}}\,.
   $$
   Then, we only have to prove that $v_i = \gamma_i$ and 
   \begin{equation}\label{eq:cld_g}
    h_i=-f_{c,j,i} \mod (\lambda-c)^{s_{c,j}+m_{c,j}}\,.
   \end{equation}
   By induction hypothesis and definition of $g_i$ we have
   \begin{align}
\label{eq:g_i}
\nonumber g_i&= L_\lambda\left(\sum_{k=0}^{i-1} f_{c,j,k}(\lambda)z^{\gamma_k}\right) \mod (\lambda-c)^{s_{c,j}+m_{c,j}}
\\ & = L_\lambda\left(f_{c,j}\right) - L_\lambda\left(\sum_{k=i}^{s} f_{c,j,k}(\lambda)z^{\gamma_k}\right)\mod (\lambda-c)^{s_{c,j}+m_{c,j}}
   \end{align}
From Condition \ref{C6} and Lemma \ref{lem:cond_v}, for any $k>0$, 
$$\deg_{\lambda-c} L_\lambda\left( f_{c,j,k}(\lambda)z^{\gamma_k}\right)=\deg_{\lambda-c} f_{c,j,k}(\lambda)L_\lambda\left(z^{\gamma_k}\right)<s_{c,j}+m_{c,j}.$$
Thus
  \begin{equation}\label{eq:val_Llambda}
\val\left( L_\lambda\left( f_{c,j,k}(\lambda)z^{\gamma_k}\right)\mod (\lambda-c)^{s_{c,j}+m_{c,j}}\right)=\val L_\lambda(z^{\gamma_k})\,.
  \end{equation}
  Thus, by definition of $\pi$,
  $$
  \pi\left(\val\left( L_\lambda\left( f_{c,j,k}(\lambda)z^{\gamma_k}\right)\mod (\lambda-c)^{s_{c,j}+m_{c,j}}\right)\right)=\gamma_k\,.
  $$
  From Condition  \ref{C1} and the definition of $\pi$, for any $k$ with $0 \leq k \leq i-1$, we have
  \begin{equation}\label{eq:pi_val_L_lambda}
  \pi\left(\val \left(L_\lambda(f_{c,j}) \mod  (\lambda-c)^{s_{c,j}+m_{c,j}}\right)\right)>\pi(\val a_0-\mu_1) = -\mu_1 > \gamma_k.
  \end{equation} 
  It follows from \eqref{eq:g_i}, \eqref{eq:val_Llambda} and \eqref{eq:pi_val_L_lambda} that
   $$
  v_i=\pi( \val g_i )=\pi\left( \val \left(L_\lambda\left(\sum_{k=i}^{s} f_{c,j,k}(\lambda)z^{\gamma_k}\right) \mod (\lambda-c)^{s_{c,j}+m_{c,j}}\right)\right)=\gamma_i\,,
   $$  
  and
   \begin{align*}
 \beta_i= \cld   g_i&= -\cld \left( L_\lambda\left(\sum_{k=i}^{s}f_{c,j,k}(\lambda)z^{\gamma_k}\right) \mod (\lambda-c)^{s_{c,j}+m_{c,j}}\right)
 \\&= -\cld \left( L_\lambda(f_{c,j,i}(\lambda)z^{\gamma_i}) \mod (\lambda-c)^{s_{c,j}+m_{c,j}}\right)
 \\&=-f_{c,j,i}(\lambda)\alpha_i \mod (\lambda-c)^{s_{c,j}+m_{c,j}}\,.
   \end{align*}
     Since $\beta_i=h_{i}\alpha_i \mod (\lambda-c)^{s_{c,j}+m_{c,j}}$  we obtain that
  $$
  (f_{c,j,i}+h_{i})\alpha_i \mod (\lambda-c)^{s_{c,j}+m_{c,j}} = 0\,.
  $$
  Since $f_{c,j,i}$ and $h_i$ have degree at most $s_{c,j}+m_{c,j}-{\rm val}_{\lambda-c}(\alpha_i)-1$ in $\lambda-c$, \eqref{eq:cld_g} holds for $i+1$. This proves the inductive step and, by induction, \eqref{eq:rec_f_i} holds for any $i\in \{1,\ldots,s+1\}$.
  
    In the other direction, suppose that the algorithm returns ``True''. Then, it builds a Puiseux polynomial $f=f_t$. At the time the algorithm stops, we have $v =\val g > -\mu_1$, since it does not return ``False''.
    Thus
    $$
    \val\left( L_\lambda(f) \mod (\lambda-c)^{s_{c,j}+m_{c,j}} \right)> -\mu_1
    $$
    and $f$ satisfies Condition  \ref{C1}. By construction it also satisfies Conditions \ref{C2} to \ref{C5}. Eventually it follows from the construction of the polynomials $h$ at each step and from Lemma \ref{lem:cond_v} that it satisfies Condition  \ref{C6}.
   
Now, we compute the complexity.   
   The remainder $r$ can be computed in $\mathcal O(s_{c,j}+m_{c,j})$ which is a $\mathcal O(m)$, for one only has to consider the first $s_{c,j}+m_{c,j}$ terms in the power series expansion of \eqref{eq:cld} in $\lambda-c$. At each step, we only need to compute the coefficients of $L_\lambda(f)$ up to $\val a_0-\mu_1$. For a rational number $v \geq -\mu_j$, the computation of the coefficients of $L_\lambda(z^v)$ up to $\val a_0-\mu_1$ has the same complexity as the number of points of the Newton polygon of $L_\lambda$ which are above the line with slope $-v$ passing through the point $(0,\val a_0-\mu_1)$. 
   This number of points is 
   $$
   \mathcal O(m(\val a_0 - \mu_1) + p^m \mu_j).
   $$
   However, $\val a_0 \leq \nu$, $\mu_1\geq -\frac{\nu}{p-1}$ and $\mu_j\leq \frac{\nu}{p^{m-1}(p-1)}$. Thus, the number of points is in $\mathcal O(m\nu)$.
   Thus, if we keep in memory the computation of the previous $L_\lambda(f)$, the computation of $L_\lambda(f)$ at the next step requires $\mathcal O(m\nu)$ operations. Then, the computation of $v$, $\alpha$, $\beta$ and $\frac{\beta}{\alpha}$ are in $\mathcal O(1)$. Furthermore, there are at most $d(\mu_j-\mu_1)=\mathcal O(d\nu)$ steps.
    Since $d \leq p^m$, this concludes the proof.
\end{proof}

We may now prove Theorem \ref{thm:algo_complexity}. Precisely, we prove that Algorithm \ref{Algo:main}, which is based on Theorem \ref{thm_RS_si_solutions_tronquees}, determines whether a Mahler equation is regular singular or not. We then prove that it has the expected complexity.

\begin{algorithm}[H]
\SetAlgoLined
\KwIn{A $p$-Mahler equation of the form \eqref{eq:Mahler_at_0}.}
\KwOut{Whether this equation is regular singular at $0$ or not.}
\If{some of the $a_i$'s do not belong to $\K[[z]]$}
{Let $\delta$ be such that $a_i \in \K((z^{1/\delta}))$.
\\ Let $v = \min_i\val a_i(z^{\delta})$. 
\\ Run the next steps of the algorithm with the following $p$-Mahler equation
\begin{equation}\label{eq:change_variable}
z^{-v}a_0(z^\delta)f(z)+z^{-v}a_1(z^\delta)f(z^p)+\cdots + z^{-v}a_m(z^\delta)f(z^{p^m})=0\,.
\end{equation}
}
 Set $L$ the operator of the form \eqref{eq:form_operateur} associated with this equation.
 \\ Compute the slopes of $\mathcal N(L)$, the exponents attached to these slopes, their multiplicities and the least common denominator $d$ of their denominators.
\\ \If{the denominator of one of the slopes is not relatively prime with $p$}{\Return False.}
 \For{each $j$ and each exponent $c$ attached to $\mu_j$}{ \If{Algorithm \ref{Algo:fcj} with inputs $(L,\mu_j,c,s_{c,j},m_{c,j},d)$ returns ``False''}{\Return False}
}
 \Return True
 \caption{An algorithm to determine if a Mahler equation is regular singular at $0$.}\label{Algo:main}
\end{algorithm}

\begin{proof}[Proof of Theorem \ref{thm:algo_complexity}]
At the end of the first ``if'' of Algorithm \ref{Algo:main}, the equation we are working with satisfies the condition \ref{A1} given by Definition \ref{def:restriction}. Since the $p$-Mahler equation \eqref{eq:Mahler_at_0} is regular singular at $0$ if and only if the $p$-Mahler equation \eqref{eq:change_variable} is, we might suppose that the initial equation satisfies \ref{A1}.

    Suppose that the algorithm returns ``True''. In particular, the condition in the second ``if'' is not satisfied, that is, the $p$-Mahler equation satisfies \ref{A2} and each call of Algorithm \ref{Algo:fcj} returns ``True''. Then, it follows from Proposition \ref{prop:algo_fcj} that, associated with each pair $(c,j)$, we have a reduced truncated solution. Thus, it follows from Theorem \ref{thm_RS_si_solutions_tronquees} that the equation is regular singular at $0$. Conversely, suppose that the equation is regular singular at $0$. It follows from Theorem \ref{thm:slopes} that the condition in the second ``if'' is not satisfied. In particular, the equation satisfies \ref{A2}. It then follows from Theorem \ref{thm_RS_si_solutions_tronquees} that we can associate a reduced truncated solution with each pair $(c,j)$. Thus, from Proposition \ref{prop:algo_fcj}, each call of Algorithm \ref{Algo:fcj} returns ``True'' and, eventually, Algorithm \ref{Algo:main} returns ``True''.

Suppose that the initial equation satisfies \ref{A1}. The computation of the slopes of the Newton polygon and the associated characteristic polynomials is in $\mathcal O(m)$. Then, since $\K$ is algebraically closed, determining the nonzero roots of these characteristic polynomials and their multiplicities can be done in $\mathcal O(m)$. Checking if the denominator of each slope is relatively prime with $p$ can be done in $\mathcal O(m)$. Suppose that the condtion in the second ``if'' is not satisfied. Then, Algorithm \ref{Algo:main} calls Algorithm \ref{Algo:fcj} at most $m$ times. Thus, it follows from Proposition \ref{prop:algo_fcj} that it has complexity 
    $$
    \mathcal O\left(m^2\nu^2p^m \right).
    $$
\end{proof}

\begin{rem*}
Proposition \ref{prop:RS_equations} can in fact be refined: when the system is regular singular at $0$, there is a basis of solutions of the form $e_c\sum_{j} h_{i,c,j}\ell^j$, with $c$ attached to some slope of the Newton polygon. Then, using the reduced truncated solutions built at each call of Algorithm \ref{Algo:fcj}, one can compute the first coefficients of the Puiseux polynomials $h_{i,c,j}$.
One could also adapt the algorithm of \cite{CDDM18} to perform this task (see the discussion in \cite[p.\,2920--2921]{FP}).
\end{rem*}

\section{Running Algorithm \ref{Algo:main} on an example}\label{sec:exemple}
We propose to run Algorithm \ref{Algo:main} on the $2$-Mahler equation
\begin{equation}\label{eq:ex_inverse_bis}
    z^8f(z^4)-(z^2+z^3+z^7)f(z^2)+(1+z)f(z)=0\,.
\end{equation}
Here, we take $\K=\Q$ and $p=2$. Since the equation has coefficients in $\K[[z]]$, Algorithm \ref{Algo:main} directly computes the slopes of the Newton polygon of \eqref{eq:ex_inverse_bis}, the exponents attached to these slopes and their multiplicities. The Newton polygon is drawn in the left-hand side of the figure below. It has two slopes which are $\mu_1=2$ and $\mu_2=3$. The associated characteristic polynomials are
$$
\chi_1(\lambda)=1-\lambda, \quad \chi_2(\lambda)=-\lambda+\lambda^2\,.
$$
Thus, $c=1$ is the only exponent attached to some slope, and we have $m_{1,1}=m_{1,2}=1$, $s_{1,1}=0$, $s_{1,2}=1$.
Since the least common multiple of the denominators of the slopes is $d=1$ and is relatively prime with $p=2$, Algorithm \ref{Algo:main} calls Algorithm~\ref{Algo:fcj} with $(c,j)=(1,1)$ and $(c,j)=(1,2)$.

\medskip
\noindent \textit{Call of Algorithm \ref{Algo:fcj} with $(c,j)=(1,1)$.}
Algorithm \ref{Algo:fcj} computes the reduction $r$ of $r_{1,1}(\lambda)$ modulo $(\lambda-1)$. 
We have $r=-1$. Then, it sets $f:=rz^{-2}=-z^{-2}$ and
$$
g:=L_\lambda(f) \mod (\lambda - 1) = z^3-1\,.
$$
Then it sets $v:=\pi(\val g)=\pi(0)=0$. Since $v = 0 > - 2=-\mu_1$, Algorithm \ref{Algo:fcj} stops and returns ``True''.

\medskip
\noindent \textit{Call of Algorithm \ref{Algo:fcj} with $(c,j)=(1,2)$.}
Algorithm \ref{Algo:fcj} computes $r_{1,2}(\lambda)$ modulo $(\lambda-1)^2$:
$$
r_{1,2}(\lambda)=\lambda^{-1}\frac{1}{a_{0,0}}\frac{(\lambda - 1)}{1} = \lambda^{-1}(\lambda-1)\equiv\lambda-1 \mod (\lambda-1)^2 \,.
$$
Then, it sets $f:=(\lambda-1)z^{-3}$ and
\begin{align*}
g&:=L_\lambda(f) \mod (\lambda - 1)^2
\\&= (\lambda-1)(\lambda^2z^8z^{-12}-\lambda(z^2+z^3+z^7)z^{-6}+(1+z)z^{-3}) \mod (\lambda - 1)^2
\\ &= (\lambda - 1)(z^{-2}-z) \,.
\end{align*}
Then, it sets $v:=\pi(\val g)=\pi(-2)=-2$. We have $v \leq -\mu_1=-2$ and $v \in \frac{1}{d}\Z$, where $d=1$. Then, the algorithm sets $\alpha:=\cld L_\lambda(z^{-2})=1-\lambda$, $b:=\cld(g)=\lambda-1$. Thus, it sets $h:=-1$ so that $h\alpha=\beta$. Then it replaces $f$ with 
$$f-hz^{v}=(\lambda-1)z^{-3}+z^{-2}\,.$$
Then, is sets $g$ to a new value:
\begin{align*}
g&:=L_\lambda(f) \mod (\lambda - 1)^2 = L_\lambda((\lambda-1)z^{-3}+z^{-2})\mod (\lambda - 1)^2 
\\&= (\lambda - 1)(-z^{-1}+2-z-z^3) +1-z^3,
\end{align*}
and set $v:=\pi(\val g)=\pi(-1)=-1$. We have $v = -1 > -\mu_1=-2$. Thus, Algorithm~\ref{Algo:fcj} stops and returns ``True''.

\medskip
\noindent \textit{End of the execution of Algorithm \ref{Algo:main}.} Since no call to Algorithm \ref{Algo:fcj} has returned ``False'', Algorithm \ref{Algo:main} returns ``True'' and the system is regular singular at $0$.

\begin{rem}\label{rem:inverse}
    In \cite{FP}, we asked for a characterization of operators of the form \eqref{eq:form_operateur} regular singular at $0$ whose \textit{inverse} $a_0\puip^m+a_1\puip^{m-1} +\cdots + a_{m-1}\puip+a_m=0$ also is.  The equation \eqref{eq:ex_inverse_bis}
    is regular singular at $0$, as we have just seen. Using Algorithm \ref{Algo:main} we leave to the reader to check that so is its inverse
     $$
    z^8f(z)-(z^2+z^3+z^7)f(z^2)+(1+z)f(z^4)=0\,.
    $$
 In the meantime, when $\alpha \in \C \setminus\{1\}$, the $2$-Mahler equation
    $$
    z^8f(z^4)-(z^2+z^3+\alpha z^7)f(z^2)+(1+z)f(z)=0
    $$
    has the same Newton polygon than Equation \eqref{eq:ex_inverse_bis} and is regular singular at $0$. However, its inverse
     $$
    z^8f(z)-(z^2+z^3+\alpha z^7)f(z^2)+(1+z)f(z^4)=0\,
    $$
    is not regular singular at $0$, as one could check using Algorithm \ref{Algo:main} (see also the remark below). 
This emphasizes the fact that such a property cannot be read from the Newton polygon.
           \begin{figure}[h!]
\begin{subfigure}[b]{0.48\textwidth}
    \centering
        \resizebox{\linewidth}{!}{
            \definecolor{rfsqsq}{rgb}{0,0,0}
\begin{tikzpicture}[line cap=round,line join=round,>=triangle 45,x=0.7cm,y=0.4cm]
\clip(-3,-1) rectangle (8,9);
\fill[line width=1pt,color=rfsqsq,fill=rfsqsq,fill opacity=0.1] (1,0) -- (2,2) -- (4,8) -- (4,8.8) -- (1,8.8) -- cycle;
\draw [line width=1pt] (1,0)-- (2,2);
\draw [line width=1pt] (2,2)-- (4,8);
\draw [line width=1pt] (1,0) -- (1,8.8);
\draw [line width=1pt] (4,8) -- (4,8.8);
\draw [->,line width=1pt] (0,0) -- (5.5,0);
\draw [->,line width=1pt] (0,0) -- (0,8.8);
\draw (0.8,-0.02) node[anchor=north west] {$1$};
\draw (-0.7,1.7) node[anchor=north west] {$1$};
\draw (1.8,-0.02) node[anchor=north west] {$2$};
\draw (3.8,-0.02) node[anchor=north west] {$4$};
\begin{scriptsize}
\draw [fill=black] (1,0) circle (1.5pt);
\draw [fill=black] (1,1) circle (1.5pt);
\draw [fill=black] (2,2) circle (1.5pt);
\draw [fill=black] (2,3) circle (1.5pt);
\draw [fill=black] (2,7) circle (1.5pt);
\draw [fill=black] (4,8) circle (1.5pt);
\draw (4,0) node {$\vert$};
\draw (2,0) node {$\vert$};
\draw (0,1) node {\textemdash};
\draw (0,2) node {\textemdash};
\draw (0,3) node {\textemdash};
\draw (0,4) node {\textemdash};
\draw (0,5) node {\textemdash};
\draw (0,6) node {\textemdash};
\draw (0,7) node {\textemdash};
\draw (0,8) node {\textemdash};
\end{scriptsize}
\end{tikzpicture}
}
    \end{subfigure}
\begin{subfigure}[b]{0.48\textwidth}
    \centering
        \resizebox{\linewidth}{!}{
            \definecolor{rfsqsq}{rgb}{0,0,0}
\begin{tikzpicture}[line cap=round,line join=round,>=triangle 45,x=0.7cm,y=0.4cm]
\clip(-3,-1) rectangle (8,9);
\fill[line width=1pt,color=rfsqsq,fill=rfsqsq,fill opacity=0.1] (1,8) -- (2,2) -- (4,0) -- (4,8.8) -- (1,8.8) -- cycle;
\draw [line width=1pt] (1,8)-- (2,2);
\draw [line width=1pt] (2,2)-- (4,0);
\draw [line width=1pt] (1,8) -- (1,8.8);
\draw [line width=1pt] (4,0) -- (4,8.8);
\draw [->,line width=1pt] (0,0) -- (5.5,0);
\draw [->,line width=1pt] (0,0) -- (0,8.8);
\draw (0.8,-0.02) node[anchor=north west] {$1$};
\draw (-0.7,1.7) node[anchor=north west] {$1$};
\draw (1.8,-0.02) node[anchor=north west] {$2$};
\draw (3.8,-0.02) node[anchor=north west] {$4$};
\begin{scriptsize}
\draw [fill=black] (1,8) circle (1.5pt);
\draw [fill=black] (4,1) circle (1.5pt);
\draw [fill=black] (2,2) circle (1.5pt);
\draw [fill=black] (2,3) circle (1.5pt);
\draw [fill=black] (2,7) circle (1.5pt);
\draw [fill=black] (4,0) circle (1.5pt);
\draw (1,0) node {$\vert$};
\draw (2,0) node {$\vert$};
\draw (0,1) node {\textemdash};
\draw (0,2) node {\textemdash};
\draw (0,3) node {\textemdash};
\draw (0,4) node {\textemdash};
\draw (0,5) node {\textemdash};
\draw (0,6) node {\textemdash};
\draw (0,7) node {\textemdash};
\draw (0,8) node {\textemdash};
\end{scriptsize}
\end{tikzpicture}
}
    \end{subfigure}

  \caption{\small\textit{The Newton polygons associated with Equation \eqref{eq:ex_inverse_bis} (on the left) and its inverse equation (on the right).}}
\end{figure}
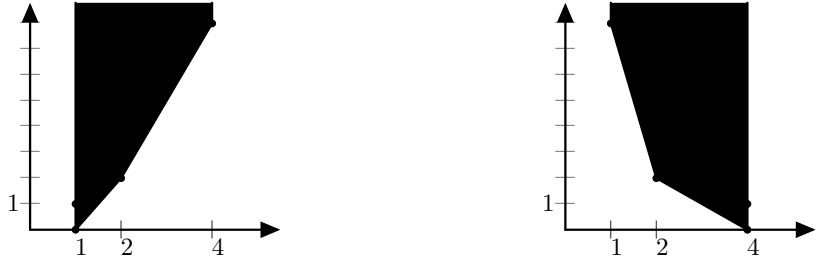
\end{rem}
\begin{rem}\label{rem:heuristic}
   Let us end with some heuristic about regular singular equation, by explaining why the operator $M=
     z^8-(z^2+z^3+\alpha z^7)\phi_2+(1+z)\phi_2^2$ from Remark \ref{rem:inverse} is not regular singular at $0$ when $\alpha \neq 1$ while it is when $\alpha=1$. The associated Newton polygon has two slopes, $-6$ and $-1$ with multiplicity one and exponent $c=1$. We know that there is a truncated solution associated with the first slope, from Lemma~\ref{lem:first_slope} below. Assume that there is a truncated solution associated with the second slope. We choose its coefficients by induction. From Condition \ref{C4}, it would have valuation $1$. For any non-zero $a \in \K[[\lambda-1]]$, 
     $$
     M_\lambda(az^1)\equiv (1-\alpha)az^9 \mod (\lambda- 1).
     $$
     When $\alpha \neq 1$ the first slope of the Newton polygon of the inhomogeneous equation $M(y)=-(1-\alpha)az^9$ is equal to $\frac{7}{2}$. Since its denominator is not coprime with $2$, we cannot build this truncated solution. Thus, the operator is not regular singular at $0$. On the contrary, when $\alpha=1$, such an obstruction does not exists and we can build this truncated solution. Thus, the operator is regular singular at $0$. Since the Newton polygon does not depend on the value of $\alpha$, this emphasizes the fact that, contrary to the differential case, the property of being regular singular cannot be read from the Newton polygon associated with a Mahler equation.
    \end{rem}
    \begin{rem}\label{rem:g_cj}
    Continue with the operator $M$ and set $\alpha=1$ so that the equation is regular singular at $0$. Consider the function $g_{1,2}(\lambda,z)$ associated with $c=1$ and with the second slope $\mu_2=-1$. One can check that
    $$
g_{1,2}(\lambda,z) = \frac{\lambda-1}{\lambda}z+\frac{(\lambda-1)^2}{\lambda}z^{\frac{3}{2}} + \mathcal O(z^2)\,.
    $$
    Since the least common denominator $d$ of the slopes of $\mathcal N(M)$ is equal to $1$ and since $g_{1,2}\notin \K((z))(\lambda)$, $g_{1,2}$ is not a Puiseux series in $z$, although the equations is regular singular at $0$.
    \end{rem}

\section{Newton polygons with one slope}\label{sec:oneslope}
As an application of Theorem \ref{thm_RS_si_solutions_tronquees}, we study here the case of equations whose Newton polygon has only one slope.
We prove that they are always regular singular at $0$.
\begin{prop}\label{prop:one_slope}
   Consider a $p$-Mahler equation over $\Puis$ of the form \eqref{eq:Mahler_at_0} for which the associated Newton polygon has only one slope. Then the Mahler equation is regular singular at $0$.
 \end{prop}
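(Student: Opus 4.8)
The plan is to reduce the statement to the hypotheses of Theorem \ref{thm_RS_si_solutions_tronquees} and then, exploiting the fact that there is a single slope, to exhibit an explicit truncated solution for each exponent. First I would use Remark \ref{rem:PowerSeries} to replace the equation by an equivalent one with power series coefficients, so that \ref{A1} holds; this operation does not change the number of slopes, so the Newton polygon still has a single slope $\mu_1$. The key preliminary observation is that \ref{A2} then holds automatically. Indeed, with one slope the unique edge joins $(1,\val a_0)$ to $(p^m,\val a_m)$, so $\mu_1 = \frac{\val a_m - \val a_0}{p^m-1}$ and the denominator of $\mu_1$ divides $p^m-1$, which is coprime with $p$. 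Hence both \ref{A1} and \ref{A2} hold and Theorem \ref{thm_RS_si_solutions_tronquees} applies.

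Next I would invoke criterion $({\rm ii})$ of Theorem \ref{thm_RS_si_solutions_tronquees}: it suffices to produce, for the single slope $\mu_1$ and each attached exponent $c$, a truncated solution $f_{c,1}$. Since $\kappa=1$ we have $s_{c,1}=0$, and Condition \ref{C2} forces the support to lie in $\{-\mu_1\}$. This is the crucial simplification: a candidate truncated solution is necessarily a single $z$-monomial, so I would simply set
$$
f_{c,1} := \big(r_{c,1}(\lambda)\bmod (\lambda-c)^{m_{c,1}}\big)\,z^{-\mu_1}.
$$
Conditions \ref{C2}, \ref{C3}, \ref{C4} and \ref{C5} are then immediate from the explicit shape of $r_{c,1}$ in \eqref{eq:cld}: its $(\lambda-c)$-adic valuation is $s_{c,1}=0$, so its reduction is nonzero of degree at most $m_{c,1}-1$, which gives $\val f_{c,1}=-\mu_1$ and $\cld f_{c,1}=r_{c,1}\bmod(\lambda-c)^{m_{c,1}}$.

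The only real work is Condition \ref{C1}, and this is where I expect the main (though still short) obstacle. By Lemma \ref{lem:chi} the lowest $z$-coefficient of $L_\lambda(z^{-\mu_1})$ is $\chi_1(\lambda)$, of $z$-valuation $\theta_1=\val a_0-\mu_1$, so the lowest coefficient of $L_\lambda(f_{c,1})$ equals $\big(r_{c,1}\bmod(\lambda-c)^{m_{c,1}}\big)\chi_1(\lambda)$. Here I would use that, with one slope, the multiplicity $r_1=m$ and all $m$ exponents are attached to $\mu_1$, so $\chi_1$ factors completely as $\chi_1(\lambda)=\cld(a_m)\prod_{c'}(\lambda-c')^{m_{c',1}}$. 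Combining this with the explicit numerator of \eqref{eq:cld} together with the identity $\chi_1(0)=\cld a_0$, I expect to obtain the clean identity $r_{c,1}(\lambda)\chi_1(\lambda)=(\lambda-c)^{m_{c,1}}$.

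Consequently the lowest $z$-coefficient of $L_\lambda(f_{c,1})$ vanishes modulo $(\lambda-c)^{m_{c,1}}$, so that $\val\big(L_\lambda(f_{c,1})\bmod(\lambda-c)^{m_{c,1}}\big)>\val a_0-\mu_1$, which is exactly \ref{C1}. Having produced a truncated solution for every pair $(c,1)$, criterion $({\rm ii})$ of Theorem \ref{thm_RS_si_solutions_tronquees} then yields that the equation is regular singular at $0$, completing the argument.
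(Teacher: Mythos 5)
Your proof is correct, and it takes a genuinely more direct route than the paper for the key step. Both arguments reduce the proposition to criterion $({\rm ii})$ of Theorem \ref{thm_RS_si_solutions_tronquees} (after the same reduction to \ref{A1} via Remark \ref{rem:PowerSeries} and the same observation that the denominator of the unique slope divides $p^m-1$, giving \ref{A2}). The paper, however, produces the truncated solutions indirectly: it proves the stronger Lemma \ref{lem:first_slope}, namely that the full series $g_{c,1}(\lambda,z)$ of \eqref{eq:g_cj} lies in $\Puis_{\K(\lambda)}$, by solving an inhomogeneous equation with Lemma \ref{lem:solutions_inhomo_Puiseux} and invoking the uniqueness of $g_{c,1}$ from \cite[Proposition 22]{Ro23}; the truncated solutions are then obtained by truncation. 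You instead observe that with a single slope Condition \ref{C2} collapses the support to the single point $-\mu_1$, write the candidate $f_{c,1}=\bigl(r_{c,1}(\lambda)\bmod(\lambda-c)^{m_{c,1}}\bigr)z^{-\mu_1}$ explicitly, and verify \ref{C1} through the identity $r_{c,1}(\lambda)\chi_1(\lambda)=(\lambda-c)^{m_{c,1}}$ — which does hold, since with one slope $r_1=m$, $\chi_1(\lambda)=\cld(a_m)\prod_{c'}(\lambda-c')^{m_{c',1}}$ and $\chi_1(0)=a_{0,\val a_0}$, exactly as you anticipated. Your version is more elementary and self-contained (no appeal to Lemma \ref{lem:solutions_inhomo_Puiseux} or to the uniqueness result of \cite{Ro23}), at the cost of proving less: the paper's Lemma \ref{lem:first_slope} establishes that $g_{c,1}$ itself is a Puiseux series, a fact that is reused later (in Remark \ref{rem:heuristic} and in the proof of Proposition \ref{prop:a_m_tronques}), whereas your argument only yields the truncated solution needed here.
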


The proposition immediately follows from Theorem \ref{thm_RS_si_solutions_tronquees} and the following lemma.

\begin{lem}\label{lem:first_slope}
   Consider a $p$-Mahler equation over $\Puis$ of the form \eqref{eq:Mahler_at_0}. Let $c$ be an exponent attached to the first slope $\mu_1$ of the associated Newton polygon. Then, the unique solution $g_{c,1}(\lambda,z)$ of Equation \eqref{eq:g_cj} belongs to $\Puis_{\K(\lambda)}$.
\end{lem}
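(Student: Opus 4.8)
The plan is to exhibit a Puiseux series over $\K(\lambda)$ solving the equation \eqref{eq:g_cj-bis} for $j=1$ and then to invoke the uniqueness of $g_{c,1}$ recorded in Section \ref{sec:Frobenius}. Since there is no slope before $\mu_1$, we have $s_{c,1}=0$, so \eqref{eq:g_cj-bis} reads $L_\lambda(g_{c,1})=z^{\theta_1}(\lambda-c)^{m_{c,1}}$. Moreover, as $(p^0,\val a_0)$ is the left endpoint of $\mathcal N(L)$, the index $0$ lies in $\I_1$ and every point $(p^i,\val a_i)$ on the edge of slope $\mu_1$ satisfies $\val a_i=\val a_0+\mu_1(p^i-1)$; a short computation then gives $\theta_1=\val L_\lambda(z^{-\mu_1})=\val a_0-\mu_1$.

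First I would set $g^{(0)}:=r_{c,1}(\lambda)z^{-\mu_1}$, a monomial in $\Puis_{\K(\lambda)}$, and compute its image. By $\K(\lambda)$-linearity and Lemma \ref{lem:chi}, $L_\lambda(g^{(0)})=r_{c,1}(\lambda)L_\lambda(z^{-\mu_1})$ has valuation $\theta_1$ and lowest-degree coefficient $r_{c,1}(\lambda)\chi_1(\lambda)$. The key computation is that this coefficient equals $(\lambda-c)^{m_{c,1}}$: using the explicit formula \eqref{eq:cld} with $s_{c,1}=0$, together with the factorization $\chi_1(\lambda)=\ell\prod_{c'}(\lambda-c')^{m_{c',1}}$ (where $\ell$ denotes the leading coefficient of $\chi_1$) and the identity $\chi_1(0)=\cld a_0$, valid because $0\in\I_1$ so that every root of $\chi_1$ is nonzero, the constant factors cancel and one gets $r_{c,1}(\lambda)\chi_1(\lambda)=(\lambda-c)^{m_{c,1}}$. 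Hence the lowest-degree terms of $L_\lambda(g^{(0)})$ and of the right-hand side $z^{\theta_1}(\lambda-c)^{m_{c,1}}$ coincide.

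Consequently the Puiseux series $a_\infty:=z^{\theta_1}(\lambda-c)^{m_{c,1}}-L_\lambda(g^{(0)})\in\Puis_{\K(\lambda)}$ satisfies $\val a_\infty>\theta_1=\val a_0-\mu_1$. I would then apply Lemma \ref{lem:solutions_inhomo_Puiseux} to the operator $L_\lambda$, whose constant coefficient $a_0$ is nonzero and whose first slope is $\mu_1$ (its Newton polygon being that of $L$), to obtain some $h\in\Puis_{\K(\lambda)}$ with $\val h>-\mu_1$ and $L_\lambda(h)=a_\infty$. The series $\tilde g:=g^{(0)}+h\in\Puis_{\K(\lambda)}$ then satisfies $L_\lambda(\tilde g)=z^{\theta_1}(\lambda-c)^{m_{c,1}}$, has valuation $-\mu_1$ and lowest-degree coefficient $r_{c,1}(\lambda)$. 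Since the Hahn solution of \eqref{eq:g_cj-bis} is unique, $\tilde g=g_{c,1}$, and therefore $g_{c,1}\in\Puis_{\K(\lambda)}$.

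The main obstacle is the leading-coefficient identity $r_{c,1}(\lambda)\chi_1(\lambda)=(\lambda-c)^{m_{c,1}}$, which is pure bookkeeping with \eqref{eq:cld}; everything else reduces to the already available Lemmas \ref{lem:chi} and \ref{lem:solutions_inhomo_Puiseux}. One should also record that $a_\infty$ is genuinely a Puiseux series, being a finite $\K(\lambda)$-combination of the $a_i$ and of monomials, so that Lemma \ref{lem:solutions_inhomo_Puiseux} indeed applies, and that the final appeal to uniqueness is exactly the characterization of $g_{c,1}$ as the unique Hahn solution of \eqref{eq:g_cj-bis}. I expect no difficulty specific to the parameter $\lambda$, since all valuations in $z$ are unaffected by the substitution $a_i\mapsto a_i\lambda^i$.
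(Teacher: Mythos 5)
Your proposal is correct and follows essentially the same route as the paper: write the candidate as $r_{c,1}(\lambda)z^{-\mu_1}+h$, obtain $h$ from Lemma \ref{lem:solutions_inhomo_Puiseux} applied to the inhomogeneous equation with right-hand side $a_\infty=z^{\theta_1}(\lambda-c)^{m_{c,1}}-L_\lambda(r_{c,1}z^{-\mu_1})$, and conclude by uniqueness of $g_{c,1}$. The only cosmetic difference is that you justify $\val a_\infty>\theta_1$ by the explicit cancellation $r_{c,1}(\lambda)\chi_1(\lambda)=(\lambda-c)^{m_{c,1}}$ (which is indeed what \eqref{eq:cld} gives when $s_{c,1}=0$ and $0\in\I_1$), whereas the paper reads it off from \eqref{eq:g_cj-bis}; both are valid.
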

In particular, for any $c$ attached to the first slope, there exists a truncated series $f_{c,1}$ associated with $(c,1)$.
\begin{proof}
By definition of $\theta_1$, introduced at the beginning of Section \ref{sec:Frobenius}, we have $\theta_1=\val a_0 - \mu_1$. Let $r_{c,1}(\lambda) $  be defined by \eqref{eq:cld}. Set 
    $$a_\infty(\lambda,z) = -L_{\lambda}( r_{c,1}(\lambda,z)z^{-\mu_1})+z^{\theta_1}(\lambda - c)^{s_{c,j}+m_{c,j}}.$$
    By \eqref{eq:g_cj-bis}, we have $\val a_\infty(\lambda,z) > \theta_1=\val(a_0(z)) - \mu_1$ and it follows from Lemma~\ref{lem:solutions_inhomo_Puiseux} that the equation
    $  L_{\lambda}(y)=a_\infty(\lambda,z)$ has a Puiseux solution $h \in \Puis_{\K(\lambda)}$.  
Then
     \begin{align*}
     L_{\lambda}(r_{c,1}z^{-\mu_1} + h)&=L_{\lambda}( r_{c,1}(\lambda,z)z^{-\mu_1})+L_{\lambda}(h)
     \\&=L_{\lambda}(r_{c,1}(\lambda,z)z^{-\mu_1})+a_\infty(\lambda,z)
     \\&=z^{\theta_1}(\lambda - c)^{s_{c,j}+m_{c,j}} \,.
     \end{align*}
     Thus
    $r_{c,1}z^{-\mu_1} + h$ satisfies \eqref{eq:g_cj}. Since, from \cite[Proposition 22]{Ro23}, there is only one solution to \eqref{eq:g_cj}, we have $r_{c,1}z^{-\mu_1} + h=g_{c,1}$. In particular, $g_{c,1}(\lambda,z)\in \Puis_{\K(\lambda)}$.
\end{proof}
    
\begin{rem}\label{rem:Fuchsian}
An equation is said to be \textit{Fuchsian at $0$} if the associated companion matrix $A_L$, defined by \eqref{eq:companion_matrix}, is well-defined and non-singular at $0$. It is not difficult to prove that the following assertions are equivalent:
 \begin{itemize}
        \item the equation is Fuchsian at $0$;
        \item the associated Newton polygon has only one slope which is null;
        \item $\val a_0 = \val a_m =\min_{0\leq i \leq m}\val a_i$.
\end{itemize}
\end{rem}

\section{Regular singularity at \texorpdfstring{$0$}{0} for large \texorpdfstring{$p$}{p} }\label{sec:p_grand}

Consider a $p$-Mahler equation \eqref{eq:Mahler_at_0} with $a_i \in \K[[z]]$ and $p> \max_i(\val(a_i))$. Suppose that the equation is regular singular at $0$. According to Corollary \ref{coro:sensdirect}, one of the following holds: 
\begin{itemize} 
\item the Newton polygon has only one slope;
\item it has exactly two slopes, the second one being null.
\end{itemize} 
This section is organized as follows. We first prove Corollary \ref{coro:sensdirect}. The case when the Newton polygon has only one slope was the subject of the previous section. We study the case with two slopes in Section \ref{sec:twoslopes}. Finally, we prove Theorem \ref{thm:p_varies}.

\subsection{Proof of Corollary \ref{coro:sensdirect}}

Suppose that the Newton polygon of the equation has two slopes or more. The left and right endpoints of the edge corresponding to the second slope are respectively  $(p^i,\val a_i)$ and $(p^j,\val a_j)$, for some integers $i,j$ with $1 < i < j \leq m$. The associated slope is
$$
\mu_2=\frac{\val a_j - \val a_i}{p^j - p^i}\,.
$$
Since the system is regular singular at $0$, it follows from Theorem \ref{thm:slopes} that the denominator of $\mu_2$ is relatively prime with $p$. Thus, $p$ must divide $\val a_j - \val a_i$. However, since $p>\max\{\val a_j,\val a_i\}$, we must have $\val a_i=\val a_j$, that is $\mu_2 = 0$. 

Suppose that there is a third slope.  Arguing as previously, we obtain $\mu_3=0$, which contradicts the fact that $\mu_3 > \mu_2$.
Thus, the Newton polygon has at most two slopes and, in that case, it follows from the first part of the proof that its second slope is null.

\qed

\subsection{Newton polygons with two slopes}\label{sec:twoslopes}
In this section, we consider a $p$-Mahler equation \eqref{eq:Mahler_at_0} and its associated operator $L$, which satisfy the following assumptions :
\begin{itemize}
    \item[$\rm (i)$] $a_i(z) \in \K[[z]]$ ;
    \item[$\rm (ii)$] $p> \max_i \val(a_i)$ ;
    \item[$\rm (iii)$] the Newton polygon $\mathcal N(L)$ has two slopes, the second one being null.
\end{itemize}

Without loss of generality, we may also assume $\min_i \val a_i=0$, otherwise we may divide each of the power series $a_j$ with $z^{\min_i\val a_i}$. We let $k\in [1,m-1]$ be the integer such that the point $(p^k,\val(a_k))$ is the right endpoint of the edge corresponding to the first slope. Since $\mathcal N(L)$ has two slopes and the second slope is null, we have $\val(a_k)=\val(a_m)=0$ and $\val a_0 \geq 1$.

For every integer $i\in [0,m]$, we write $a_i(z)=\sum_{n=0}^\infty a_{i,n}z^n$ and let
$$
\underline{a_i}(z):=\sum_{n=0}^{\val(a_0)} a_{i,n}z^n\, .
$$ 
Recall that $k\in [1,m-1]$ and $1\leq \val(a_0)<p$. Since $\frac{\val(a_0)}{p^k-1}=-\mu_1$, we have $-\mu_1\in(0,1]$. Moreover $-\mu_1 = 1$ if and only if $ k=1$ and $ \val(a_0) = p-1$.

\begin{prop}\label{prop:a_m_tronques}
   We continue with the assumptions $\rm (i)$ to $\rm (iii)$. 
The equation~\eqref{eq:Mahler_at_0} is regular singular at $0$ if and only if for any exponent $c$ attached to the slope $\mu_2=0$, we have
    \begin{equation}
        \label{eq:condition2slopes}
       \sum_{i=0}^m \underline{a_i}(z)\lambda^i
    \in (\lambda - c)^{m_{c,2}}\K[\lambda,z].
    \end{equation}  
\end{prop}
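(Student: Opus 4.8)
The plan is to use Theorem~\ref{thm_RS_si_solutions_tronquees} to turn regular singularity into the existence of truncated solutions, and to observe that only the slope $\mu_2=0$ can create an obstruction. Since the Newton polygon has exactly the two slopes $\mu_1<\mu_2=0$, with the denominator of $\mu_1$ dividing $p^k-1$, assumption \ref{A2} holds automatically and Theorem~\ref{thm_RS_si_solutions_tronquees} applies. For the slope $\mu_1$, Lemma~\ref{lem:first_slope} (and the remark after it) provides a truncated solution associated with every $(c,1)$, and for $j=1$ Condition~\ref{C6} is vacuous, so these are automatically \emph{reduced}. Consequently the equation is regular singular at $0$ if and only if, for every exponent $c$ attached to $\mu_2$, there is a reduced truncated solution associated with $(c,2)$. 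It thus remains to prove, for each such $c$, that this existence is equivalent to \eqref{eq:condition2slopes}.

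First I would fix the local bookkeeping. Write $b_n(\lambda)=\sum_{i=0}^m a_{i,n}\lambda^i$, so that $\sum_i\underline{a_i}(z)\lambda^i=\sum_{n=0}^{\val a_0}b_n(\lambda)z^n$ and $b_0=\chi_2$. As $c$ is a root of $\chi_2$ of multiplicity $m_{c,2}$, condition \eqref{eq:condition2slopes} is equivalent to $(\lambda-c)^{m_{c,2}}\mid b_n(\lambda)$ for $1\le n\le \val a_0$ (the case $n=0$ being automatic). Set $M:=s_{c,2}+m_{c,2}=m_{c,1}+m_{c,2}$ and recall from \eqref{eq:cld} that $r_{c,2}(\lambda)$ has $(\lambda-c)$-adic valuation $s_{c,2}=m_{c,1}$. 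The pivotal remark is then that, for the reduction $\bar r$ of $r_{c,2}$ modulo $(\lambda-c)^M$, one has ${\rm val}_{\lambda-c}(\bar r\, b_n)=m_{c,1}+{\rm val}_{\lambda-c}(b_n)$, so that $\bar r\,b_n\equiv 0 \bmod (\lambda-c)^M$ if and only if $(\lambda-c)^{m_{c,2}}\mid b_n$.

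Now the core of the argument, which I would run through Algorithm~\ref{Algo:fcj} (correct by Proposition~\ref{prop:algo_fcj}). Conditions \ref{C3} and \ref{C4} force the construction to start with $f:=\bar r\, z^{0}$, and $L_\lambda(f)\equiv \bar r\sum_{n\ge 0}b_n(\lambda)z^n \bmod (\lambda-c)^M$, whose coefficient of $z^n$ is $\bar r\,b_n$ for $n\le\val a_0$. If \eqref{eq:condition2slopes} holds, all these coefficients vanish modulo $(\lambda-c)^M$ for $n\le\val a_0$; since the exponents occurring are integers, when $-\mu_1<1$ the next degree $\val a_0+1$ exceeds $\val a_0-\mu_1$, so $f=\bar r z^0$ already satisfies \ref{C1}, and trivially \ref{C2}, \ref{C5}, \ref{C6}, hence is a reduced truncated solution. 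In the boundary case $-\mu_1=1$ (i.e. $k=1$ and $\val a_0=p-1$) one possibly appends a single monomial at $v=-\mu_1=1\in\Z$; since $\cld L_\lambda(z^{-\mu_1})=\chi_1$ has $(\lambda-c)$-adic valuation $m_{c,1}$, the needed $h$ exists with $\deg_\lambda h\le m_{c,2}-1$. Conversely, if \eqref{eq:condition2slopes} fails, let $n_0$ be minimal with $(\lambda-c)^{m_{c,2}}\nmid b_{n_0}$, so $1\le n_0\le\val a_0$; by the pivotal remark $\val(L_\lambda(f)\bmod (\lambda-c)^M)=n_0$, and the next exponent is $\pi(n_0)$. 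A direct computation with \eqref{eq:pi}, using that $(p^k,0)$ is the right endpoint of the first edge, gives $\pi(\gamma)=\gamma/p^k$ for $0<\gamma\le \val a_0-\mu_1=\val a_0\,p^k/(p^k-1)$; in particular $\pi(n_0)=n_0/p^k$. Since $d\mid p^k-1$ we have $\gcd(d,p^k)=1$, while $0<n_0\le\val a_0<p\le p^k$ forces $p^k\nmid n_0$, whence $n_0/p^k\notin\frac1d\Z$. Thus the algorithm returns ``False'' at its first loop, so no reduced truncated solution exists, and by Theorem~\ref{thm_RS_si_solutions_tronquees} the equation is not regular singular at $0$.

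Putting the two directions together with the reduction of the first paragraph yields the proposition. I expect the main obstacle to be the computation $\pi(\gamma)=\gamma/p^k$ on the relevant range together with the resulting ramification obstruction $n_0/p^k\notin\frac1d\Z$: this is exactly where the standing hypotheses $-\mu_1\le 1$ and $\val a_0<p$ (from $p>\max_i\val a_i$) enter, and where the boundary case $-\mu_1=1$ must be handled by hand.
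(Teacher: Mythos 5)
Your proposal is correct, and its overall skeleton matches the paper's: both reduce the statement via Theorem \ref{thm_RS_si_solutions_tronquees}, dispose of the slope $\mu_1$ with Lemma \ref{lem:first_slope}, and in the converse direction construct the same explicit truncated solutions ($\bar r z^0$ when $-\mu_1<1$, with the single corrective monomial at $z^1$ in the boundary case $-\mu_1=1$). Where you genuinely diverge is the forward direction. The paper argues directly on an arbitrary truncated solution $f_{c,2}$: it splits it as $p_0(\lambda)+zp_1(\lambda)+q(\lambda,z)$, observes that $\underline{L}_\lambda$ sends genuinely fractional powers of $z$ to fractional powers (since their denominators divide $d$ and $\gcd(d,p)=1$), so that Condition \ref{C1} must be satisfied by the integer part alone, and then divides by $p_0$ using ${\rm val}_{\lambda-c}\,p_0=s_{c,2}$. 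You instead take the contrapositive and run Algorithm \ref{Algo:fcj}: if $n_0$ is the first index where $(\lambda-c)^{m_{c,2}}\nmid b_{n_0}$, the next exponent the algorithm demands is $\pi(n_0)=n_0/p^k\notin\frac1d\Z$, so it returns ``False'' and Proposition \ref{prop:algo_fcj} plus Theorem \ref{thm_RS_si_solutions_tronquees} conclude. The two mechanisms exploit the same ramification incompatibility (and your version is essentially the heuristic of Remark \ref{rem:heuristic} made rigorous), but yours delegates the bookkeeping to the already-proved correctness of the algorithm, at the price of the explicit computation $\pi(\gamma)=\gamma/p^k$ on $(0,\val a_0-\mu_1]$, which you carry out correctly using that $(p^k,0)$ is a vertex and $\val a_0<p$; the paper's version is self-contained and, as a by-product, isolates the identity \eqref{eq:p0_null} used nowhere else. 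Both directions of your argument check out, including the boundary case $-\mu_1=1$ where $\alpha=\chi_1$ has $(\lambda-c)$-adic valuation $s_{c,2}=m_{c,1}$ so the required $h$ exists.
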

\begin{proof}

We consider the operator
$$
\underline{L}:=\underline{a_m}(z)\puip^m + \cdots +\underline{a_1}(z)\puip+ \underline{a_0}(z)\,,
$$
so that we may write
$$
L = \underline{L} + z^{\val(a_0)+1}\overline{L} \, ,
$$
for some operator $\overline{L}$ over $\K[[z]]$. With this notation, \eqref{eq:condition2slopes} can be rewritten
\begin{equation}
        \label{eq:condition2slopes-bis}
\underline{L}_\lambda(1)  \in (\lambda - c)^{m_{c,2}}\K[\lambda,z] \,.
\end{equation}
Assume that \eqref{eq:Mahler_at_0} is regular singular at $0$ and let $c$ be attached to $\mu_2$. From Theorem~\ref{thm_RS_si_solutions_tronquees} there exists a truncated solution $f_{c,2}$ associated with $(c,2)$. The least common multiple of the
denominators of the slopes is in this case $d = p^k-1$, the denominator of the first slope. Using Condition \ref{C2}, we can write
$$
f_{c,2}(\lambda,z) = p_0(\lambda) + zp_{1}(\lambda)+ \underbrace{q_1(\lambda)z^{\frac{1}{p^k-1}} + \cdots + q_{p^k}(\lambda) z^{\frac{p^k}{p^k-1}}}_{=:q(\lambda,z)}  \, .
$$
Let $\gamma \in \mathbb Q$ have some denominator $\delta$ relatively prime with $p$. Then, the operator $\underline{L}_\lambda$ sends $z^{\gamma}$ to a linear combination over $\K[\lambda]$ of some $z^\mu$, with $\mu \in \mathbb Q$ some rational numbers whose denominators are also equal to $\delta$. Thus, the only terms which are integer powers of $z$ in the sum 
$$
\underline{L}_\lambda(f_{c,2}(\lambda,z)) = p_0(\lambda)\underline{L}_\lambda(1) + p_1(\lambda)\underline{L}_\lambda(z)+ \underline{L}_\lambda(q(\lambda,z))
$$
 are the one coming from $p_0(\lambda)\underline{L}_\lambda(1) + p_1(\lambda)\underline{L}_\lambda(z)$. Since $f_{c,2}$ satisfies \ref{C1} and since $-\mu_1>0$ it follows that
\begin{equation}\label{eq:val_underline_L}
\val \left(p_0(\lambda)\underline{L}_\lambda(1) + p_1(\lambda)\underline{L}_\lambda(z)\ \, \rm mod\,(\lambda -c)^{s_{c,2}+m_{c,2}}\right) > \val a_0\,.
\end{equation}
Let us prove that
\begin{equation}\label{eq:val_underline_L_0}
\val \left(p_0(\lambda)\underline{L}_\lambda(1)\ \, \rm mod\, (\lambda -c)^{s_{c,2}+m_{c,2}}\right) \geq \val a_0+1>\val a_0\,.
\end{equation}
It is immediate when $p_1(\lambda)=0$. When $p_1\neq 0$, then $-\mu_1=1$ and we must have $k=1$ and $\val a_0 = p-1$. Thus
$$\val(p_1(\lambda)\underline{L}_\lambda(z)) \geq p = \val a_0+1=\val a_0 - \mu_1\,.$$
Thus, \eqref{eq:val_underline_L_0} follows from \eqref{eq:val_underline_L}. 

Since the degree in $z$ of $\underline{L}_\lambda$ is at most $\val a_0$, \eqref{eq:val_underline_L_0} implies that 
\begin{equation}\label{eq:p0_null}
p_0(\lambda)\underline{L}_\lambda(1)  \mod (\lambda -c)^{s_{c,2}+m_{c,2}}=0\,.
\end{equation}
Thus
\begin{equation}
\label{eq:simplif}
    p_0(\lambda)\underline{L}_\lambda(1) = (\lambda-c)^{s_{c,2}+m_{c,2}}Q(\lambda,z)\quad \text{ for some } Q(\lambda,z)\in \K[\lambda,z]\, .
\end{equation}
However, $p_0(\lambda)=\cld(f_{c,2}(\lambda,z))$. Then, it follows form \ref{C3} and the definition of $r_{c,2}(\lambda)$ that ${\rm val}_{\lambda-c}p_0(\lambda)=s_{c,2}$. Dividing by $p_0(\lambda)$ in \eqref{eq:simplif}, we obtain \eqref{eq:condition2slopes-bis}.

\medskip
Let us now prove the converse implication. Assume that \eqref{eq:condition2slopes} holds. Assume first that $-\mu_1<1$. Let us check that $f_{c,2}(\lambda,z)=r_{c,2}(\lambda) \mod (\lambda-c)^{m_{c,2}+s_{c,2}}$ is a truncated solution associated with $(c,2)$ (though it does not depend on $z$):
\begin{itemize}
    \item It follows from \eqref{eq:condition2slopes} 
    and the fact that ${\rm val}_{\lambda - c}(f_{c,2}(\lambda,z)) \geq s_{c,2}$
    that \ref{C1} is satisfied;
    \item Conditions \ref{C2}, \ref{C3} and \ref{C5} are trivially satisfied;
    \item Condition \ref{C4} follows from the fact that $\mu_2=0$.
\end{itemize}
Assume now that $-\mu_1\geq 1$. Since $-\mu_1 \leq 1$ (see the explanation before Proposition \ref{prop:a_m_tronques}), then $-\mu_1=1$ and $\val a_0=p-1$. Write $r_{c,2}(\lambda)=\theta(\lambda)(\lambda - c)^{s_{c,2}}$, with $\theta(\lambda)\in \K[[\lambda-c]]$. By definition of the characteristic polynomial $\chi_1$, we may write $\chi_1(\lambda)=\kappa(\lambda)(\lambda-c)^{s_{c,2}}$ with $\kappa(\lambda) \in \K[\lambda-c]$ such that $\kappa(c)\neq 0$. We claim that
 $$f_{c,2}(\lambda,z) := r_{c,2}(\lambda) - z \theta(\lambda)\kappa(\lambda)^{-1} \sum\limits_{i=0}^m a_{i,p}\lambda^i  \mod (\lambda-c)^{m_{c,2}+s_{c,2}}$$ 
is a truncated solution associated with $(c,2)$. Let us check this.
\begin{itemize}
    \item Since $-\mu_1=1$, $\cld L_\lambda(z)=\chi_1(\lambda)$, by Lemma \ref{lem:chi}. Furthermore, $\val L_\lambda(z)=~p$. Thus, modulo $(\lambda-c)^{m_{c,2}+s_{c,2}}$ and modulo $z^{p+1}=z^{\val a_0-\mu_1+1}$, the following holds
    $$
    L_\lambda(f_{c,2}(\lambda,z))\equiv r_{c,2}(\lambda)\left(L_\lambda(1) - z^p\sum\limits_{i=0}^m a_{i,p}\lambda^i \right)\equiv r_{c,2}(\lambda)\underline{L}_\lambda(1) \equiv 0.
    $$
    Let us prove the first congruence. We have $$L_\lambda(f_{c,2}(\lambda,z)) = r_{c,2}(\lambda)L_{\lambda}(1) -  \big(\theta(\lambda)\kappa(\lambda)^{-1} \sum\limits_{i=0}^m a_{i,p}\lambda^i \mod (\lambda-c)^{m_{c,2}+s_{c,2}}\big) L_\lambda(z)\, .$$
    Moreover, $L_\lambda(z)$ is $\chi_1(\lambda)z^p$ modulo $z^{p+1}$, that is $\kappa(\lambda)(\lambda - c)^{s_{c,2}} z^p$ and we conclude using the definition of $\theta(\lambda)$: $r_{c,2}(\lambda)=\theta(\lambda)(\lambda - c)^{s_{c,2}}$.
   The last congruence above follows from the fact that ${\rm val}_{\lambda - c}(r_{c,2}(\lambda,z))\geq s_{c,2}$ and from \eqref{eq:condition2slopes}. Thus, \ref{C1} is satisfied.
    \item Conditions \ref{C2} to \ref{C5} follow immediately.   
\end{itemize}
Thus, in both situations, $f_{c,2}$ is a truncated solution associated with $(c,2)$. Since the same may be done for any $c$ attached to the slope $\mu_2=0$ and since, from Lemma \ref{lem:first_slope}, there exists a truncated solution associated with $(c,1)$ for any $c$ attached to the slope $\mu_1$, it follows from Theorem \ref{thm_RS_si_solutions_tronquees} that the equation is regular singular at $0$.
\end{proof}

\subsection{Proof of Theorem \ref{thm:p_varies}}
We are now ready to prove Theorem \ref{thm:p_varies}. Without any loss of generality, we may and will suppose that $\min \val a_i = 0$.

Suppose that the equation is regular singular at $0$ when $p$ is equal to some integer $p_1> \max_i \val a_i$. Then, it follows from Corollary \ref{coro:sensdirect} that its Newton polygon has either one slope or two slopes, the second one being null. We study separately both situations.

\subsubsection*{Case 1: The Newton polygon has only one slope}
We consider two subcases.

\subsubsection*{Subcase 1.1: $\val a_0 \geq \val a_m$}
Then $\val a_m=\min_j\val a_j = 0$ and, since the Newton polygon has only one slope, for every integer $i\in [1,m-1]$, 
$$
\frac{\val a_i - \val a_0}{p_1^i - 1} \geq \frac{- \val a_0}{p_1^m - 1}\,.
$$
This can be reformulated as
\begin{equation}\label{eq:val_ai}
\val a_i \geq \val a_0\left(1- \frac{p_1^i - 1}{p_1^m - 1} \right).
\end{equation}
Furthermore, since $p_1 > \val a_0$, we have
$$
\val a_0 \frac{p_1^{i}-1}{p_1^m-1}<\frac{p_1^{i+1}-p_1}{p_1^m-1}<1\, .
$$
Then, it follows from \eqref{eq:val_ai} that $\val a_i > \val a_0 - 1$. Since $\val a_i$ is an integer we eventually obtain $\val a_i \geq \val a_0$. Thus, for any $p\geq 2$, the Newton polygon has only one slope and it follows from Proposition \ref{prop:one_slope} that the equation is regular singular at $0$ for any $p \geq 2$. This proves Subcase 1.1.

\subsubsection*{Subcase 1.2: $\val a_0 < \val a_m$} Since the Newton polygon with the Mahler parameter $p_1$ has only one slope here, then $\val a_0=\min\val a_i = 0$ (see the beginning of this proof) and $\val a_i\geq 1$ for any $i \in [1,m]$. It follows that, for any $p \geq \val a_m$ and any $i \in [1,m]$,
$$
\frac{\val a_i - \val a_0}{p^i - 1} \geq \frac{1}{p^i-1} \geq \frac{1}{p^{m-1}-1} 
\geq \frac{p}{p^m-1} \geq   
\frac{\val a_m}{p^m - 1}=
\frac{\val a_m - \val a_0}{p^m - 1}.
$$
Hence, for any $p \geq \val a_m$, the line from $(1, \val a_0)$ to $(p^m, \val a_m)$ is an edge of the Newton polygon with parameter $p$. Thus, it has only one slope and it follows from Proposition~\ref{prop:one_slope} that the equation is regular singular at $0$. This proves Subcase~1.2.

\medskip

\subsubsection*{Case 2: The Newton polygon has two slopes, the second one being null}
Reasoning as in Subcase 1.1 for each one of the two slopes, we obtain that, for any $p\geq 2$ the Newton polygon has two slopes, the second one being null. Since the equation is regular singular at $0$ when $p$ is equal to $p_1$, it follows from Proposition~\ref{prop:a_m_tronques} that
   $$
    \sum_{i=0}^m \underline{a_i}(z)\lambda^i 
    \in (\lambda - c)^{m_{c,2}}\K[\lambda,z]
    $$
    for all exponent $c$ attached to the slope $\mu_2=0$. 
    However, the above expression do not depend on the integer $p$. Thus, it holds for any $p\geq 2$. Thus, it follows from Proposition \ref{prop:a_m_tronques} that the equation is regular at $0$ for any $p > \max_i \val a_i$. This ends the proof of Theorem \ref{thm:p_varies}.

\end{document}